\newtheorem{thm}{Theorem}[section]
\newtheorem{lem}[thm]{Lemma}
\newtheorem{cor}[thm]{Corollary}
\newtheorem{prop}[thm]{Proposition}
\theoremstyle{definition}
\newtheorem{eg}[thm]{Example}
\theoremstyle{remark}
\newtheorem{rmk}[thm]{Remark}
\numberwithin{equation}{section}
\newcommand{\thet}{\vartheta}
\newcommand{\DEF}{{:=}}
\newcommand{\FED}{{=:}}
\newcommand{\weakstarto}{\stackrel{*}\rightarrow}%\rightharpoonup}
\newcommand{\ii}{\mathrm{i}}
\newcommand{\Cset}{\mathbb{C}}
\newcommand{\Rset}{\mathbb{R}}
\newcommand{\Sp}{\mathbb{S}}
\newcommand{\Hplus}{\mathbb{H}^{+}}
\newcommand{\PT}[1]{\mathbf{#1}}
\newcommand{\re}{\mathop{\mathrm{Re}}}
\newcommand{\im}{\mathop{\mathrm{Im}}}
\DeclareMathOperator{\dd}{\mathrm{d}}
\DeclareMathOperator{\CAP}{cap}
\DeclareMathOperator{\gammafcn}{\Gamma}
\DeclareMathOperator{\EllipticK}{K}
\DeclareMathOperator{\LegendreP}{P}
\DeclareMathOperator{\POTW}{W}
\DeclareMathOperator{\supp}{supp}
\DeclareMathOperator{\zetafcn}{\zeta}
\DeclareMathOperator{\HyperF}{F}
\newcommand{\Hypergeom}[5]{{\sideset{_#1}{_#2}\HyperF\!\left(\substack{\displaystyle#3\\\displaystyle#4};#5\right)}}
\newcommand{\Pochhsymb}[2]{{\left(#1\right)_{#2}}}
\title[The support of the limit distribution \dots]{The support of the limit distribution of optimal Riesz energy points on sets of revolution in $\mathbb{R}^{3}$} 
\author{ J. S. Brauchart\textasteriskcentered, D. P. Hardin\textdagger, and E. B. Saff\textdaggerdbl } %\textasteriskcentered
\thanks{\noindent \textasteriskcentered  The research of this author was supported, in part,
by the U. S. National Science Foundation under grant DMS-0532154  (D. P. Hardin  and E. B. Saff principal investigators).\\
\textdagger The research of this author was supported, in part,
by the U. S. National Science Foundation under grants DMS-0505756 and DMS-0532154. \\
\textdaggerdbl The research of this author  was supported, in part,
by the U. S. National Science Foundation under grants DMS-0532154 and DMS-0603828. 
}
\begin{document}

\address{J. S. Brauchart, D. P. Hardin and E. B. Saff: 
Center for Constructive Approximation, 
Department of Mathematics, 
Vanderbilt University, 
Nashville, TN 37240, 
USA }
\email{Johann.Brauchart@Vanderbilt.Edu}
\email{Doug.Hardin@Vanderbilt.Edu}
\email{Edward.B.Saff@Vanderbilt.Edu}

\begin{abstract} 
Let $A$ be a compact set in the right-half plane and $\Gamma(A)$ the set in $\mathbb{R}^{3}$ obtained by rotating $A$ about the vertical axis. We investigate the support of the limit distribution of minimal energy point charges on $\Gamma(A)$ that interact according to the Riesz potential $1/r^{s}$, $0<s<1$, where $r$ is the Euclidean distance between points. Potential theory yields that this limit distribution coincides with the equilibrium measure on $\Gamma(A)$ which is supported on the outer boundary of $\Gamma(A)$. We show that there are sets of revolution $\Gamma(A)$ such that the support of the equilibrium measure on $\Gamma(A)$ is {\bf not} the complete outer boundary, in contrast to the Coulomb case $s=1$. However, the support of the limit distribution on the set of revolution $\Gamma(R+A)$ as $R$ goes to infinity, is the full outer boundary for certain sets $A$, in contrast to the logarithmic case ($s=0$).

% \vspace{0.5cm}
% 
% {\bf MSC 2000:} Primary 11K41, 70F10, 28A78; Secondary 78A30, 52A40.
% 
% \vspace{0.5cm}
% 
% {\bf Keywords:} 
\end{abstract}

\keywords{}
\subjclass[2000]{Primary 11K41, 70F10, 28A78; Secondary 78A30, 52A40}

\maketitle

%\nocite{abramowitz/stegun:70}

\section{Introduction}

The {\em discrete energy problem} for Riesz kernels $k_{s}(\PT{x})\DEF|\PT{x}|^{-s}$, $s>0$, on compact sets $K$ in $\Rset^3$ is concerned with finding $N$-point systems in $K$ in the most-stable equilibrium; that is, that minimize the $s$-energy
\begin{equation}\label{Esdef}
E_s(X_N):=\sum_{j\neq k} \frac{1}{|\PT{x}_{j}-\PT{x}_{k}|^{s}} = \sum_{k=1}^N\sum_{\begin{subarray}{c} j=1 \\ j\neq k\end{subarray}}^N \frac{1}{|\PT{x}_{j}-\PT{x}_{k}|^{s}}, \qquad s>0,
\end{equation}
among all $N$-point sets $X_N:= \{\PT{x}_1,\dots,\PT{x}_N\} \subset K$, where $|\cdot |$ denotes Euclidean distance. The existence of such configurations follows from both the lower semi-continuity of the Riesz kernel $k_{s}$, $s>0$, and the compactness of $K$. Even in the case that $K$ is the unit sphere in $\Rset^3$, explicit examples of such point sets are known only for a few values of $N$.  For approximate physical models of configurations of minimal energy points for large $N$ on the sphere as well as  toroidal surfaces, see \cite{bowick/etal:2002, bowick/etal:2004}.

The $N$-point system $X_N$  defines a discrete measure $\mu(X_N) \DEF \linebreak[1] (1/N) \sum_{\PT{x}\in X_N} \delta_{\PT{x}}$, by placing the charge $1/N$ at every point $\PT{x}\in X_N$. In this paper we investigate the support of the limit distribution (limit in the weak-star sense as $N\to\infty$) of a sequence of measures $\mu(X_{N}^{*})$, $N\geq2$, induced by minimal energy point configurations $X_{N}^{*}$ on sets of revolution $\Gamma(A)$ in $\Rset^{3}$ obtained by revolving a compact set $A$ in the right-half plane about the vertical axis.

If $0 < s < \dim \Gamma(A)$ (the Hausdorff dimension of $\Gamma(A)$), classical potential theory for the Riesz kernel $k_{s}$ (cf. \cite{landkof:72}) can be used to study this problem. In this case, the limit distribution (as $N\to\infty$) of optimal $N$-point configurations is given by the {\em equilibrium measure} $\mu_{s,\Gamma(A)}$ that uniquely minimizes the continuous energy
\begin{equation*}
\mathcal{I}_s[\mu]\DEF\iint k_{s}(\PT{x}-\PT{y}) \dd\mu(\PT{x}) \dd\mu(\PT{y})
\end{equation*}
over the class $\mathcal{M}(\Gamma(A))$ of (Radon) probability measures $\mu$ supported on $\Gamma(A)$. (For example, when $\Gamma(A)$ is the unit sphere $S^2$ in $\Rset^{3}$ the equilibrium measure is the normalized surface area measure on $S^2$.)

 The   probability measure $\mu_{s,\Gamma(A)}$  is characterized by the following {\em variational principle} \cite[Ch.~II]{landkof:72}: For $\Gamma(A)$ there exists a constant $V_{s}=V_{s}(\Gamma(A))$ such that
\begin{align}
U_{s}^{\mu_{s,\Gamma(A)}} &\geq V_{s} \qquad \text{``approximately everywhere'' on $\Gamma(A)$,} \label{int.equ1} \\
U_{s}^{\mu_{s,\Gamma(A)}} &\leq  V_{s} \qquad \text{ everywhere on the support of $\mu_{s,\Gamma(A)}$.} \label{int.equ2}
\end{align}
Here $U_{s}^{\mu_{s,\Gamma(A)}}$ denotes the {\em equilibrium potential} 
\begin{equation*}
U_{s}^{\mu_{s,\Gamma(A)}}(\PT{x}) \DEF \int k_{s}(\PT{x}-\PT{y}) \, \dd\mu_{s,\Gamma(A)}(\PT{y}), \qquad \PT{x}\in\Rset^{3}.
\end{equation*}
The constant $V_{s}$ is the infimum of the energies of (Radon) probability measures supported on $\Gamma(A)$, that is $V_s=\mathcal{I}_s[\mu_{s,\Gamma(A)}]$. The reciprocal of $V_{s}$ is called the {\em $s$-capacity} of the set $\Gamma(A)$, it is denoted by $\CAP_s\Gamma(A)$.  The term ``approximately everywhere'' means that the property holds everywhere with the possible exception of a set of 
$s$-capacity zero.  It  follows from  \eqref{int.equ1} and \eqref{int.equ2} that $U_{s}^{\mu_{s,\Gamma(A)}}  =  V_{s}$  approximately everywhere on the support of $\mu_{s,\Gamma(A)}$, which provides an integral equation for the equilibrium measure on its support. Knowing this support is therefore an important step in the determination of $\mu_{s,\Gamma(A)}$. 

We remark that Fabrikant et al. \cite{fabrikant/sankar/swamy:1984} provide a method for finding the density $\rho$ of a {\em signed} charge distribution for a prescribed $k_s$-potential distribution   on certain surfaces of revolution in $\Rset^{3}$.  However, their methods  do not apply, for example, to the torus and, more importantly, the distribution they obtain need not be nonnegative. For the analysis of charge distributions in the Coulomb case ($s=1$) on circular or ellipsoidal ``slender toroidal surfaces'', see Cade \cite{cade:1978} and Shail \cite{shail:1979}.

Several important properties of the Riesz equilibrium measure $\mu_{s,K}$ for a compact set $K$ of positive $s$-capacity are summarized in the previously cited book of Landkof. Adopting the same notation, we let $G_{\infty}$ denote the unbounded connected component of the complement of $K$. The boundary $S$ of $G_{\infty}$ is called the {\em outer boundary} of $K$. Furthermore, let $\check{K}$ be ``the set of all points of $K$ each neighborhood of which intersects $K$ in a set of positive $s$-capacity'' (\cite[Ch.~II, no.~13]{landkof:72}). In the case $1\leq s<\dim K$, the First Maximum Principle yields that $\supp\mu_{s,K}\supset\check{S}$. In particular, if $s=1$, then $\supp\mu_{s,K}=\check{S}$. For $s\leq1$, it follows from the superharmonicity of the kernel $k_{s}$   that the equilibrium measure is concentrated on the outer boundary $S$ of $K$. 
In \cite{hardin/saff/stahl:2007} Hardin, Saff, and Stahl proved a stronger result for the logarithmic case (limit as $s\to0^{+}$): For any compact set $A$ in the interior of the right half-plane $\Hplus$, the limit distribution of minimal energy point charges on $\Gamma(A)$ that interact through a logarithmic potential $\log(1/|\PT{x}-\PT{y}|)$ is supported on its ``outer-most'' portion only. The ``outer-most'' part of a torus, for example, is the set of revolution generated by rotating the right semi-circle about the vertical axis. Numerical experiments (cf. \cite{womersley:link} and Section \ref{sec:num.experiments}) suggest that the support of the $s$-equilibrium measure on a torus is, for sufficiently small positive $s$,  likewise a proper subset of the  torus.

In this paper we provide sufficient conditions     under which the support of the equilibrium measure $\mu_{s,\Gamma(A)}$  is a proper subset of the outer boundary of $\Gamma(A)$.    More specifically we show the following.

\begin{itemize}

\item Using rotational symmetry, we demonstrate how to reduce the problem of finding the support of the equilibrium measure $\mu_{s,\Gamma(A)}$ on $\Gamma(A)$ for the (singular) kernel $k_{s}(\PT{x})=1/|\PT{x}|^{s}$   to the  problem of finding the support of the equilibrium measure $\lambda_{s,A}$ on $A$ for a related kernel $\mathcal{K}_{s}$ which is continuous when $0<s<1$ and is singular when $s\ge 1$.
Lemma \ref{lem:K.s.properties} summarizes properties of the kernel $\mathcal{K}_{s}$. 
We further discuss the asymptotics of   optimal $\mathcal{K}_{s}$-energy point configurations on $A$ in both the continuous and singular cases.  \\

\item We show that there are infinite compact sets $A$ for which the support of the equilibrium measure on $\Gamma(A)$ is all of $\Gamma(A)$ for every $0<s<1$. For example, this holds for compact subsets $A$ of a horizontal or a vertical line-segment (see Corollary \ref{cor:horiz.vert}).\\

\item We construct sets of revolution $\Gamma(A)$ such that the support of the equilibrium measure on $\Gamma(A)$ is a proper subset of the outer boundary of $\Gamma(A)$, in contrast to the Coulomb case $s=1$. We demonstrate this for $0<s<1/3$. (This follows from Theorem \ref{3pTHM}.) An example is the outer boundary of the ``washer'' $\Gamma(A)$, where $A$ is the rectangle with lower left corner $1/2-\ii/2$ and upper right corner $1+\ii/2$ (cf. Example \ref{eg:circle}). {\em We conjecture that there exists for every $0<s<1$ a compact set $A$ for which $\supp\mu_{s,\Gamma(A)}$ is a proper subset of the outer boundary of $\Gamma(A)$.} \\

\item We show that for certain sets $A$ the support of the limit distribution on sets of revolution $\Gamma(R+A)$, for the translate $R+A=\{R+z\mid z\in A\}$, tends to the full outer boundary of $\Gamma(R+A)$ as $R\to\infty$. For example, this property holds if the outer boundary of $A$ is a compact subset of a circle with radius $r$ centered at $a>r$ and $0<s<1$ (cf. Lemma \ref{circle}).\\

\item We also show that the support of the equilibrium measure for the logarithmic case ($s=0$) can differ significantly from the case $s>0$. For example, let $A$ be a horizontal line-segment in $\Hplus$. Then we show that $\supp\mu_{s,\Gamma(A)}=\Gamma(A)$ for all $0<s<1$, while it is known that $\supp\mu_{0,\Gamma(A)}$ is the circle generated by the ``right-most'' point of $A$. 
(For further discussion, see end of this section.)
\end{itemize}

{\bf Outline of the paper.} 
In Section \ref{sec:reduction} we reduce the equilibrium problem to a minimal energy problem in the plane with respect to a new kernel $\mathcal{K}_s$ for which we find an explicit expression.

Section \ref{sec:nature} is devoted to the study of $\supp\lambda_{s,A}$ for the kernel $\mathcal{K}_{s}$. A convexity argument (Theorem \ref{thm:convexity}) yields that compact subsets $A$ of horizontal or vertical line-segments are examples with $\supp\lambda_{s,A}=A$ for every $0<s<1$ (Corollary \ref{cor:horiz.vert}). In contrast, we prove the existence of compact sets $A$ for which $\supp\lambda_{s,A}$ is not all of the outer boundary of $A$ by using the variational inequalities for $\mathcal{K}_{s}$. The essential result here is the $3$-point Theorem (Theorem \ref{3pTHM}) which provides a sufficient condition for a point on the outer boundary to not belong to the support of the equilibrium measure corresponding to $\mathcal{K}_s$.

In Section \ref{K.s:R.to.infty} we study the $\mathcal{K}_{s}$-equilibrium measure on sets obtained by translating a given set $A\subset\Hplus$ a distance $R$ units to the right. The asymptotic  expansion of $\mathcal{K}_{s}(R+z,R+w)$, $z,w\in A$, as $R$ becomes large, is given in Lemma \ref{asympt.expansion} and it is sensitive to the order of the limit processes $s\to0^{+}$ and $R\to\infty$. The relation between the energy problem for $\mathcal{K}_{s}$ on $A$ and the energy problem for $\mathcal{K}_{s}$ on the translate $R+A$ is discussed. 

In Section \ref{sec:K.s.infty} we study the kernel that arises as $R\to\infty$, namely
\begin{equation*}
\mathcal{K}_{s}^{(\infty)}(z,w) = - \frac{1}{1-s} \frac{\gammafcn((1+s)/2)}{\sqrt{\pi} \gammafcn(s/2)} \left|z-w\right|^{1-s}, \qquad 0<s<1.
\end{equation*}
We show that any compact subset $A$ of a line-segment $[z^{\prime},z^{\prime\prime}]\subset \Hplus$ has the property that $\supp\lambda_{s,A}^{\infty}=A$ for every $0<s<1$, where $\lambda_{s,A}^\infty$ defines the equilibrium measure for this kernel, and we find an explicit expression for the equilibrium measure $\lambda_{s,A}^{\infty}$ on $A=[z^{\prime},z^{\prime\prime}]$. In case that the outer boundary $S$ of $A$ is a subset of a circle $C$ we get $\supp\lambda_{s,A}^{\infty}=S$ for every $0<s<1$; see Lemma \ref{circle}. In particular, if $S=C$, the equilibrium measure on $A$ for the infinity kernel is simply the normalized arc-length measure on $C$. 

In Section \ref{sec:num.experiments} we discuss the discrete Riesz $s$-energy problem on $\Gamma(A)\subset \Rset^3$ as well as the discrete $\mathcal{K}$-energy problem  on $A\subset\Hplus$ for the kernel $\mathcal{K}=\mathcal{K}_{s}$, $\mathcal{K}=\mathcal{K}_{s}^{(R)}$, and $\mathcal{K}=\mathcal{K}_{s}^{(\infty)}$. We consider the potential theoretical case $0<s<\dim\Gamma(A)$ and the hypersingular case $s\geq\dim\Gamma(A)$. In the hypersingular case the discrete energy problem becomes a weighted energy problem which allows us to use results from \cite{borodachov/hardin/saff:weighted}. We find the limit distribution of minimal $\mathcal{K}$-energy $N$-point systems, consider the separation of such optimal point configurations, and give asymptotics for the discrete minimal energy as $N\to\infty$. Also included are numerical experiments showing minimal energy point configurations on Cassinian ovals, line-segments, and circles.  

An appendix to the paper provides the computations showing convexity of the kernel $\mathcal{K}_{s}$ on the vertical line-segment.

\section{Reduction to the plane, the kernel $\mathcal{K}_{s}$}
\label{sec:reduction}

First we fix some notation. The axis of revolution is identified with the $y$-axis in $\Rset^{3}$. Any vertical cutting plane gives a cross-section of the set of revolution and may serve as a reference plane. Selecting a vertical cutting plane we choose one of the two closed halfplanes and call it $\Hplus$. It may be identified with the complex right half-plane. Then the set of revolution generated by $A\subset\Hplus\DEF\{x+\ii y \mid x\geq0, y\in\Rset \}$ is the set
\begin{equation}
\Gamma(A) \DEF \left\{ \mathbf{R}_{\phi} \PT{x} \mid \PT{x}\in A, 0\leq\phi<2\pi \right\}, \end{equation}
where $\mathbf{R}_{\phi}$ is a rotation by angle $\phi$ about the axis of revolution. The set $\Gamma(A)$ is obtained by revolving $A$ around the vertical axis. Thus, a single point $x+\ii y\in\Hplus$, $x>0$, becomes a horizontal circle with center on the vertical axis.

A Borel measure $\hat{\mu}\in\mathcal{M}(\Rset^{3})$ is {\em rotationally symmetric about the $y$-axis} if 
\begin{equation}
\hat{\mu}(\mathbf{R}_{\phi} B) = \hat{\mu}(B)
\end{equation}
for all Borel sets $B\subset\Rset^{3}$ and for all rotations $\mathbf{R}_{\phi}$ about the $y$-axis. (Here $\mathbf{R}_{\phi} B$ denotes the pointwise rotated set $\{ \mathbf{R}_{\phi} \PT{x} \mid \PT{x}\in B\}$.) 

If $\hat{\mu}\in\mathcal{M}(\Rset^{3})$ is rotationally symmetric about the $y$-axis, then $\hat{\mu}$ can be written as a product of two measures, the normalized Lebesgue measure on the half-open interval $[0,2\pi)$ and a measure $\mu$ on $\Hplus$, that is
\begin{equation}
\dd\hat{\mu} = \frac{\dd \phi}{2\pi} \, \dd\mu, \qquad \mu = \hat{\mu} \circ \Gamma \in \mathcal{M}(\Hplus).
\end{equation}
Then the energy of the (compactly supported) measure $\hat{\mu}$ can be expressed as
\begin{equation}
\begin{split} \label{connecting.energies}
\mathcal{I}_{s}[\hat{\mu}] 
&= \iint_{\Rset^{3}\times\Rset^{3}} k_{s}(\PT{x}-\PT{y}) \, \dd\hat{\mu}(\PT{x}) \dd\hat{\mu}(\PT{y}) \\
&= \iint_{\Hplus\times\Hplus} \mathcal{K}_{s}(z,w) \, \dd\mu(z) \dd\mu(w) \FED \mathcal{J}_{s}[\mu],
\end{split}
\end{equation}
where the kernel $\mathcal{K}_{s}(z,w)$ is given by the integral 
\begin{equation}
\label{eq:kernel.K}
\mathcal{K}_{s}(z,w) \DEF \frac{1}{2\pi} \int_{0}^{2\pi} \frac{1}{\left|\PT{R}_{\phi} z - w\right|^{s}} \, \dd\phi.
\end{equation}

\subsection{The energy problem for $\mathcal{K}_{s}$} 

Let $0<s<1$. Let $A\subset\Hplus$ be a compact set such that $\CAP_{s}\Gamma(A)>0$. Then the uniqueness of the equilibrium measure $\mu_{s,\Gamma(A)}$ on $\Gamma(A)$ and the symmetry of the revolved set $\Gamma(A)$ imply that $\mu_{s,\Gamma(A)}$ is rotationally symmetric about the $y$-axis and so $\dd\mu_{s,\Gamma(A)}=[\dd\phi/(2\pi)]\dd\lambda_{s,A}$, where $\lambda_{s,A}=\mu_{s,\Gamma(A)}\circ \Gamma \in \mathcal{M}(\Hplus)$. Furthermore, if $\nu\in\mathcal{M}(\Hplus)$, then $\dd\hat{\nu} \DEF [\dd\phi/(2\pi)]\dd\nu$ is rotationally symmetric about the $y$-axis and so we have
\begin{equation*}
\mathcal{J}_{s}[\lambda_{s,A}] \geq \inf_{\nu\in\mathcal{M}(\Hplus)} \mathcal{J}_{s}[\nu] = \inf_{\nu\in\mathcal{M}(\Hplus)} \mathcal{I}_{s}[\hat{\nu}] \geq \mathcal{I}_{s}[\mu_{s,A}] = \mathcal{J}_{s}[\lambda_{s,A}].
\end{equation*}
In the case $0<s<1$ the equilibrium measure on $\Gamma(A)$ is concentrated on the outer boundary of $\Gamma(A)$ (cf. \cite[Ch.~II, no.~13]{landkof:72}).
\begin{prop}\label{prop.K.s}
Let $0<s<\dim\Gamma(A)$. Let $A\subset\Hplus$ be a compact set with $\CAP_{s} \Gamma(A) > 0$. Then $\lambda_{s,A}=\mu_{s,\Gamma(A)}\circ \Gamma$ uniquely minimizes $\mathcal{J}_{s}[\nu]$ over all measures $\nu\in\mathcal{M}(A)$. Thus, $\lambda_{s,A}$ is the {\em equilibrium measure} on $A$ for the kernel $\mathcal{K}_{s}$. It is supported on the outer boundary of $A$. 
\end{prop}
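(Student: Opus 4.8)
The plan is to transport the variational characterization of $\mu_{s,\Gamma(A)}$ on $\Gamma(A)$ down to $A$ by means of the energy identity \eqref{connecting.energies}, using the correspondence $\nu\leftrightarrow\hat\nu$ between measures on $A$ and rotationally symmetric measures on $\Gamma(A)$. First I would show that the (unique) equilibrium measure $\mu_{s,\Gamma(A)}$ is rotationally symmetric about the $y$-axis. For any rotation $\mathbf{R}_{\phi}$ the push-forward $\mu_{s,\Gamma(A)}\circ\mathbf{R}_{\phi}^{-1}$ is again a probability measure carried by $\Gamma(A)$ (because $\Gamma(A)$ is rotation invariant) and has the same $s$-energy (because $k_{s}$ is invariant under rigid motions). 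Uniqueness of the equilibrium measure then forces $\mu_{s,\Gamma(A)}\circ\mathbf{R}_{\phi}^{-1}=\mu_{s,\Gamma(A)}$ for every $\phi$, so $\mu_{s,\Gamma(A)}$ is rotationally symmetric and, by the factorization of rotationally symmetric measures recalled above, $\dd\mu_{s,\Gamma(A)}=[\dd\phi/(2\pi)]\,\dd\lambda_{s,A}$ with $\lambda_{s,A}=\mu_{s,\Gamma(A)}\circ\Gamma\in\mathcal{M}(A)$.

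For the minimization I would take an arbitrary competitor $\nu\in\mathcal{M}(A)$ and form its lift $\dd\hat\nu\DEF[\dd\phi/(2\pi)]\,\dd\nu\in\mathcal{M}(\Gamma(A))$, which is rotationally symmetric by construction. Applying \eqref{connecting.energies} to $\hat\nu$ and to $\mu_{s,\Gamma(A)}$ gives $\mathcal{J}_{s}[\nu]=\mathcal{I}_{s}[\hat\nu]$ and $\mathcal{J}_{s}[\lambda_{s,A}]=\mathcal{I}_{s}[\mu_{s,\Gamma(A)}]=V_{s}$. Since $\mu_{s,\Gamma(A)}$ minimizes $\mathcal{I}_{s}$ over all of $\mathcal{M}(\Gamma(A))$, I obtain $\mathcal{J}_{s}[\nu]=\mathcal{I}_{s}[\hat\nu]\ge V_{s}=\mathcal{J}_{s}[\lambda_{s,A}]$, which is exactly the inequality chain displayed before the statement and establishes that $\lambda_{s,A}$ minimizes $\mathcal{J}_{s}$ over $\mathcal{M}(A)$. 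For uniqueness I would use that $\nu\mapsto\hat\nu$ is an energy-preserving bijection onto the rotationally symmetric members of $\mathcal{M}(\Gamma(A))$: if $\nu$ is any minimizer then $\mathcal{I}_{s}[\hat\nu]=V_{s}$, so $\hat\nu$ is an equilibrium measure on $\Gamma(A)$ and must equal $\mu_{s,\Gamma(A)}$ by uniqueness; projecting via $\Gamma$ returns $\nu=\lambda_{s,A}$.

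It remains to locate the support. Here I would transport the known concentration of $\mu_{s,\Gamma(A)}$: for $s\le 1$ the superharmonicity of $k_{s}$ recalled in the Introduction places $\mu_{s,\Gamma(A)}$ on the outer boundary of $\Gamma(A)$, and since $\Gamma$ carries that outer boundary onto the outer boundary of $A$, the factor measure $\lambda_{s,A}=\mu_{s,\Gamma(A)}\circ\Gamma$ is supported on the outer boundary of $A$. I expect the genuinely delicate points to be not the variational core (which is just the displayed chain) but rather (i) making the lift $\nu\mapsto\hat\nu$ rigorous as an energy-preserving bijection, which presupposes that $\mathcal{K}_{s}$ is a legitimate lower semicontinuous kernel so that $\mathcal{J}_{s}$ inherits well-posedness and the uniqueness of minimizers from $\mathcal{I}_{s}$, and (ii) the precise identification of the outer boundary of $\Gamma(A)$ with $\Gamma$ of the outer boundary of $A$, which amounts to relating the unbounded component of $\Rset^{3}\setminus\Gamma(A)$ to that of $\Hplus\setminus A$.
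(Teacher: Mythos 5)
Your argument coincides with the paper's: the authors likewise deduce rotational symmetry of $\mu_{s,\Gamma(A)}$ from uniqueness and the rotation invariance of $\Gamma(A)$ and $k_s$, run the same inequality chain through the lift $\nu\mapsto\hat\nu$ and the identity \eqref{connecting.energies}, and cite Landkof for the concentration on the outer boundary. The two delicate points you flag at the end are left at the same level of detail in the paper itself, so nothing is missing relative to the source.
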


The $\mathcal{K}_{s}$-energy of a measure was defined in \eqref{connecting.energies}. The energy $V_{\mathcal{K}_{s}}$ of $A$ is given by
\begin{equation}
V_{\mathcal{K}_{s}}(A) \DEF \inf\left\{ \mathcal{J}_{s}[\nu] \mid \nu\in\mathcal{M}(A) \right\}.
\end{equation}
The following relations hold:
\begin{equation}
V_{\mathcal{K}_{s}}(A) = \mathcal{J}_{s}[\lambda_{s,A}] = \mathcal{I}_{s}[\mu_{s,\Gamma(A)}] = V_{s}(\Gamma(A)).
\end{equation}
For $\nu\in\mathcal{M}(A)$, we define the $\mathcal{K}_{s}$-potential $W_{s}^{\nu}$ by
\begin{equation}
W_{s}^{\nu}(z) \DEF \int_{A} \mathcal{K}_{s}(z,w) \dd\nu(w), \qquad z\in\Hplus.
\end{equation}
Let $\hat{\nu}\in\mathcal{M}(\Gamma(A))$ be rotationally symmetric with $\dd\hat{\nu}=[\dd\phi/(2\pi)]\dd\nu$, where $\nu=\hat{\nu}\circ\Gamma\in\mathcal{M}(A)$. Then the potential $U_{s}^{\hat{\nu}}$ is constant on circles $\Gamma(\{z\})$, $z\in\Hplus$. Abusing notation, there holds the following connecting formula
\begin{equation}
\begin{split}
U_{s}^{\hat{\nu}}(z) 
&= \int_{\Gamma(A)} k_{s}(z-\PT{y}) \dd\hat{\nu}(\PT{y}) = \frac{1}{2\pi} \int_{A} \int_{0}^{2\pi} k_{s}(z-\mathbf{R}_{\phi} w) \dd\phi \dd\nu(w) \\
&= \int_{A} \mathcal{K}_{s}(z,w) \dd\nu(w) = W_{s}^{\nu}(z), \qquad z\in\Hplus.
\end{split}
\end{equation}
From the properties \eqref{int.equ1} and \eqref{int.equ2} of the equilibrium potential $U_{s}^{\mu_{s,\Gamma(A)}}$ we infer the {\em variational inequalities for $\mathcal{K}_{s}$} for compact sets $A$ in the interior of $\Hplus$: 
\begin{align}
W_{s}^{\lambda_{s,A}} &\geq V_{\mathcal{K}_{s}}(A) \qquad \text{everywhere on $A$,} \label{variat.inequ.a} \\
W_{s}^{\lambda_{s,A}} &\leq V_{\mathcal{K}_{s}}(A) \qquad \text{on $\supp \lambda_{s,A}$.} \label{variat.inequ.b}
\end{align}
In this case we do no longer need an ``approximately everywhere'' exceptional set, since each point of $A$ generates a circle in $\Rset^{3}$ with positive capacity.

\subsection{Properties of the kernel $\mathcal{K}_{s}$}

Let $z\DEF x+\ii y$, $w\DEF u+\ii v$, where $x,y,u,v\in\Rset$. Let $w_{*} \DEF -\overline{w} = -u + \ii v$ denote the reflection of $w$ in the imaginary axis. 
\begin{lem} \label{lem:K.s.properties}
Let $s>0$. The kernel $\mathcal{K}_{s}:\Hplus\times\Hplus\to\Rset$ in \eqref{eq:kernel.K} has the following properties:
\begin{enumerate}
\item $\mathcal{K}_{s}(z,w)$ is  well defined for $z\neq w$ for all $s>0$.
\item $\mathcal{K}_{s}$ is symmetric: $\mathcal{K}_{s}(z,w) = \mathcal{K}_{s}(w,z)$.
\item $\mathcal{K}_{s}$ is homogeneous: $\mathcal{K}_{s}(r z,r w) = r^{-s}\mathcal{K}_{s}(z,w)$ for all $r>0$.
\item $\mathcal{K}_{s}$ is continuous at all points $(z,w)\in\Hplus\times\Hplus$ with $z\neq w$. If $0<s<1$, then $\mathcal{K}_{s}$ is continuous at $(w,w)$ with $\re[w]>0$. $\mathcal{K}_{s}(z,w)$ is singular at $z=w$ for $s\geq1$. 
\item If $w$ is on the imaginary axis and $s>0$, then $\mathcal{K}_{s}(z,w) = |z-w|^{-s}$, $z\neq w$. If $\re[w]>0$, then, for $s>1$, the following limit holds:
\begin{equation} \label{K.s.asympt}
\left|z-w\right|^{s-1}\mathcal{K}_{s}(z,w) \to \frac{\gammafcn((s-1)/2)}{\sqrt{\pi}\gammafcn(s/2)} \frac{1}{\left|w-w_*\right|}, \qquad \text{as $z\to w$.}
\end{equation}
\item $\mathcal{K}_{s}(u+\ii t,u+\ii v)$ decreases along vertical lines as $|t-v|$ grows and $\mathcal{K}_{s}(u+\ii y,u+t+\ii y)$ decreases along horizontal lines as $t>0$ grows\footnote{This follows from differentiating the integral \eqref{eq:kernel.sellip} with respect to $t$.}.
\item \label{Prop.5} Let $0<s<1$. For fixed $w$ with $\re[w]>0$, the function $\mathcal{K}_{s}(z,w)$ has exactly one global maximum at $z=w$ in $\Hplus$. At $(w,w)$ or $(w_{*},w)$, the kernel $\mathcal{K}_{s}$ takes the value
\begin{equation}
\mathcal{K}_{s}(w,w) = \mathcal{K}_{s}(w_{*},w) = \mathcal{I}_{s}\left(\Sp^{1};\frac{\dd\phi}{2\pi}\right) \left|\re w\right|^{-s}, \label{K.s:w.w}
\end{equation}
where
\begin{equation}
\mathcal{I}_{s}\left(\Sp^{1};\frac{\dd\phi}{2\pi}\right) = 2^{-s} \frac{\gammafcn((1-s)/2)}{\sqrt{\pi} \gammafcn(1-s/2)} = \frac{\gammafcn(1-s)}{\left[\gammafcn(1-s/2)\right]^{2}}. \label{I.s}
\end{equation}
\item The kernel $\mathcal{K}_{s}$ has the following representations in terms of hypergeometric functions \cite{andrews/askey/roy:1999} or in terms of a Legendre function \cite{abramowitz/stegun:70} 
\begin{align}
\mathcal{K}_{s}(z,w) 
&= \left|z-w_{*}\right|^{-s} \Hypergeom{2}{1}{s/2,1/2}{1}{1-\frac{\left|z-w\right|^{2}}{\left|z-w_{*}\right|^{2}}} \label{K.s.1st} \\
&= \left\{ \frac{2}{\left|z-w_{*}\right|+\left|z-w\right|} \right\}^{s} \Hypergeom{2}{1}{s/2,s/2}{1}{\left\{\frac{\left|z-w_{*}\right|-\left|z-w\right|}{\left|z-w_{*}\right|+\left|z-w\right|}\right\}^{2}} \label{K.s.2nd}\\
&= \left|z-w_{*}\right|^{-s/2} \left|z-w\right|^{-s/2} \LegendreP_{s/2-1}^{0}\left( \frac{1}{2} \, \frac{\left|z-w\right|}{\left|z-w_{*}\right|} + \frac{1}{2} \,  \frac{\left|z-w_{*}\right|}{\left|z-w\right|} \right). \label{LegendrePrepr}
\end{align}
(Observe that the Legendre function is evaluated at values $>1$ if $\re z>0$ or $\re w>0$.) For $s>1$ one can factor out the singularity at $z=w$,
\begin{equation}
\mathcal{K}_{s}(z,w) = \frac{\left|z-w\right|^{1-s}}{\left|z-w_*\right|} \Hypergeom{2}{1}{1-s/2,1/2}{1}{1-\frac{\left|z-w\right|^{2}}{\left|z-w_{*}\right|^{2}}}, \qquad z\neq w. \label{K.s.3rd}
\end{equation}
\item \label{Prop.7} As $s\to0^{+}$ we recover the logarithmic kernel $\mathcal{K}_{0}$ studied in \cite{hardin/saff/stahl:2007}:
\begin{equation} \label{eq:lim.s.to.0}
\lim_{s\to0^{+}} \frac{\mathcal{K}_{s}(z,w)-1}{s} = \log \frac{2}{\left|z-w\right|+\left|z-w_{*}\right|}.
\end{equation}
\item As $s\to1^{-}$, $\mathcal{K}_{s}(z,w)\to\mathcal{K}_{1}(z,w)$, where
\begin{equation}
\mathcal{K}_{1}(z,w) \DEF \frac{2}{\pi} \, \frac{2}{\left|z-w_{*}\right|+\left|z-w\right|} \EllipticK\left(\left\{\frac{\left|z-w_{*}\right|-\left|z-w\right|}{\left|z-w_{*}\right|+\left|z-w\right|}\right\}^{2}\right), \label{EllipticKrepr}
\end{equation}
and $\EllipticK$ denotes the {\em complete Elliptic integral of the first kind} \cite{abramowitz/stegun:70}. 
\end{enumerate}
\end{lem}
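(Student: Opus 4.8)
The plan is to reduce every item to a single ``elliptic'' integral representation and then read the properties off from it. Writing $z=x+\ii y$, $w=u+\ii v$ and placing the circle $\Gamma(\{z\})$ together with the point $w$ in $\Rset^{3}$, a direct expansion of $\left|\PT{R}_{\phi}z-w\right|^{2}$ shows that it depends on $\phi$ only through the two planar distances $\left|z-w\right|$ and $\left|z-w_{*}\right|$; explicitly
\[
\left|\PT{R}_{\phi}z-w\right|^{2}=\left|z-w\right|^{2}\cos^{2}(\phi/2)+\left|z-w_{*}\right|^{2}\sin^{2}(\phi/2).
\]
Substituting this into \eqref{eq:kernel.K} and halving the period (via $\phi=2\theta$ and symmetry) gives
\[
\mathcal{K}_{s}(z,w)=\frac{1}{\pi}\int_{0}^{\pi}\left[\left|z-w\right|^{2}\cos^{2}\theta+\left|z-w_{*}\right|^{2}\sin^{2}\theta\right]^{-s/2}\dd\theta ,
\]
which is the elliptic-integral representation \eqref{eq:kernel.sellip} invoked in the footnote to item~(6). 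From this one identity items (1)--(3) are immediate: for $z\neq w$ the bracket is bounded below by $\left|z-w\right|^{2}>0$, so the integral converges; the integrand is unchanged under $(z,w)\mapsto(w,z)$ because both $\left|z-w\right|$ and $\left|z-w_{*}\right|$ are symmetric in $z$ and $w$; and the rotation commutes with dilation, giving homogeneity. The first assertion of (5) is the case $w_{*}=w$ (i.e. $\re w=0$), where the bracket collapses to $\left|z-w\right|^{2}$. Monotonicity (6) follows by differentiating under the integral sign: along the indicated vertical and horizontal segments both $\left|z-w\right|$ and $\left|z-w_{*}\right|$ increase, so the bracket increases and the integrand, carrying a negative exponent, decreases.

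Next I would derive the special-function forms (8), since the remaining items flow most cleanly from them. Factoring $\left|z-w_{*}\right|^{2}$ out of the bracket and using the Euler-type evaluation $\tfrac{2}{\pi}\int_{0}^{\pi/2}(1-\xi\cos^{2}\theta)^{-s/2}\dd\theta={}_{2}F_{1}(s/2,1/2;1;\xi)$ (expand in powers of $\xi$ and integrate term by term, recognizing $\int_{0}^{\pi/2}\cos^{2n}\theta\,\dd\theta$) yields \eqref{K.s.1st} with $\xi=1-\left|z-w\right|^{2}/\left|z-w_{*}\right|^{2}$. Euler's transformation ${}_{2}F_{1}(a,b;c;\xi)=(1-\xi)^{c-a-b}{}_{2}F_{1}(c-a,c-b;c;\xi)$ converts \eqref{K.s.1st} into \eqref{K.s.3rd}, exhibiting the factor $\left|z-w\right|^{1-s}$, while a quadratic (Landen) transformation of the hypergeometric function gives the symmetric form \eqref{K.s.2nd}. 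For \eqref{LegendrePrepr} I would instead begin from the Laplace integral $\LegendreP_{\nu}(\cosh\alpha)=\tfrac{1}{\pi}\int_{0}^{\pi}(\cosh\alpha+\sinh\alpha\cos\theta)^{\nu}\dd\theta$ with $\cosh\alpha=\tfrac12(\left|z-w_{*}\right|/\left|z-w\right|+\left|z-w\right|/\left|z-w_{*}\right|)$: a half-angle computation identifies $\cosh\alpha+\sinh\alpha\cos\theta$ with the bracket above up to the factor $\left|z-w\right|\left|z-w_{*}\right|$, and the reflection $\LegendreP_{\nu}=\LegendreP_{-\nu-1}$ with $\nu=s/2-1$ supplies the exponent $-s/2$ matching $\mathcal{K}_{s}$.

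With (8) in hand the delicate item (7) becomes transparent. Write $p=\left|z-w_{*}\right|\ge q=\left|z-w\right|$ and $\delta=(p-q)/(p+q)$ for $z\in\Hplus$. Then \eqref{K.s.2nd} reads $\mathcal{K}_{s}(z,w)=(2/(p+q))^{s}\,G(\delta^{2})$ with $G={}_{2}F_{1}(s/2,s/2;1;\cdot)$ strictly increasing on $[0,1]$ and, because $0<s<1$, finite at $1$ with $G(1)=\gammafcn(1-s)/\gammafcn(1-s/2)^{2}$ by Gauss's theorem. For every $z\in\Hplus$ one has $p+q\ge\left|w-w_{*}\right|=2\re w$ and $\delta\le1$; at $z=w$ both bounds are attained simultaneously, whence $\mathcal{K}_{s}(z,w)\le(2/(2\re w))^{s}G(1)=\mathcal{K}_{s}(w,w)$, with equality forcing $\delta=1$, i.e. $q=0$, i.e. $z=w$. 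This is the unique global maximum, and the $p\leftrightarrow q$ symmetry of the bracket also gives $\mathcal{K}_{s}(w,w)=\mathcal{K}_{s}(w_{*},w)$ and the value \eqref{K.s:w.w}--\eqref{I.s} after the duplication formula for $\gammafcn$. Continuity (4) now follows: the right-hand sides of \eqref{K.s.1st}--\eqref{K.s.2nd} are continuous off the diagonal and continuous up to $z=w$ precisely when $G$ stays finite at $1$, that is for $0<s<1$; for $s\ge1$ the argument $1$ of $G$ is a pole, giving the singularity, whose rate \eqref{K.s.asympt} follows from \eqref{K.s.3rd} on letting $q\to0$ and evaluating ${}_{2}F_{1}(1-s/2,1/2;1;1)$ by Gauss's theorem.

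Finally the two one-sided limits. For (9) I would expand the integrand of the elliptic representation as $[\,\cdot\,]^{-s/2}=1-\tfrac{s}{2}\log[\,\cdot\,]+o(s)$, so that $(\mathcal{K}_{s}-1)/s\to-\tfrac{1}{2\pi}\int_{0}^{\pi}\log(\left|z-w\right|^{2}\cos^{2}\theta+\left|z-w_{*}\right|^{2}\sin^{2}\theta)\,\dd\theta$, and invoke the classical evaluation $\tfrac{1}{\pi}\int_{0}^{\pi}\log(a^{2}\cos^{2}\theta+b^{2}\sin^{2}\theta)\,\dd\theta=2\log\tfrac{a+b}{2}$ to reach \eqref{eq:lim.s.to.0}. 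For (10) I would let $s\to1^{-}$ in \eqref{K.s.2nd} and use ${}_{2}F_{1}(1/2,1/2;1;m)=\tfrac{2}{\pi}\EllipticK(m)$ to recover \eqref{EllipticKrepr}. The main obstacle throughout is bookkeeping with the hypergeometric identities, in particular choosing the correct quadratic transformation for \eqref{K.s.2nd} and matching the Legendre Laplace integral for \eqref{LegendrePrepr}; once those are pinned down, (7)---the only genuinely analytic assertion---falls out of the factored form \eqref{K.s.2nd} using nothing beyond the two elementary extremal bounds on $p\pm q$.
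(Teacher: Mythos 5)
Your proof is correct, and its backbone --- reduce the kernel to the elliptic-type integral in the two planar distances $q=|z-w|$, $p=|z-w_{*}|$, convert to hypergeometric form by the Euler/Wallis evaluation, and read the remaining items off the representations \eqref{K.s.1st}--\eqref{K.s.3rd} --- is the same as the paper's. The differences lie in a few sub-arguments, and they mostly work in your favor. First, for item (7) you actually \emph{prove} uniqueness of the global maximum, via the factorization $\mathcal{K}_{s}=(2/(p+q))^{s}G(\delta^{2})$ combined with the two extremal bounds $p+q\geq 2\re w$ and $\delta^{2}\leq 1$ and the strict monotonicity of $G={}\Hypergeom{2}{1}{s/2,s/2}{1}{\cdot}$; the paper's proof only computes the common value at $z=w$ and $z=w_{*}$ and never argues uniqueness, so your argument supplies a step the published proof leaves implicit. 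Second, for item (9) you evaluate the limit through the classical mean value of $\log(a^{2}\cos^{2}\theta+b^{2}\sin^{2}\theta)$, whereas the paper differentiates the squared-Pochhammer series \eqref{eq:K.formD} term by term; your route is cleaner, but the pointwise expansion $[\cdot]^{-s/2}=1-\frac{s}{2}\log[\cdot]+o(s)$ is not uniform in $\theta$ when $z=w$ (the bracket vanishes at the endpoints of integration), so a dominated-convergence remark --- or, as the paper does, a separate appeal to \eqref{K.s:w.w} on the diagonal --- is needed to cover that case. Third, you obtain \eqref{LegendrePrepr} from the Laplace integral for $\LegendreP_{\nu}$ together with the reflection $\LegendreP_{\nu}=\LegendreP_{-\nu-1}$, rather than from the hypergeometric-to-Legendre identity the paper cites; both are valid. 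The remaining items ((1)--(6), (8), (10)) match the paper's treatment, your monotone-integrand argument for (6) being marginally more elementary than the differentiation suggested in the paper's footnote.
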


\begin{rmk}
For the special case of the sphere the formula \eqref{K.s.2nd} reduces to the formula (4.14) in Dragnev and Saff \cite{dragnev/saff:2007}.
\end{rmk}

The level sets of $\mathcal{K}_{s}(\cdot,w)$, $w\in\Hplus$ fixed, look like Cassinian ovals, cf. Figure \ref{level.sets.s01div02}. The asymptotical behavior of $\mathcal{K}_{s}(R+z,R+w)$ as $R\to\infty$ is given in Lemma \ref{asympt.expansion}. 
\begin{figure}[h]
\begin{center}
\caption{\label{level.sets.s01div02}Level sets for $\mathcal{K}_{s}(z,1)$, $s=1/2$.}
\end{center}
\end{figure}

\begin{proof}[Proof of Lemma \ref{lem:K.s.properties}]
Let $z,w\in\Hplus$ with $z=x+\ii y$ and $w=u+\ii v$. The relation
\begin{equation*}
\left|\PT{R}_{\phi} z - w\right|^{2} = x^{2} + u^{2} - 2xu\cos\phi + \left(y - v\right)^{2}
\end{equation*}
gives $1/|\PT{R}_{\phi} z - w|^{s} = ( E - F \cos\phi )^{-s/2}$ for the integrand in \eqref{eq:kernel.K}, where we define 
\begin{equation}
E \DEF x^{2} + u^{2} + \left(y - v\right)^{2}, \qquad F \DEF 2xu. \label{def.E.F}
\end{equation}
By \eqref{def.E.F} the kernel $\mathcal{K}_{s}(z,w)$ is symmetric in $z,w$. The substitution $\phi=\psi+\pi$ yields
\begin{equation}
\mathcal{K}_{s}(z,w) = \frac{1}{2\pi} \int_{-\pi}^{\pi} \left( E + F \cos\psi \right)^{-s/2} \dd\psi = \frac{1}{\pi} \int_{0}^{\pi} \left( E + F \cos\phi \right)^{-s/2} \dd\phi. \label{eq:kernel.pre.ellip}
\end{equation}
Applying the half angle formula and substituting $\psi=\phi/2$ we obtain
\begin{equation}
\mathcal{K}_{s}(z,w) 
= \left(E+F\right)^{-s/2} \frac{2}{\pi} \int_{0}^{\pi/2} \left( 1 - \frac{2F}{E + F} \sin^{2}\psi\right)^{-s/2} \dd\psi. \label{eq:kernel.sellip}
\end{equation}
The integral in \eqref{eq:kernel.sellip} resembles that of a complete elliptic integral. Indeed, for $s=1$ this integral is the {\em complete Elliptic integral of the first kind} $\EllipticK(k^2)$ with elliptic modulus $k^2=2F/(E+F)$. (See for example \cite[17.2.19,17.3.1]{abramowitz/stegun:70}.) The transformation $\phi=\psi-\pi$ in \eqref{eq:kernel.pre.ellip} gives
\begin{equation}
\mathcal{K}_{s}(z,w) = E^{-s/2} \frac{1}{\pi} \int_{0}^{\pi} \left( 1 - \frac{F}{E} \cos\psi \right)^{-s/2} \dd\psi. \label{gen.Epstein}
\end{equation}
The integral in \eqref{gen.Epstein} is a generalization of Epstein and Hubbells elliptic integral. We refer to \cite{Srivastava/Siddiqi:1995} for a discussion of these elliptic-type integrals.

A change of variables $t=\sin^{2}\psi$ in \eqref{eq:kernel.sellip} yields
\begin{equation} \label{K.s:int.repr}
\mathcal{K}_{s}(z,w) 
= \left( E + F \right)^{-s/2} \frac{1}{\pi} \int_{0}^{1} t^{1/2-1} \left(1-t\right)^{1/2-1} \left( 1 - \frac{2F}{E+F} t \right)^{-s/2} \dd t. 
\end{equation}
Recall, that the {\em Gauss hypergeometric series} \cite[15.1.1]{abramowitz/stegun:70}
\begin{equation*}
\sum_{k=0}^{\infty}\frac{\Pochhsymb{a}{k}\Pochhsymb{b}{k}}{\Pochhsymb{c}{k}}\,\frac{z^{k}}{k!}=\frac{\gammafcn(c)}{\gammafcn(a)\gammafcn(b)}\sum_{k=0}^{\infty}\frac{\gammafcn(a+k)\gammafcn(b+k)}{\gammafcn(c+k)}\,\frac{z^{k}}{k!}
\end{equation*}
represents the {\em Gauss hypergeometric function} $\Hypergeom{2}{1}{a,b}{c}{z}$ for all complex $z$ within the circle of convergence, the unit circle $|z|=1$. The analytic continuation in the $z$-plane cut along the segment $[1,\infty]$, \cite[15.3.1]{abramowitz/stegun:70},
\begin{equation*} %\label{abr/ste:15.3.1}
\frac{\gammafcn(c)}{\gammafcn(b)\gammafcn(c-b)}\int_{0}^{1} t^{b-1}\left(1-t\right)^{c-b-1}\left(1-z\,t\right)^{-a} \dd{t},\quad \re c>\re b>0,
\end{equation*}
can be used to derive a {\em hypergeometric function representation} of the kernel $\mathcal{K}_{s}(z,w)$,
\begin{equation}
\mathcal{K}_{s}(z,w) = \left( E + F \right)^{-s/2} \Hypergeom{2}{1}{s/2,1/2}{1}{\frac{2F}{E+F}}.\label{eq:K.formB}
\end{equation}
Let $w_{*} \DEF -\overline{w} = -u + \ii v$ denote the reflection of $w$ in the imaginary axis. Then
\begin{equation}
\begin{split}
E-F &= \left(x-u\right)^{2} + \left(y - v\right)^{2} = \left|z-w\right|^{2}, \\
E+F &= \left(x+u\right)^{2} + \left(y - v\right)^{2} = \left|z-w_{*}\right|^{2}, \label{eq:E.pm.F}
\end{split}
\end{equation}
and we get the relations
\begin{align*}
0 \leq \frac{2F}{E+F} &= \frac{\left|z-w_{*}\right|^{2}-\left|z-w\right|^{2}}{\left|z-w_{*}\right|^{2}} = \frac{4xu}{\left(x+u\right)^{2}+\left(y-v\right)^{2}} \leq 1, \qquad z,w\in\Hplus.
\end{align*}
Substitution of \eqref{eq:E.pm.F} into \eqref{eq:K.formB} yields \eqref{K.s.1st}.

The hypergeometric function in \eqref{K.s.1st} is of the form $\Hypergeom{2}{1}{a,b}{2b}{\zeta}$. The quadratic transformation \cite[15.3.17]{abramowitz/stegun:70} yields a more symmetrical representation \eqref{K.s.2nd}.

In the argument of the hypergeometric function in \eqref{K.s.2nd} appears the expression
\begin{equation*}
\xi \DEF \frac{\left|z-w_{*}\right|-\left|z-w\right|}{\left|z-w_{*}\right|+\left|z-w\right|} = \frac{\left|z-w_{*}\right|^{2}-\left|z-w\right|^{2}}{\left(\left|z-w_{*}\right|+\left|z-w\right|\right)^{2}} = \frac{4 \re z \, \re w}{\left(\left|z-w_{*}\right|+\left|z-w\right|\right)^{2}}.
\end{equation*}
It satisfies $\xi^{2} \leq 1$ and equality holds for $z=w$ or $z=w_{*}$ only. Therefore we may use the series expansion of the hypergeometric function to get
\begin{equation}
\mathcal{K}_{s}(z,w) = \left( \frac{2}{\left|z-w_{*}\right|+\left|z-w\right|} \right)^{s} \sum_{\ell=0}^{\infty} \left( \frac{\Pochhsymb{s/2}{\ell}}{\ell!} \, \xi^{\ell}\right)^{2} , \qquad z\neq w,w_{*}. \label{eq:K.formD}
\end{equation}
If $0<s<1$, this series converges even for $z=w,w_{*}$. For $z=w,w_{*}$ the argument of the hypergeometric function in \eqref{K.s.1st}, \eqref{K.s.2nd} is $1$. From \cite[15.1.20]{abramowitz/stegun:70}
\begin{equation*}
\mathcal{K}_{s}(w,w) = \mathcal{K}_{s}(w_{*},w) = \frac{\gammafcn(1-s)}{\left[\gammafcn(1-s/2)\right]^{2}} \left|\re w\right|^{-s} = 2^{-s} \, \frac{\gammafcn((1-s)/2)}{\sqrt{\pi} \, \gammafcn(1-s/2)} \left|\re w\right|^{-s}.
\end{equation*}
(The first two relations follow from \eqref{K.s.2nd}, the last one from \eqref{K.s.1st}.) 
Note, the leading coefficient at the right-most is the energy $\mathcal{I}_{s}(\Sp^{1};\dd\phi/(2\pi))$, where $\Sp^{1}$ is the unit circle and $\dd\phi/(2\pi)$ the uniform measure on $\Sp^{1}$. This shows \eqref{K.s:w.w} and \eqref{I.s}.

Those hypergeometric functions that allow a quadratic transformation are connected with Legendre functions. From \eqref{K.s.1st} and relation \cite[15.4.7]{abramowitz/stegun:70} 
we get \eqref{LegendrePrepr}.

From \eqref{K.s.1st} and relation \cite[15.3.3]{abramowitz/stegun:70} we get \eqref{K.s.3rd}. From \eqref{K.s.3rd} follows \eqref{K.s.asympt}. If $\re[w]=0$, then $w=w_*$. Hence, by \eqref{K.s.1st}, $\mathcal{K}_s(z,w)=|z-w|^{-s}$, $z\neq w$, for $s>0$.

The {\em complete Elliptic integral of the first kind} $\EllipticK(k^2)$ \cite[17.3.1]{abramowitz/stegun:70} can be represented through a hypergeometric function \cite[17.3.9]{abramowitz/stegun:70},
\begin{equation*}
\EllipticK(k^{2}) = \int_{0}^{\pi/2} \frac{\dd\thet}{\sqrt{1-k^2 \left(\sin\thet\right)^{2}}} = \frac{\pi}{2} \Hypergeom{2}{1}{1/2,1/2}{1}{k^{2}}.
\end{equation*}
Thus \eqref{EllipticKrepr} follows from  \eqref{K.s.2nd}.

As $s\to0^{+}$, the hypergeometric series in \eqref{eq:K.formD} reduces to $1$. Thus it makes sense to consider the quotient $(\mathcal{K}_{s}(z,w)-1)/s$. Fix $z,w\in\Hplus$ in \eqref{eq:K.formD}. Let $z\neq w$. Then 
\begin{equation*}
\begin{split} 
\frac{\dd}{\dd s} \mathcal{K}_{s}(z,w) 
&= \mathcal{K}_{s}(z,w) \log \frac{2}{\left|z-w\right|+\left|z-w_{*}\right|} \\
&\phantom{=\pm}+ \left( \frac{2}{\left|z-w_{*}\right|+\left|z-w\right|} \right)^{s} \frac{\dd}{\dd s} \left[ \sum_{\ell=0}^{\infty} \left( \frac{\Pochhsymb{s/2}{\ell}}{\ell!} \xi^{\ell}\right)^{2} \right].
\end{split}
\end{equation*}
We are only interested in $\dd \mathcal{K}_{s}(z,w)/\dd s$ at $s=0^{+}$. $\mathcal{K}_{s}(z,w)$ becomes one at $s=0$. The derivative in the right-most term above exists and vanishes. This follows from
\begin{equation*}
\frac{1}{s} \left[ \sum_{\ell=0}^{\infty} \left( \frac{\Pochhsymb{s/2}{\ell}}{\ell!} \, \xi^{\ell}\right)^{2} - 1 \right] 
= \frac{1}{s} \left( \frac{s}{2} \xi \right)^{2} \sum_{\ell=0}^{\infty} \left( \frac{\Pochhsymb{1+s/2}{\ell}}{(\ell+1)!} \xi^{\ell} \right)^{2}
\end{equation*}
and the limit process $s\to0^{+}$. By the ratio test, the infinite series on the right-hand side above is absolutely convergent for $|\xi|<1$ (that is $z\neq w$) and $0<s<1$. In the case $z=w$ one uses \eqref{K.s:w.w} instead of \eqref{eq:K.formD}.
\end{proof}

\section{The support of the equilibrium measure for the kernel $\mathcal{K}_{s}$}
\label{sec:nature}

By Proposition \ref{prop.K.s}, the equilibrium measure $\lambda_{s,A}$ on $A$ for $\mathcal{K}_{s}$ is supported on the outer boundary $S$ of $A$. A convexity argument yields sufficient conditions for $\supp\lambda_{s,A}=S$. Recall that a function $f:[a,b]\to\Rset$ is {\em strictly convex on $[a,b]$} if $f( \tau x + ( 1 - \tau ) y) < \tau f(x) + ( 1 - \tau ) f(y)$ for all $a\leq x < y \leq b$ and $0 < \tau < 1$. 
\begin{thm}\label{thm:convexity}
Let $0<s<1$ and $A$ be a compact set in the interior of $\Hplus$. 
\begin{enumerate}
\item[(i)] If $\gamma:[a,b]\to\Hplus$, $a<b$, is a simple continuous {\bf non-closed} curve covering the outer boundary $S$ of $A$, that is $S \subset \gamma^{*} \DEF \left\{ \gamma(t) \mid a\leq t \leq b \right\}$, and $\mathcal{K}_{s}(\gamma(\cdot),\gamma(t))$ is a strictly convex function on the intervals $[a,t]$ and $[t,b]$ for each fixed $t\in[a,b]$, then there is some closed interval $I\subset[a,b]$ such that $\supp\lambda_{s,A}=\gamma(I)\cap S$.
\item[(ii)] If $\gamma:[0,b]\to\Hplus$ is a simple continuous {\bf closed} curve, that is $\gamma(0)=\gamma(b)$, with $S\subset\gamma^{*}$ and extended periodically by $\gamma(t)=\gamma(t+b)$, and $\mathcal{K}_{s}(\gamma(\cdot),\gamma(t))$ is a strictly convex function on the interval $[t,t+b]$ for each fixed $t\in[0,b]$, then $\supp\lambda_{s,A}=S$.
\end{enumerate}
\end{thm}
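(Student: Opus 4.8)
The plan is to transport the problem to the parameter interval and play the convexity hypothesis against the variational (in)equalities \eqref{variat.inequ.a}--\eqref{variat.inequ.b}. Since $\gamma$ is a simple continuous curve on a compact parameter set, it is a homeomorphism onto $\gamma^{*}$; let $\tilde{\lambda}$ be the pullback of $\lambda_{s,A}$ under $\gamma^{-1}$, a probability measure, and set $T\DEF\gamma^{-1}(\supp\lambda_{s,A})$, a compact subset of the parameter interval. Writing
\[
g(t)\DEF W_{s}^{\lambda_{s,A}}(\gamma(t))=\int \mathcal{K}_{s}(\gamma(t),\gamma(\tau))\,\dd\tilde{\lambda}(\tau),
\]
Lemma \ref{lem:K.s.properties}(4) (continuity for $0<s<1$) makes $g$ finite and continuous, and \eqref{variat.inequ.a}--\eqref{variat.inequ.b} translate into $g(t)\geq V_{\mathcal{K}_{s}}(A)$ whenever $\gamma(t)\in S$ (since $S\subset A$) and $g(t)=V_{\mathcal{K}_{s}}(A)$ for every $t\in T$. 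By Proposition \ref{prop.K.s} we also know $T\subset\gamma^{-1}(S)$.

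For part (i), I would prove $T=I\cap\gamma^{-1}(S)$ with $I\DEF[\min T,\max T]$; the inclusion $T\subset I\cap\gamma^{-1}(S)$ is immediate, so suppose some $t^{*}\in I\cap\gamma^{-1}(S)$ satisfies $t^{*}\notin T$. As $T$ is closed and $\min T\leq t^{*}\leq\max T$, the point $t^{*}$ lies in a maximal open gap $(t_{1},t_{2})$ with $t_{1},t_{2}\in T$ and $(t_{1},t_{2})\cap T=\emptyset$. The crucial observation is that every $\tau\in T$ satisfies $\tau\leq t_{1}$ or $\tau\geq t_{2}$, so $[t_{1},t_{2}]$ lies inside $[\tau,b]$ or inside $[a,\tau]$; in either case the hypothesis forces $\mathcal{K}_{s}(\gamma(\cdot),\gamma(\tau))$ to be strictly convex on $[t_{1},t_{2}]$. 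Integrating against the probability measure $\tilde{\lambda}$ preserves this strict convexity, so $g$ is strictly convex on $[t_{1},t_{2}]$. Then $g(t_{1})=g(t_{2})=V_{\mathcal{K}_{s}}(A)$ forces $g(t^{*})<V_{\mathcal{K}_{s}}(A)$, contradicting $g(t^{*})\geq V_{\mathcal{K}_{s}}(A)$. Hence $T=I\cap\gamma^{-1}(S)$, and injectivity of $\gamma$ together with $S\subset\gamma^{*}$ gives $\supp\lambda_{s,A}=\gamma(I)\cap S$.

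For part (ii), I would run the identical argument on the circle $\Rset/b\Zset$. Assuming $\supp\lambda_{s,A}\neq S$, pick $t^{*}\in\gamma^{-1}(S)\setminus T$ and let $(t_{1},t_{2})$ be the maximal circular gap containing it, with $t_{1},t_{2}\in T$ and $t_{2}-t_{1}<b$. For each $\tau\in T$, choosing the lift $\tau''\in[t_{2}-b,t_{1}]$ gives $[t_{1},t_{2}]\subset[\tau'',\tau''+b]$, so the periodic convexity hypothesis yields strict convexity of $\mathcal{K}_{s}(\gamma(\cdot),\gamma(\tau))$ on $[t_{1},t_{2}]$; the same contradiction follows. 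Thus $\gamma^{-1}(S)\subset T$, whence $T=\gamma^{-1}(S)$ and $\supp\lambda_{s,A}=S$.

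The main obstacle is the bookkeeping that guarantees, for \emph{every} $\tau$ in the support, that the ``kink'' of $\mathcal{K}_{s}(\gamma(\cdot),\gamma(\tau))$ at $t=\tau$ (where this function attains its maximum, cf.\ Lemma \ref{lem:K.s.properties}(\ref{Prop.5})) falls outside the open gap $(t_{1},t_{2})$. This is exactly what $(t_{1},t_{2})\cap T=\emptyset$ delivers, and it is what lets the two separately-convex pieces of each integrand combine into a genuinely strictly convex $g$ across the whole gap; in the closed case the only extra subtlety is choosing periodic representatives so that the single convexity interval $[\tau'',\tau''+b]$ covers $[t_{1},t_{2}]$.
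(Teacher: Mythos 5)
Your proof is correct and follows essentially the same route as the paper's: both arguments locate a gap in the support whose two endpoints lie in $\supp\lambda_{s,A}$ (hence carry the equilibrium value of the potential), integrate the convexity hypothesis against the equilibrium measure to get strict convexity of $W_{s}^{\lambda_{s,A}}\circ\gamma$ across that gap, and contradict the variational inequality $W_{s}^{\lambda_{s,A}}\geq V_{\mathcal{K}_{s}}(A)$ on $A$. Your write-up simply makes explicit the bookkeeping (pullback to the parameter interval, the choice $I=[\min T,\max T]$, and the choice of periodic lifts in the closed case) that the paper's terser proof leaves implicit.
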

\begin{rmk}
Note, that $S$ is only required to be a compact subset of $\gamma^{*}$. For example, $S$ may be a Cantor subset of $\gamma^{*}$. 
\end{rmk}

\begin{proof}[Proof of Theorem \ref{thm:convexity}]
%Suppose $A$ and $\gamma$ satisfy (a) and (b). 
Set $\lambda=\lambda_{s,A}$ and $\POTW^{\lambda}=\POTW_{\mathcal{K}_{s}}^{\lambda}$. We have $\supp \lambda\subset S\subset\gamma^{*}$. Suppose G is a component of the complement of $\supp \lambda$ in $\gamma^{*}$. Now observe, that by our assumptions, $G$ always corresponds to a subinterval $I$ of one of the sets $[a,t]$, $[t,b]$ or $[t,t+b]$ for $\gamma(t)\in\supp\lambda$. Two cases are possible: (i) Both boundary points of $G$ are in $\supp\lambda$. Then the equilibrium potential $\POTW^{\lambda}$ assumes the value $\mathcal{J}_{\mathcal{K}_{s}}[\lambda]$ on the boundary of $G$ and, due to strict convexity of $\POTW^{\lambda}\circ\gamma$ on $I$, is strictly less than this value in the open set $G$. Since $\POTW^{\lambda}\geq\mathcal{J}_{\mathcal{K}_{s}}[\lambda]$ on $A\supset S$, no point of $G$ is in $A$. (ii) At least one boundary point of $G$ is not in $\supp\lambda$. This can only happen when $\gamma$ is a {\bf non-closed} curve. Without further assumptions the convexity property alone is insufficient to show $G\cap A=\emptyset$. From (i) follows the existence of some closed interval $I\subset[a,b]$ such that $\supp\lambda=\gamma(I)\cap S$. If $\gamma$ is a closed curve, then $I=[0,b]$.
\end{proof}

\begin{rmk}
In the proof of Theorem \ref{thm:convexity} we use three main properties: (i) The kernel is continuous, (ii) $\supp \lambda_{s,A}\subset S$, and (iii) the equilibrium potential satisfies a variational principle. These properties also hold for $\mathcal{K}_{s}^{(\infty)}$ introduced in Section \ref{sec:K.s.infty}. Therefore, Theorem \ref{thm:convexity} can be applied in case of $\mathcal{K}_{s}^{(\infty)}$.
\end{rmk}

Using Theorem \ref{thm:convexity}(i) we next show that any compact subset $A$ of a horizontal or vertical line-segment satisfies $\supp\lambda_{s,A}=A$ for every $0<s<1$. We contrast this with the logarithmic case, where it is still true that $\supp\lambda_{0,A}=A$ in case of a vertical line-segment \cite[Cor.~1]{hardin/saff/stahl:2007}. However, in case of a horizontal line-segment one has that $\lambda_{0,A}$ is a unit point charge at the right-most point of $A$ \cite[Thm.~1]{hardin/saff/stahl:2007}.
\begin{cor} \label{cor:horiz.vert}
Suppose $A$ is a compact subset of either (a) the horizontal line-segment $[a+\ii c,b+\ii c]$, $0<a<b$, or (b) the vertical line-segment $[R+\ii c,R+\ii d]$, $R>0$, $c<d$. Then $\supp\lambda_{s,A}=A$ for every $0<s<1$.
\end{cor}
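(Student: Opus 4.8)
The plan is to apply Theorem \ref{thm:convexity}(i) with $\gamma$ the affine parametrization of the segment carrying $A$, and then to sharpen its conclusion from an \emph{a priori} subinterval to all of $A$ by exploiting the monotonicity of $\mathcal{K}_s$ recorded in Lemma \ref{lem:K.s.properties}\,(6). I begin by identifying the outer boundary. Since $A$ lies in a line-segment it has empty planar interior, and removing a subset of a line does not disconnect the plane, so the unbounded component of the complement is all of $\Rset^2\setminus A$ and its boundary $S$ equals $\partial A=A$. I then take $\gamma(t)=t+\ii c$ on $[a,b]$ in case (a) and $\gamma(t)=R+\ii t$ on $[c,d]$ in case (b); in either case $\gamma$ is a simple non-closed curve, $\gamma^{*}$ is the full segment, and $S=A\subset\gamma^{*}$, so $\gamma$ is an admissible covering curve.

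The crux is verifying the convexity hypothesis: for each fixed $\tau$, the map $t\mapsto\mathcal{K}_s(\gamma(t),\gamma(\tau))$ must be strictly convex on the two intervals on either side of $\tau$. On a horizontal line the imaginary parts cancel in \eqref{eq:E.pm.F}, so $\mathcal{K}_s$ depends only on the two abscissae; by homogeneity (Lemma \ref{lem:K.s.properties}\,(3)) the near branch $t\le\tau$ gives $\mathcal{K}_s(\gamma(t),\gamma(\tau))=\tau^{-s}\,\Hypergeom{2}{1}{s/2,s/2}{1}{(t/\tau)^{2}}$, a composition of the convex increasing hypergeometric function with the convex increasing map $t\mapsto(t/\tau)^2$, hence strictly convex; the far branch reduces via \eqref{K.s.2nd} to showing that $r\mapsto r^{-s}\,\Hypergeom{2}{1}{s/2,s/2}{1}{(\rho/r)^{2}}$ is strictly convex for $r\ge\rho>0$. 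On a vertical line the required strict convexity is the content of the Appendix. I expect this second-derivative estimate — the far branch on the horizontal line and the vertical-line computation — to be the \textbf{main obstacle}, since the decaying power and the hypergeometric factor pull in opposite directions and one must show the positive contribution dominates uniformly for $0<s<1$. Granting this, Theorem \ref{thm:convexity}(i) produces a closed interval $I$ with $\supp\lambda_{s,A}=\gamma(I)\cap S$.

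Finally I would upgrade $\gamma(I)\cap S$ to all of $A$ by placing the two extreme points of $A$ into the support. Consider case (a) and suppose the rightmost point $b'+\ii c$ of $A$ (with $b'=\max\{\re w:w\in A\}$, attained by compactness) were not in $\supp\lambda_{s,A}$; let $\beta<b'$ be the largest parameter with $\gamma(\beta)\in\supp\lambda_{s,A}$. Writing $g(t)\DEF W_s^{\lambda_{s,A}}(\gamma(t))$, every source point has parameter $\le\beta$, so Lemma \ref{lem:K.s.properties}\,(6) makes $t\mapsto\mathcal{K}_s(\gamma(t),\gamma(\tau))$ nonincreasing for $t>\tau$ and hence $g$ nonincreasing on $[\beta,b']$; thus $g(b')\le g(\beta)=V_{\mathcal{K}_s}(A)$, the equality holding by \eqref{variat.inequ.a} and \eqref{variat.inequ.b} at the support point $\gamma(\beta)$. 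On the other hand $b'+\ii c\in A$ forces $g(b')\ge V_{\mathcal{K}_s}(A)$ by \eqref{variat.inequ.a}, so $g(b')=g(\beta)=V_{\mathcal{K}_s}(A)$. A nonincreasing function with equal endpoint values is constant, contradicting the strict convexity of $g$ on $[\beta,b']$ established in the previous step. Hence $b'+\ii c\in\supp\lambda_{s,A}$, and the symmetric argument (with $g$ nondecreasing) places the leftmost point in the support; the vertical monotonicity in Lemma \ref{lem:K.s.properties}\,(6) handles case (b) identically. Consequently $I$ contains the parameters of both extreme points, so $\gamma(I)\supset\gamma^{*}\supset S$ and $\supp\lambda_{s,A}=S=A$.

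It is precisely this last step that separates $0<s<1$ from the logarithmic case: the decay of $\mathcal{K}_s$ away from the charge, combined with the variational inequality $W_s^{\lambda_{s,A}}\ge V_{\mathcal{K}_s}(A)$ on all of $A$, prevents the equilibrium mass from retreating off the tips of the segment, whereas for $s=0$ the mass may collapse to the right-most point of a horizontal segment.
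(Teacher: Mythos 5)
Your proof is correct and follows the paper's own argument essentially step for step: Theorem \ref{thm:convexity}(i) applied to the affine parametrization (with the Appendix supplying the vertical-line convexity), followed by the monotonicity-plus-variational-inequality argument that pushes the support out to the extreme points of $A$. The only step you leave open --- strict convexity of the far branch for $x>u$, which you flag as the main obstacle --- is in fact immediate from the same series expansion you invoke for the near branch: by \eqref{K.s.2nd} one has $\mathcal{K}_{s}(\gamma(x),\gamma(u))=\sum_{n\ge0}\bigl(\Pochhsymb{s/2}{n}\bigr)^{2}u^{2n}x^{-s-2n}/\bigl(\Pochhsymb{1}{n}\,n!\bigr)$ for $x>u$, each term $x^{-s-2n}$ is strictly convex in $x$, and termwise differentiation is justified by uniform convergence on $|x-u|\ge\delta$, so the decaying power and the hypergeometric factor do not actually compete.
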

\begin{proof}
For (a) consider the parametrization $\gamma(x)=x+\ii c$, $a\leq x\leq b$. From \eqref{K.s.2nd},
\begin{align*}
\mathcal{K}_{s}(\gamma(x),\gamma(u)) &= x^{-s} \Hypergeom{2}{1}{s/2,s/2}{1}{\frac{u^{2}}{x^{2}}} = \sum_{n=0}^{\infty} \frac{\Pochhsymb{s/2}{n}\Pochhsymb{s/2}{n}}{\Pochhsymb{1}{n} n!} u^{2n} x^{-s-2n} , \quad x > u, \\
\mathcal{K}_{s}(\gamma(x),\gamma(u)) &= u^{-s} \Hypergeom{2}{1}{s/2,s/2}{1}{\frac{x^{2}}{u^{2}}} = \sum_{n=0}^{\infty} \frac{\Pochhsymb{s/2}{n}\Pochhsymb{s/2}{n}}{\Pochhsymb{1}{n} n!} x^{2n} u^{-s-2n}, \quad x < u.
\end{align*}
From $(x^{-s-2n})^{\prime\prime}>0$, $n\geq0$, and $(x^{2n})^{\prime\prime}>0$, $n\geq1$, we get $[\mathcal{K}_{s}(\gamma(x),\gamma(u))]^{\prime\prime} > 0$ for $x\neq u$ and for every $0<s<1$. Termwise differentiation is justified by uniform convergence for $|x-u|\geq\delta$. By Theorem \ref{thm:convexity}, $\supp\lambda_{s,A}=\gamma(I)\cap A$ for some $I=[a^{\prime},b^{\prime}]\subset[a,b]$. From the series representations above we observe that the kernel $\mathcal{K}_{s}(\gamma(x),\gamma(u))$ is a strictly increasing function in $x$ for $x<u$ and it is a strictly decreasing function in $x$ for $x>u$. Hence, $\POTW_{\mathcal{K}_{s}}^{\lambda_{s,A}}\circ\gamma<\mathcal{I}_{s}[\lambda_{s,A}]$ on $[a,b]\setminus I$. By variational inequality \eqref{variat.inequ.a}, $I=[a,b]$.

For (b) consider the parametrization $\gamma(y)=R+\ii y$, $c\leq y\leq d$. From \eqref{K.s.1st}, 
\begin{equation}
\mathcal{K}_{s}(\gamma(y),\gamma(v)) = \left[ 4 R^{2} + \left( y - v \right)^{2} \right]^{-s/2} \Hypergeom{2}{1}{s/2,1/2}{1}{\frac{4 R^{2}}{4 R^{2} + \left( y - v \right)^{2}}}. \label{K.s.vertical-line}
\end{equation}
A direct calculation (assisted by Mathematica, see Appendix \ref{appdx} for more details) shows that $\dd^{2} [\mathcal{K}_{s}(\gamma(y),\gamma(v))]/\dd y^{2} > 0$ for every $0<s<1$. By Theorem \ref{thm:convexity}, $\supp\lambda_{s,A}=\gamma(I)\cap A$ for some $I=[c^{\prime},d^{\prime}]\subset[c,d]$. From the representation above we observe that the kernel $\mathcal{K}_{s}(\gamma(y),\gamma(v))$ is a strictly decreasing function in $y$ for growing $|y-v|$. Proceeding as in part (a) we get $I=[c,d]$.
\end{proof}

In contrast to the horizontal or the vertical line-segment we will show that there are compact sets $A$ in the interior of $\Hplus$ for which, in fact, the support of the equilibrium measure on $A$ for $\mathcal{K}_{s}$ is a proper subset of the outer boundary of $A$. 
\begin{eg} \label{eg:circle} Let $A$ be the rectangle with lower left corner $1/2-\ii/2$ and upper right corner $1+\ii/2$. Using Theorem \ref{3pTHM}(c) below with $x=1/2$ and $z^{\prime}=1+\ii/2$, it follows that $1/2\notin\supp\lambda_{s,A}$ for $0<s<1/3$. Alternatively, if $A$ is the left-half circle with radius $1/2$ centered at $1$, it again follows from Theorem \ref{3pTHM}(c) that $1/2\notin\supp\lambda_{s,A}$ for $0<s<1/3$. In contrast, as $A$ is moved to the right $R$ units and $R\to\infty$, we get $\supp\lambda_{s,A}^{\infty}=A$; see Lemma \ref{circle}.
\end{eg}

%\COMMENT{Since $\mathcal{K}_{s}(\rho z,\rho w)=\rho^{-s}\mathcal{K}_{s}(z, w)$, any scaling by $\rho>0$ gives new examples...}

To prove Theorem \ref{3pTHM} we use a special case of the following observation. 
\begin{lem} \label{lem:suff.cond}
Let $0<s<1$. Suppose $A$ is a compact set in the interior of $\Hplus$. Let $\lambda$ denote the unique equilibrium measure on $A$ for $\mathcal{K}_{s}$. If 
\begin{equation}
\label{sufficient.cond}
\mathcal{K}_{s}(z,\cdot) > \int_{B} \mathcal{K}_{s}(\cdot,w^{\prime}) \dd\nu(w^{\prime}) \qquad \text{everywhere on $\supp \lambda$}
\end{equation}
for some subset $B\subset A$ and some probability measure $\nu\in\mathcal{M}(B)$, then $z\notin\supp\lambda$.
\end{lem}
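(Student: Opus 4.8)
The plan is to turn the pointwise hypothesis into a single strict inequality for the equilibrium potential evaluated at $z$, and then read off the conclusion from the variational inequalities \eqref{variat.inequ.a} and \eqref{variat.inequ.b}. Concretely, I will show that the hypothesis forces $W_{s}^{\lambda}(z) > V_{\mathcal{K}_{s}}(A)$, whereas every point of $\supp\lambda$ satisfies $W_{s}^{\lambda} = V_{\mathcal{K}_{s}}(A)$; these two facts are incompatible, so $z\notin\supp\lambda$. I first record that combining \eqref{variat.inequ.a} and \eqref{variat.inequ.b} gives $W_{s}^{\lambda} = V_{\mathcal{K}_{s}}(A)$ on $\supp\lambda$ and $W_{s}^{\lambda}\ge V_{\mathcal{K}_{s}}(A)$ throughout $A$.

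The main step is to integrate the hypothesis against the probability measure $\lambda$, which is legitimate precisely because $\lambda$ is concentrated on $\supp\lambda$, the set on which the inequality $\mathcal{K}_{s}(z,\cdot) > W_{s}^{\nu}(\cdot)$ is assumed to hold. The left-hand side integrates to the potential $W_{s}^{\lambda}(z) = \int \mathcal{K}_{s}(z,w)\,\dd\lambda(w)$. For the right-hand side I apply Fubini to the double integral $\int_{\supp\lambda}\int_{B}\mathcal{K}_{s}(w,w')\,\dd\nu(w')\,\dd\lambda(w)$, rewriting it as $\int_{B} W_{s}^{\lambda}(w')\,\dd\nu(w')$; since $\nu$ is a probability measure on $B\subset A$ and $W_{s}^{\lambda}\ge V_{\mathcal{K}_{s}}(A)$ on $A$, this last quantity is at least $V_{\mathcal{K}_{s}}(A)$. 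Assembling the pieces yields the chain $W_{s}^{\lambda}(z) = \int \mathcal{K}_{s}(z,w)\,\dd\lambda(w) > \int W_{s}^{\nu}(w)\,\dd\lambda(w) = \int_{B} W_{s}^{\lambda}\,\dd\nu \ge V_{\mathcal{K}_{s}}(A)$, whence $W_{s}^{\lambda}(z) > V_{\mathcal{K}_{s}}(A)$. Because $W_{s}^{\lambda}$ equals $V_{\mathcal{K}_{s}}(A)$ at every point of $\supp\lambda$, the point $z$ cannot lie in $\supp\lambda$.

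The computation is short, so I do not expect a genuine obstacle; the two points needing care are both routine. The first is upgrading the strict \emph{pointwise} inequality to a strict inequality of \emph{integrals}: here I use that for $0<s<1$ the kernel $\mathcal{K}_{s}$ is continuous (Lemma \ref{lem:K.s.properties}(4), applicable since $A$ lies in the interior of $\Hplus$, so $\re w>0$ on $\supp\lambda$), hence both $\mathcal{K}_{s}(z,\cdot)$ and $W_{s}^{\nu}$ are continuous; their difference is then a strictly positive continuous function on the compact set $\supp\lambda$, so it is bounded below by a positive constant there, and $\lambda$ is a nontrivial (probability) measure. The second is the Fubini interchange, which is valid because $\mathcal{K}_{s}$ is bounded on the compact set $\supp\lambda\times B$ and both measures are finite. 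I note that the argument never uses $z\in A$: it establishes the strict bound $W_{s}^{\lambda}(z)>V_{\mathcal{K}_{s}}(A)$ unconditionally, so no proof by contradiction is actually required.
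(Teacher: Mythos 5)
Your proof is correct and is essentially the paper's own argument: integrate the hypothesis against $\lambda$, use Fubini and the variational inequality \eqref{variat.inequ.a} to conclude $W_{s}^{\lambda}(z)>\mathcal{J}_{\mathcal{K}_{s}}[\lambda]=V_{\mathcal{K}_{s}}(A)$, then invoke \eqref{variat.inequ.b}. The extra care you take over the strict inequality of integrals is fine but not even needed in the form you give it (a measurable function that is strictly positive on a set of full $\lambda$-measure already has strictly positive integral); the paper passes over this point silently.
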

\begin{proof} Using \eqref{sufficient.cond} and the variational inequality \eqref{variat.inequ.a}, we get
\begin{equation*}
\begin{split}
\POTW_{\mathcal{K}_{s}}^{\lambda}(z) &= \int \mathcal{K}_{s}(z,w) \dd\lambda(w) > \int \left[ \int_{B} \mathcal{K}_{s}(w,w^{\prime}) \dd\nu(w^{\prime}) \right] \dd\lambda(w) \\
&= \int_{B}  \POTW_{\mathcal{K}_{s}}^{\lambda}(w^{\prime}) \dd\nu(w^{\prime}) \geq \mathcal{J}_{\mathcal{K}_{s}}[\lambda] \int_{B} \dd\nu(w^{\prime}) = \mathcal{J}_{\mathcal{K}_{s}}[\lambda].
\end{split}
\end{equation*}
But $\POTW_{\mathcal{K}_{s}}^{\lambda}(z)>\mathcal{J}_{\mathcal{K}_{s}}[\lambda]$ implies, by the variational inequality \eqref{variat.inequ.b}, that $z\notin\supp\lambda$. 
\end{proof}

Let $z=x>0$ and set $B=\{z^{\prime},\overline{z^{\prime}}\}$, $z^{\prime}$ in the interior of $\Hplus$, $\im[z^{\prime}]\neq0$, and place the charge $1/2$ at each point in $B$. Then \eqref{sufficient.cond} is equivalent to the property
\begin{equation}
\mathcal{K}_{s}(z,\cdot) > \mathcal{K}_{s}^{*}(\cdot,z^{\prime}) \qquad \text{everywhere on $\supp \lambda$}, \label{super.rel}
\end{equation}
where $\mathcal{K}_{s}^{*}$ denotes the kernel 
\begin{equation}
\mathcal{K}_{s}^{*}(z,w) \DEF \left[ \mathcal{K}_{s}(z,w) + \mathcal{K}_{s}(z,\overline{w}) \right]/2. \label{K.sym}
\end{equation}

\begin{thm}[$3$-point Theorem] \label{3pTHM}
Let $0<s<1$. Let $x>0$ and $z^{\prime}$ be in the interior of $\Hplus$. Let $A$ be a compact subset of $\{w\in\Hplus | \mathcal{K}_{s}(x,w)\geq \mathcal{K}_{s}(x,z^{\prime})\}$ in the interior of $\Hplus$ with $x,z^{\prime},\overline{z^{\prime}}\in A$. 
\begin{enumerate}
\item[(a)] If $\Delta_{s}\DEF\mathcal{K}_{s}(x,z^{\prime}) - \mathcal{K}_{s}^{*}(z^{\prime},z^{\prime})>0$, then $x\notin\supp\lambda_{s,A}$.
\item[(b)] If $z^{\prime}=1+\ii \gamma$, $\gamma>0$, and condition
\begin{equation}
\label{3pCond}
4 \left( \gamma + \sqrt{1+\gamma^{2}} \right) > \left( \sqrt{\left(1+x\right)^{2}+\gamma^{2}} + \sqrt{\left(1-x\right)^{2}+\gamma^{2}} \right)^{2}
\end{equation}
is satisfied, then $\Delta_{s}>0$ (and hence, by (a), $x\notin\supp\lambda_{s,A}$) for $s>0$ sufficiently small.
\item[(c)] If $x=1/2$ and $z^{\prime}=1+\ii/2$, then $\Delta_{s}>0$ (and hence, by (a), $x\notin\supp\lambda_{s,A}$) for all $0<s<1/3$. (The graph of $\Delta_{s}$ is shown in Figure \ref{fig-0}.)
\end{enumerate}
\end{thm}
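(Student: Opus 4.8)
The three statements are nested: (a) is the substantive step, while (b) and (c) merely exhibit regimes in which the hypothesis $\Delta_s>0$ of (a) is met, so I would prove (a) first. The plan is to invoke the special case of Lemma~\ref{lem:suff.cond} set up immediately before the theorem, with $B=\{z',\overline{z'}\}$ and $\nu=\tfrac12(\delta_{z'}+\delta_{\overline{z'}})$, which reduces the claim $x\notin\supp\lambda_{s,A}$ to the pointwise inequality \eqref{super.rel}: $\mathcal{K}_{s}(x,w)>\mathcal{K}_{s}^{*}(w,z')$ for every $w\in\supp\lambda_{s,A}$. Since $\supp\lambda_{s,A}\subset A\subset\{w:\mathcal{K}_{s}(x,w)\ge\mathcal{K}_{s}(x,z')\}$, the left side obeys $\mathcal{K}_{s}(x,w)\ge\mathcal{K}_{s}(x,z')$ on the support, and the hypothesis $\Delta_s>0$ is exactly $\mathcal{K}_{s}(x,z')>\mathcal{K}_{s}^{*}(z',z')$. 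Hence everything collapses to a single \emph{maximum principle for the symmetrized kernel},
\begin{equation*}
\mathcal{K}_{s}^{*}(w,z')\le\mathcal{K}_{s}^{*}(z',z')\qquad\text{for all }w\in\Hplus,
\end{equation*}
after which $\mathcal{K}_{s}(x,w)\ge\mathcal{K}_{s}(x,z')>\mathcal{K}_{s}^{*}(z',z')\ge\mathcal{K}_{s}^{*}(w,z')$ gives \eqref{super.rel}.

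Proving this maximum principle is the main obstacle. I would first record the reflection symmetry $\mathcal{K}_{s}^{*}(\overline{w},z')=\mathcal{K}_{s}^{*}(w,z')$, which follows from $\mathcal{K}_{s}(\overline{w},z')=\mathcal{K}_{s}(w,\overline{z'})$ (immediate, since $\mathcal{K}_{s}$ depends only on the two real parts and on the modulus of the difference of the imaginary parts); this lets me assume $\im w\ge0$ and $\im z'\ge0$. Writing $w=u+\ii v$ and $z'=u'+\ii v'$, I would maximize in two stages. Fixing $u$, formula \eqref{K.s.1st} shows $\mathcal{K}_{s}(u+\ii v,u'+\ii v')$ is a decreasing function $g(|v-v'|)$ of $|v-v'|$ (extending property~(6) to unequal real parts); then $\mathcal{K}_{s}^{*}=\tfrac12[g(|v-v'|)+g(|v+v'|)]$, and using in addition convexity of $g$ (of the type computed in the Appendix, but now for unequal real parts) one sees this is maximized at $v=v'$. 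Fixing $v=v'$, it remains to maximize in $u$; the delicate point is that the second summand $\mathcal{K}_{s}(u+\ii v',\overline{z'})$ does \emph{not} peak at $u=u'$, so one must show that the \emph{sum} peaks at $u=u'$, which I expect to follow from the single global maximum of $\mathcal{K}_{s}(\cdot,\cdot)$ given by property~(7) together with the monotonicities of property~(6). Establishing convexity for unequal real parts and controlling this sum is where the bulk of the effort lies.

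For part (b) I would expand $\Delta_s$ to first order as $s\to0^{+}$ using property~(9), \eqref{eq:lim.s.to.0}, in the form $\mathcal{K}_{s}(z,w)=1+s\log\frac{2}{|z-w|+|z-w_{*}|}+o(s)$, where $w_{*}=-\overline{w}$. With $z'=1+\ii\gamma$ the relevant distances are $|x-z'|=\sqrt{(1-x)^{2}+\gamma^{2}}$, $|x-z'_{*}|=\sqrt{(1+x)^{2}+\gamma^{2}}$, $|z'-z'_{*}|=2$, $|z'-\overline{z'}|=2\gamma$, and $|z'-(\overline{z'})_{*}|=2\sqrt{1+\gamma^{2}}$; using also $\mathcal{K}_{s}(z',z')=1+O(s^{2})$ from \eqref{K.s:w.w} and the expansion of the Gamma-ratio, these give
\begin{equation*}
\Delta_s=s\left[\log\frac{2}{\sqrt{(1-x)^{2}+\gamma^{2}}+\sqrt{(1+x)^{2}+\gamma^{2}}}+\tfrac12\log\!\left(\gamma+\sqrt{1+\gamma^{2}}\right)\right]+o(s).
\end{equation*}
The bracket is positive precisely when $2\sqrt{\gamma+\sqrt{1+\gamma^{2}}}>\sqrt{(1-x)^{2}+\gamma^{2}}+\sqrt{(1+x)^{2}+\gamma^{2}}$, i.e.\ when \eqref{3pCond} holds (squaring the two positive sides); hence $\Delta_s>0$ for all sufficiently small $s$.

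For part (c), setting $x=\tfrac12$ and $z'=1+\tfrac{\ii}{2}$ makes the distances collapse so that both off-diagonal kernel evaluations carry the \emph{same} hypergeometric factor $F:=\Hypergeom{2}{1}{s/2,1/2}{1}{4/5}$. Using \eqref{K.s.1st} and \eqref{K.s:w.w} I would reduce $\Delta_s$ to the explicit one-variable function
\begin{equation*}
\Delta_s=5^{-s/2}\bigl(2^{s/2}-\tfrac12\bigr)\,F-\tfrac12\,\frac{\gammafcn(1-s)}{[\gammafcn(1-s/2)]^{2}},
\end{equation*}
which vanishes at $s=0$. Part (b) already gives $\Delta_s>0$ near $0$, so the remaining task is the uniform estimate $\Delta_s>0$ on all of $(0,1/3)$; I would obtain it by analyzing this explicit expression directly---bounding the hypergeometric factor and the Gamma-ratio, or checking the sign of $\Delta_s'$---in agreement with Figure~\ref{fig-0}.
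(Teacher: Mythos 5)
Your reduction of (a) to the ``maximum principle'' $\mathcal{K}_{s}^{*}(w,z^{\prime})\le\mathcal{K}_{s}^{*}(z^{\prime},z^{\prime})$ on $\Hplus$ is exactly the paper's chain of inequalities \eqref{kernel.inequ}, and your part (b) coincides with the paper's argument (expand $\Delta_{s}/s$ as $s\to0^{+}$ via \eqref{eq:lim.s.to.0} and note that positivity of the limit is precisely \eqref{3pCond}). The gap in (a) is that you never prove the maximum principle: your two-stage maximization is only a plan, and you yourself flag the second stage --- showing that $u\mapsto\mathcal{K}_{s}(u+\ii v^{\prime},z^{\prime})+\mathcal{K}_{s}(u+\ii v^{\prime},\overline{z^{\prime}})$ peaks at $u=u^{\prime}$ --- as unresolved. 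The paper dispatches this step in one line from Lemma \ref{lem:K.s.properties}(\ref{Prop.5}), the unique global maximum of $\mathcal{K}_{s}(\cdot,\zeta)$ at $\zeta$. (To be fair, that one-liner as literally stated only yields the weaker bound $\mathcal{K}_{s}^{*}(w,z^{\prime})\le\mathcal{K}_{s}(z^{\prime},z^{\prime})$, which is useless here because $\mathcal{K}_{s}(x,z^{\prime})\le\mathcal{K}_{s}(z^{\prime},z^{\prime})$ always; so your instinct that this step carries real content is sound --- but your proposal does not close it either, and as written part (a) is incomplete.)

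Part (c) has a second, independent gap. Your explicit formula $\Delta_{s}=5^{-s/2}\bigl(2^{s/2}-\tfrac12\bigr)F-\tfrac12\,\gammafcn(1-s)/[\gammafcn(1-s/2)]^{2}$ is correct (it is \eqref{Delta.s:beta.EQ.1} specialized to $x=\gamma=1/2$), but ``bounding the hypergeometric factor and the Gamma-ratio'' cannot work as stated: the margin is razor-thin ($\Delta_{1/3}\approx0.0011$ and $\Delta_{s}$ changes sign at $s_{1}\approx0.3411$), so any crude bound on $F$ or on the Gamma-ratio loses far more than the quantity you are trying to control, and the sign of $\Delta_{s}^{\prime}$ is not single-signed on $(0,1/3)$-adjacent ranges in any obviously exploitable way. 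The paper's route is structural: using the Euler integral representation \eqref{K.s:int.repr} it shows that $s\mapsto\Delta_{s}$ is strictly concave on $(0,1)$ (negativity of the second $s$-derivative of the integrand reduces to an elementary inequality between logarithms), and then concavity together with $\Delta_{0}=0$ and a single exact-arithmetic verification of $\Delta_{1/3}>0$ forces $\Delta_{s}>0$ on all of $(0,1/3)$. That concavity argument is the missing ingredient in your sketch.
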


\begin{proof}[Proof of Theorem \ref{3pTHM}] 
We show first (a). The function $\mathcal{K}_{s}(z,\cdot)$ has a unique maximum at $z$ in $\Hplus$ (Lemma \ref{lem:K.s.properties}(\ref{Prop.5})). So 
\begin{equation} \label{kernel.inequ}
\mathcal{K}_{s}(x,w) - \mathcal{K}_{s}^{*}(w,z^{\prime}) \geq \mathcal{K}_{s}(x,w) - \mathcal{K}_{s}^{*}(z^{\prime},z^{\prime}) \geq \mathcal{K}_{s}(x,z^{\prime}) - \mathcal{K}_{s}^{*}(z^{\prime},z^{\prime}).
\end{equation}
The first inequality holds in $\Hplus$. The last one holds on $\{w\in\Hplus | \mathcal{K}_{s}(x,w)\geq \mathcal{K}_{s}(x,z^{\prime})\}$. Now, let $A$ be a compact subset of $\{w\in\Hplus | \mathcal{K}_{s}(x,w)\geq \mathcal{K}_{s}(x,z^{\prime})\}$ in the interior of $\Hplus$ with $x,z^{\prime},\overline{z^{\prime}}\in A$. Then 
\begin{equation*}
W_{s}^{\nu}(x)\geq \left[ W_{s}^{\nu}(z^{\prime}) + W_{s}^{\nu}(\overline{z^{\prime}}) \right]/2 +  \mathcal{K}_{s}(x,z^{\prime}) - \mathcal{K}_{s}^{*}(z^{\prime},z^{\prime}), \qquad \nu\in\mathcal{M}(A).
\end{equation*}
This follows from \eqref{kernel.inequ} and $\mathcal{K}_{s}(w,z)=\mathcal{K}_{s}(z,w)$. If the difference $\Delta_{s}\DEF\mathcal{K}_{s}(x,z^{\prime}) - \mathcal{K}_{s}^{*}(z^{\prime},z^{\prime})$ is positive, the variational inequality \eqref{variat.inequ.a} implies $W_{s}^{\lambda_{s,A}}(x) > \mathcal{J}_{\mathcal{K}_{s}}[\lambda_{s,A}].$
Therefore, $x\notin\supp\lambda_{s,A}$, by variational inequality \eqref{variat.inequ.b}. This shows (a).

Set $z^{\prime}=\beta+\ii\gamma$ with $\beta,\gamma>0$. Since $\mathcal{K}_{s}(\rho z, \rho w) = \rho^{-s} \mathcal{K}_{s}(z, w)$, $\rho>0$, we may fix one of the variables $x$, $\beta$, or $\gamma$. Let $\beta=1$. From \eqref{K.s.1st}, \eqref{K.sym}, and Lemma \ref{lem:K.s.properties}(\ref{Prop.5}) we get
\begin{equation}
\begin{split}
\Delta_{s} &= \left[ \left( 1 + x \right)^{2} + \gamma^{2} \right]^{-s/2} \Hypergeom{2}{1}{s/2,1/2}{1}{\frac{4x}{\left( 1 + x \right)^{2} + \gamma^{2}}} \\
&\phantom{=\pm}- \frac{1}{2} 2^{-s} \left( 1 + \gamma^{2} \right)^{-s/2} \Hypergeom{2}{1}{s/2,1/2}{1}{\frac{1}{1 + \gamma^{2}}} - \frac{1}{2} 2^{-s} \frac{\gammafcn((1-s)/2)}{\sqrt{\pi} \gammafcn(1-s/2)}. \label{Delta.s:beta.EQ.1}
\end{split}
\end{equation}
We approximate $\Delta_{s}$ by its series expansion at $s=0$. From Lemma \ref{lem:K.s.properties}(\ref{Prop.7}) and \eqref{K.sym} %we get 
\begin{equation*}
\begin{split}
\lim_{s\to0^{+}} \frac{\Delta_{s}}{s} &= \mathcal{K}_{0}(x,1+\ii \gamma) - \mathcal{K}_{0}^{*}(1+\ii \gamma ,1+\ii \gamma) \\
&= \frac{1}{2} \log \frac{4 \left( \gamma + \sqrt{1 + \gamma^{2} } \right) }{\left( \sqrt{\left(1+x\right)^{2}+\gamma^{2}} + \sqrt{\left(1-x\right)^{2}+\gamma^{2}} \right)^{2}} > 0
\end{split}
\end{equation*} 
which implies (b). 

We show that $\Delta_{s}$ (as a function in $s$) is strictly concave on $(0,1)$ if $x=\gamma=1/2$. Using \eqref{K.sym} and integral representation \eqref{K.s:int.repr} we get
\begin{equation*}
\begin{split}
\Delta_{s} 
&\DEF \mathcal{K}_{s}(x,1+\ii\gamma) - \mathcal{K}_{s}^{*}(1+\ii\gamma,1+\ii\gamma) = \frac{1}{\pi} \int_{0}^{1} t^{-1/2} \left(1-t\right)^{-1/2} g(s,t) \dd t,
\end{split}
\end{equation*}
where
\begin{equation*}
g(s,t) \DEF \left[ \left( 1 + x \right)^{2} + \gamma^{2} - 4 x t \right]^{-s/2} - \frac{1}{2} \left[ 4 \left( 1 + \gamma^{2} \right) - 4 t \right]^{-s/2} - \frac{1}{2} \left[ 4 - 4 t \right]^{-s/2}.
\end{equation*}
Negativity of $(\partial/\partial s)^2 g(s,t)$ for all $0\leq t<1$ implies that $\Delta_{s}$ is strictly concave. Let $x=\gamma=1/2$. From $(\partial/\partial s)^2 r^{-s/2}=(1/4)F(s,r)$, $F(s,r) \DEF r^{-s/2}(\log r)^{2}$, we get
\begin{equation*}
\begin{split}
4\left(\frac{\partial}{\partial s}\right)^{2} g(s,t) = F(s,5/2-2t) - (1/2) F(s,5-4t) - (1/2) F(s,4-4t).
\end{split}
\end{equation*}
Negativity of the right-hand side above is equivalent with
\begin{equation*}
\begin{split}
2^{1+s/2} &< \left( \frac{5/2-2t}{5/2-2t} \right)^{s/2} \frac{\left[ \log\left( 5-4t \right) \right]^{2}}{\left[ \log\left( 5/2-2t \right) \right]^{2}} + \left( \frac{5/2-2t}{2-2t} \right)^{s/2} \frac{\left[ \log\left( 4-4t \right) \right]^{2}}{\left[ \log\left( 5/2-2t \right) \right]^{2}}.
\end{split}
\end{equation*}
For growing $s$ the left-hand side of the last relation is increasing while the right-hand side above is decreasing. Thus
\begin{equation} \label{aux.suff.cond}
2^{1+1/2} < \frac{\left[ \log\left( 5-4t \right) \right]^{2}}{\left[ \log\left( 5/2-2t \right) \right]^{2}} + \frac{\left[ \log\left( 4-4t \right) \right]^{2}}{\left[ \log\left( 5/2-2t \right) \right]^{2}} \FED h_{1}(t) + h_{2}(t)
\end{equation}
is a sufficient (but not necessary) condition for $(\partial/\partial s)^2 g(s,t)<0$ for all $0\leq t<1$. Elementary calculus shows that $h_{1}(t)>3$ on $[0,3/4)$ and $h_{2}(t)\geq4$ on $[3/4,1)$. Hence the right side of \eqref{aux.suff.cond} is $>3>2^{3/2}$ on $[0,1)$. Consequently, $\Delta_{s}$ is strictly concave for $0<s<1$ (cf. Figure \ref{fig-0}). 

Since $\Delta_{s}$ has a zero at $s=0$ with $\lim_{s\to0^{+}}\Delta_{s}/s=(1/2)\log(\sqrt{5}-1)>0$, $\lim_{s\to1^{-}}\Delta_{s}=-\infty$, and $\Delta_{s}$ is strictly concave on $(0,1)$, the difference $\Delta_{s}$ has exactly one other zero in the interval $[0,1)$ denoted by $s_{1}$. By our reasoning $\Delta_{s}>0$ if and only if $0<s<s_{1}$. A numerical solver gives $s_{1}\approx0.341107\dots$. 
Numerical computation shows that $\Delta_{1/3}\approx0.0011>0$. This can be rigorously justified by assistance of Mathematica and use of exact arithmetic. 
\end{proof}

By Theorem \ref{3pTHM}(a), the positivity of $\Delta_{s}$ implies $x\not\in\supp\lambda_{s,A}$. By \eqref{Delta.s:beta.EQ.1}, $\Delta_{s}$ depends on three parameters $x$, $\gamma$, and $s$. See Figure \ref{fig-0} for a plot of the level surface $\Delta_{s}=0$. This $0$-level surface is the boundary of the set of admissible configurations $(x,1/\gamma,s)$ using a three point scheme $z=x$, $z^{\prime}=1+\ii\gamma$, and $\overline{z^{\prime}}$. From Figure \ref{fig-0} we get numerical evidence that the maximum $s$ possible for a three point approach is about $0.38$. 
\begin{figure}[ht]
\begin{center}
\end{center}
\caption{$0$-level set of $\Delta_{s}$ in \eqref{Delta.s:beta.EQ.1} cut off at $1/\gamma=4$ and $\Delta_{s}$ for $x=1/2$, $\gamma=1/2$. \label{fig-0}}
\end{figure}

\section{Kernel $\mathcal{K}_{s}$ in the limit $R\to\infty$}
\label{K.s:R.to.infty}

We want to study the behavior of $\mathcal{K}_{s}(R+z,R+w)$ as $R$ becomes large.
\begin{lem} \label{asympt.expansion}
Let $0<s<2$, $s\neq1$, and $z,w\in\Hplus$. Then
\begin{equation}
\begin{split} \label{eq:eq:large.R.kernel.asympt.exp}
&\mathcal{K}_{s}(R+z,R+w) 
= \mathcal{I}_{s}\left(\Sp^{1};\frac{\dd\phi}{2\pi}\right) R^{-s} - \frac{s}{1-s} \frac{\gammafcn((1+s)/2)}{\sqrt{\pi} \gammafcn(1+s/2)} \frac{\left|z-w\right|^{1-s}}{2R} \\
&\phantom{=\pm}- 2^{-s} \frac{s\gammafcn((1-s)/2)}{\sqrt{\pi}\,\gammafcn(1-s/2)} \frac{\re\left[z-w_{*}\right]}{2R} R^{-s} + \mathcal{O}\left(\frac{s}{R^{2}}\right), \qquad R\to\infty,
\end{split}
\end{equation}
where $\mathcal{I}_{s}(\Sp^{1};\dd\phi/(2\pi))$ is given in \eqref{I.s}.
\end{lem}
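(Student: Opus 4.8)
The plan is to start from the closed form \eqref{eq:K.formB} evaluated at the translated points and reduce the problem to the behaviour of a Gauss hypergeometric function near its branch point $1$. Writing $z=x+\ii y$, $w=u+\ii v$ and replacing $z,w$ by $R+z,R+w$, the quantities in \eqref{def.E.F} become $E=(R+x)^{2}+(R+u)^{2}+(y-v)^{2}$ and $F=2(R+x)(R+u)$. The identities \eqref{eq:E.pm.F} then give the two facts that drive everything: $E-F=|z-w|^{2}$ is \emph{independent of $R$}, while $E+F=(2R+x+u)^{2}+(y-v)^{2}\FED P$. Hence \eqref{eq:K.formB} reads
\[
\mathcal{K}_{s}(R+z,R+w)=P^{-s/2}\,\Hypergeom{2}{1}{s/2,1/2}{1}{1-\delta},\qquad \delta\DEF\frac{|z-w|^{2}}{P},
\]
and since $P=4R^{2}(1+\mathcal{O}(1/R))$ the argument $1-\delta$ approaches $1$ with $\delta=\mathcal{O}(1/R^{2})$.

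The key step is to resolve the singular argument by the linear connection formula \cite[15.3.6]{abramowitz/stegun:70}. With $a=s/2$, $b=1/2$, $c=1$ one has $c-a-b=(1-s)/2$, which is nonzero precisely because $s\neq1$, so the formula splits the hypergeometric function into
\[
\frac{\gammafcn((1-s)/2)}{\sqrt{\pi}\,\gammafcn(1-s/2)}\,\Hypergeom{2}{1}{s/2,1/2}{(1+s)/2}{\delta}
+\frac{\gammafcn((s-1)/2)}{\sqrt{\pi}\,\gammafcn(s/2)}\,\delta^{(1-s)/2}\,\Hypergeom{2}{1}{1-s/2,1/2}{(3-s)/2}{\delta}.
\]
Both hypergeometric series are evaluated at $\delta=\mathcal{O}(1/R^{2})$, hence are $1+\mathcal{O}(1/R^{2})$. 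The decisive algebraic simplification is that multiplying the second branch by $P^{-s/2}$ cancels the fractional power of $R$, since $P^{-s/2}\delta^{(1-s)/2}=|z-w|^{1-s}P^{-1/2}$, a clean $\mathcal{O}(1/R)$ quantity. This is exactly what decouples the $R^{-s}$ scale (carried by the first branch through $P^{-s/2}$) from the $R^{-1}$ scale (carried by the second branch). I would then expand $P^{-s/2}=2^{-s}R^{-s}\bigl(1-\tfrac{s(x+u)}{2R}+\mathcal{O}(1/R^{2})\bigr)$ and $P^{-1/2}=\tfrac{1}{2R}\bigl(1+\mathcal{O}(1/R)\bigr)$ into powers of $1/R$.

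Collecting terms then matches \eqref{eq:eq:large.R.kernel.asympt.exp} term by term. The leading $R^{-s}$ term comes from the first branch: its coefficient $2^{-s}\gammafcn((1-s)/2)/(\sqrt{\pi}\,\gammafcn(1-s/2))$ is exactly $\mathcal{I}_{s}(\Sp^{1};\dd\phi/(2\pi))$ by \eqref{I.s}, and the $1/R$ correction of $P^{-s/2}$ supplies the $\re[z-w_{*}]=x+u$ term. The second branch supplies the $|z-w|^{1-s}/(2R)$ term; its coefficient $\gammafcn((s-1)/2)/(\sqrt{\pi}\,\gammafcn(s/2))$ I would rewrite as $-\tfrac{s}{1-s}\gammafcn((1+s)/2)/(\sqrt{\pi}\,\gammafcn(1+s/2))$ using the recurrences $\gammafcn((s-1)/2)=2\gammafcn((s+1)/2)/(s-1)$ and $\gammafcn(s/2)=2\gammafcn(1+s/2)/s$, which reproduces the stated coefficient and in particular exhibits its factor of $s$.

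The main obstacle is the remainder bound: the claim is not merely $\mathcal{O}(1/R^{2})$ but carries an explicit factor $s$, reflecting that $\mathcal{K}_{s}\to1$ and $\mathcal{K}_{s}-1=\mathcal{O}(s)$ as $s\to0^{+}$ (cf. Lemma \ref{lem:K.s.properties}(\ref{Prop.7})). To secure $\mathcal{O}(s/R^{2})$ I would verify that every discarded term carries a factor of $s$. In the first branch, all $\mathcal{O}(1/R^{2})$ contributions of the binomial expansion of $P^{-s/2}$ and of the series $\Hypergeom{2}{1}{s/2,1/2}{(1+s)/2}{\delta}$ acquire a factor $s$ from their coefficients, so after multiplication by $R^{-s}\le1$ they are $\mathcal{O}(s/R^{2})$; in the second branch the prefactor $\gammafcn((s-1)/2)/(\sqrt{\pi}\,\gammafcn(s/2))$ already carries $s$, whence its $\mathcal{O}(1/R^{2})$ corrections (from $P^{-1/2}$ and the remaining series) are automatically $\mathcal{O}(s/R^{2})$. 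One should also record that the implied constant depends on $z,w$ but is locally bounded, and note that the hypotheses $0<s<2$, $s\neq1$ are exactly what keep the two exponents $0$ and $(1-s)/2$ of the connection formula distinct and the nondegenerate form \cite[15.3.6]{abramowitz/stegun:70} applicable.
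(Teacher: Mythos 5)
Your proposal is correct and follows essentially the same route as the paper: both start from the hypergeometric representation of $\mathcal{K}_{s}$, apply the connection formula \cite[15.3.6]{abramowitz/stegun:70} to split off the $\delta^{(1-s)/2}$ branch (whose prefactor $\gammafcn((s-1)/2)/(\sqrt{\pi}\,\gammafcn(s/2))$ is the same constant the paper writes as $-\tfrac{2}{1-s}\gammafcn((1+s)/2)/(\sqrt{\pi}\,\gammafcn(s/2))$), expand $|2R+z-w_{*}|^{-s}$ in powers of $1/R$, and collect terms. Your explicit tracking of the factor $s$ in the remainder is a slightly more careful rendering of what the paper states tersely as ``the first one is of the form $1+\mathcal{O}(sR^{-2})$.''
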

\begin{rmk}
In the case $1<s<2$ the second term in \eqref{eq:eq:large.R.kernel.asympt.exp} becomes the dominant term. In the special case $s=1$ the following expansion can be shown:
\begin{equation}
\begin{split}
&2R \, \mathcal{K}_1(R+z,R+w) = \frac{6\log2}{\pi} + \frac{2}{\pi} \log R \\
&\phantom{equals}- \frac{2}{\pi} \log|z-w| \left[ 1- \frac{\re[z-w_*]}{2R} + \mathcal{O}(R^{-2}) \right]+ \mathcal{O}( \frac{\log R}{R} ), \qquad R\to\infty.
\end{split}
\end{equation}
\end{rmk}

\begin{proof}[Proof of Lemma \ref{asympt.expansion}]
Let $0<s<2$, $s\neq1$. Using \cite[15.3.6]{abramowitz/stegun:70}, we get a representation of \eqref{K.s.1st},
\begin{equation*}
\begin{split} %\label{eq:large.R.kernel}
&\mathcal{K}_{s}(R+z,R+w) \\
&\phantom{=}= \frac{\gammafcn((1-s)/2)}{\sqrt{\pi}\,\gammafcn(1-s/2)} \left|2R+z-w_{*}\right|^{-s} \Hypergeom{2}{1}{s/2,1/2}{(1+s)/2}{\frac{\left|z-w\right|^{2}}{\left|2R+z-w_{*}\right|^{2}}} \\
&\phantom{=\pm}- \frac{2}{1-s} \, \frac{\gammafcn((1+s)/2)}{\sqrt{\pi}\,\gammafcn(s/2)} \frac{\left|z-w\right|^{1-s}}{\left|2R+z-w_{*}\right|} \Hypergeom{2}{1}{1-s/2,1/2}{1+(1-s)/2}{\frac{\left|z-w\right|^{2}}{\left|2R+z-w_{*}\right|^{2}}},
\end{split}
\end{equation*}
with convergent series expansions of both hypergeometric functions. The first one is of the form $1+\mathcal{O}(s R^{-2})$, the second one is of the form $1+\mathcal{O}(R^{-2})$. 
Since
\begin{equation*}
\left|1+\frac{z-w_{*}}{2R}\right|^{-s} = 1 - \frac{s}{2} \, \frac{\re[z-w_{*}]}{R} + \mathcal{O}\left(\frac{s}{R^{2}}\right), \qquad R\to\infty,
\end{equation*}
we get
\begin{equation*}
\begin{split} 
&\mathcal{K}_{s}(R+z,R+w) \\
&\phantom{=}=  2^{-s} \frac{\gammafcn((1-s)/2)}{\sqrt{\pi}\,\gammafcn(1-s/2)} \, R^{-s} \left[ 1 - \frac{s}{2} \, \frac{\re[z-w_{*}]}{R} + \mathcal{O}\left(\frac{s}{R^{2}}\right)\right] \left[ 1 + \mathcal{O}\left(\frac{s}{R^{2}}\right) \right] \\
&\phantom{=\pm}- \frac{1}{1-s} \, \frac{\gammafcn((1+s)/2)}{\sqrt{\pi}\,\gammafcn(s/2)} \, \left|z-w\right|^{1-s} \, R^{-1} \left[ 1 + \mathcal{O}\left(\frac{1}{R}\right)\right] \left[ 1 + \mathcal{O}\left(\frac{1}{R^{2}}\right) \right].%, \quad R\to\infty.
\end{split}
\end{equation*}
We reorder the terms with respect to powers of $R$ and obtain \eqref{eq:eq:large.R.kernel.asympt.exp}.
\end{proof}

It is convenient to define the following kernels
\begin{align}
\mathcal{K}_{s}^{(R)}(z,w) &\DEF 2R \left[ \mathcal{K}_{s}(R+z,R+w) - \mathcal{I}_{s}(\Sp^{1};\dd\phi/(2\pi)) \, R^{-s} \right], \quad 0<s<1, \\
\mathcal{K}_{s}^{(R)}(z,w) &\DEF 2R \, \mathcal{K}_{s}(R+z,R+w), \quad s>1, \label{KRdef2}
\intertext{and}
\mathcal{K}_{s}^{(\infty)}(z,w) &\DEF - \frac{2}{1-s} \, \frac{\gammafcn((1+s)/2)}{\sqrt{\pi}\,\gammafcn(s/2)} \, \left|z-w\right|^{1-s} = \frac{\gammafcn((s-1)/2)}{\sqrt{\pi}\,\gammafcn(s/2)} \, \left|z-w\right|^{1-s}.
\end{align}
Then, by \eqref{eq:eq:large.R.kernel.asympt.exp},
\begin{equation}
\lim_{R\to\infty} \mathcal{K}_{s}^{(R)}(z,w) = \mathcal{K}_{s}^{(\infty)}(z,w), \qquad 0<s<1, \label{K.R.limit}
\end{equation}
and, from \eqref{K.s.3rd} and \cite[15.1.20]{abramowitz/stegun:70}, it follows 
\begin{equation}
\lim_{R\to\infty} \left|z-w\right|^{s-1}\mathcal{K}_{s}^{(R)}(z,w) = \left|z-w\right|^{s-1} \mathcal{K}_{s}^{(\infty)}(z,w), \qquad s>1, \label{K.R.limit.s.GT.1}
\end{equation}
where in both cases the convergence is uniform on compact subsets of $\Hplus\times\Hplus$. If $s<\dim\Gamma(A)$, we let $\mathcal{J}_{K_{s}^{(R)}}[\nu]$ and $\mathcal{J}_{K_{s}^{(\infty)}}[\nu]$ denote the associated energies of the compactly supported measure $\nu\in\mathcal{M}(\Hplus)$. From the definition of the kernel $\mathcal{K}_{s}^{(R)}$ we see that the equilibrium measure $\lambda_{s,A}^{R}$ on the compact set $A\subset\Hplus$ for the kernel $\mathcal{K}_{s}^{(R)}$ is equal to the equilibrium measure $\lambda_{s,R+A}$ on $R+A$ for the kernel $\mathcal{K}_{s}$ in the following sense: $\lambda_{s,A}^{R}(B)=\lambda_{A+R}(R+B)$ for a Borel set $B\subset\Hplus$.
\begin{rmk} \label{rmk:reorder.limits}
The asymptotics \eqref{eq:eq:large.R.kernel.asympt.exp} holds uniformly in $0\leq s\leq s_{0}<1$. So
\begin{equation*}
\lim_{s\to0^{+}} \mathcal{K}_{s}^{(R)}(z,w) / s = \mathcal{K}_{0}^{(\infty)}(z,w) + \mathcal{O}\left(1/R \right), \qquad R\to\infty.
\end{equation*}
The expression $\mathcal{K}_{0}^{(\infty)}(z,w)\DEF -\re[z-w_{*}] - |z-w|$ is the $\infty$-kernel for the logarithmic case introduced in  \cite{hardin/saff/stahl:2007}. However, reversing the order of limit processes, we get
\begin{equation*}
\lim_{R\to\infty} \mathcal{K}_{s}^{(R)}(z,w) / s = \mathcal{K}_{s}^{(\infty)}(z,w) / s.
\end{equation*}
Now, in the limit $s\to0^{+}$, the right-hand side above tends to $-|z-w|$.
\end{rmk}

\section{The energy problem for the kernel $\mathcal{K}_{s}^{(\infty)}$}
\label{sec:K.s.infty}

\subsection{The case $0<s<1$} 
The kernel
\begin{equation*}
\mathcal{K}_{s}^{(\infty)}(z,w) = - \frac{2}{1-s} \frac{\gammafcn((1+s)/2)}{\sqrt{\pi} \gammafcn(s/2)} \left|z-w\right|^{1-s}, \qquad 0<s<1,
\end{equation*}
falls into a class of kernels studied by Bj{\"o}rck \cite{bjoerck:1956}. From his results we infer that to every compact set $A\subset\Hplus$ and every $0<s<1$ there exists a unique equilibrium measure $\lambda_{s,A}^{\infty}$ supported on the outer boundary of $A$. (``Outer boundary'' is justified by the strict superharmonicity of the infinity kernel everywhere in $\Cset$.) Let $\POTW_{\mathcal{K}_{s}^{(\infty)}}^{\mu}$ denote the potential for a measure $\mu\in\mathcal{M}(A)$ and for the kernel $\mathcal{K}_{s}^{(\infty)}$:
\begin{equation*}
\POTW_{\mathcal{K}_{s}^{(\infty)}}^{\mu}(z) \DEF \int_{A} \mathcal{K}_{s}^{(\infty)}(z,w) \, \dd\mu(w), \qquad z\in\Hplus.
\end{equation*}
Then $\POTW_{\mathcal{K}_{s}^{(\infty)}}^{\mu}$ is continuous on $\Hplus$ and from results in \cite{bjoerck:1956} there follows that $\POTW_{\mathcal{K}_{s}^{(\infty)}}^{\lambda_{s,A}^{\infty}}\geq\mathcal{J}_{\mathcal{K}_{s}^{(\infty)}}[\lambda_{s,A}^{\infty}]$ on $A$ and equality holds on $\supp\lambda_{s,A}^{\infty}$. We note, that $\lambda_{s,A}^{R}$ converges weak-star to $\lambda_{s,A}^{\infty}$ as $R\to\infty$. This follows from the weak-star compactness of $\mathcal{M}(A)$, relation \eqref{K.R.limit}, and the uniqueness of the equilibrium measure $\lambda_{s,A}^{\infty}$.

Suppose the curve $\gamma:[a,b]\to\Hplus$ covers the outer boundary $S$ of $A$. Set $r_{w}=|\gamma(t)-w|$. Assuming $\gamma$ is twice differentiable at $t$ we have 
\begin{equation}
\frac{\dd^{2}}{\dd t^{2}} \mathcal{K}_{s}^{(\infty)}(\gamma(t),w) = 2 \frac{\gammafcn((1+s)/2)}{\sqrt{\pi} \gammafcn(s/2)} \left[ s \left( r_{w}^{\prime} \right)^{2} - r_{w} r_{w}^{\prime\prime} \right] r_{w}^{-s-1} . \label{conv.cond}
\end{equation}
Then for fixed $w$, we have that $\mathcal{K}_{s}^{(\infty)}(\gamma(t),w)$ is strictly convex on any interval where $s ( r_{w}^{\prime} )^{2} > r_{w} r_{w}^{\prime\prime}$. A sufficient condition would be $r_{w}^{\prime\prime}<0$. 

In the following we give examples of compact sets $A\subset\Hplus$ such that the support of the equilibrium measure on $A$ is given by the outer  boundary of $A$.

\begin{lem} \label{line-segment}
Let $A$ be a compact subset of the line-segment $[z^{\prime},z^{\prime\prime}]$ in the interior of $\Hplus$, $z^{\prime\prime}-z^{\prime}=2r e^{\ii\phi}$, $r>0$, $0\leq\phi<\pi$. Then $\supp\lambda_{s,A}^{\infty}=A$ for all $0<s<1$. In particular, if $A=[z^{\prime},z^{\prime\prime}]$, then
\begin{equation} \label{lambda.line-segment}
\dd \lambda_{s,A}^{\infty}(w) = \frac{\gammafcn((1+s)/2)}{\sqrt{\pi}\gammafcn(s/2)} r^{1-s} \left( r^{2} - T^{2} \right)^{s/2-1} \dd T,
\end{equation}
where $w=(z^{\prime}+z^{\prime\prime})/2 + T  e^{\ii\phi}$, $|T|\leq r$. 
\end{lem}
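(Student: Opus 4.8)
The plan is to exploit that $\mathcal{K}_{s}^{(\infty)}$ depends only on the Euclidean distance $\left|z-w\right|$, so the entire energy problem is invariant under rigid motions of $\Cset$. After a rotation and translation I may therefore assume that the segment is $[-r,r]$ on the real axis; writing $w=(z^{\prime}+z^{\prime\prime})/2+Te^{\ii\phi}$ identifies a point with its parameter $T$, and for two such points $\left|z-w\right|=\left|T_{1}-T_{2}\right|$. Thus the kernel restricted to the segment is $c\left|T_{1}-T_{2}\right|^{1-s}$ with $c\DEF -\tfrac{2}{1-s}\,\tfrac{\gammafcn((1+s)/2)}{\sqrt{\pi}\gammafcn(s/2)}<0$. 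From the Björck framework recalled before the lemma, a unique equilibrium measure $\lambda_{s,A}^{\infty}$ exists, is supported on the outer boundary, and satisfies $\POTW_{\mathcal{K}_{s}^{(\infty)}}^{\lambda_{s,A}^{\infty}}\ge\mathcal{J}_{\mathcal{K}_{s}^{(\infty)}}[\lambda_{s,A}^{\infty}]$ on $A$ with equality on $\supp\lambda_{s,A}^{\infty}$. Since $A$ lies on a line it has empty planar interior, so $\Cset\setminus A$ is connected and the outer boundary of $A$ equals $A$ itself.

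For the support statement with arbitrary compact $A$, I invoke Theorem \ref{thm:convexity}, which (by the remark following it) applies to $\mathcal{K}_{s}^{(\infty)}$. Parametrizing the straight segment by $\gamma$, one has $r_{w}^{\prime}=\pm1$ and $r_{w}^{\prime\prime}=0$ on each side of the corner, so the bracket in \eqref{conv.cond} equals $s>0$ and hence $\tfrac{\dd^{2}}{\dd t^{2}}\mathcal{K}_{s}^{(\infty)}(\gamma(t),w)>0$; that is, $\mathcal{K}_{s}^{(\infty)}(\gamma(\cdot),\gamma(t))$ is strictly convex on $[a,t]$ and $[t,b]$. Theorem \ref{thm:convexity}(i) then yields $\supp\lambda_{s,A}^{\infty}=\gamma(I)\cap A$ for some closed interval $I$. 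Because $c<0$ and $1-s>0$, the kernel $c\left|T_{1}-T_{2}\right|^{1-s}$ strictly decreases as $\left|T_{1}-T_{2}\right|$ grows; exactly as in the proof of Corollary \ref{cor:horiz.vert}, the potential is then strictly below $\mathcal{J}_{\mathcal{K}_{s}^{(\infty)}}[\lambda_{s,A}^{\infty}]$ at every parameter outside $I$, so the variational inequality forces $A\subset\gamma(I)$ and hence $\supp\lambda_{s,A}^{\infty}=A$.

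For $A=[z^{\prime},z^{\prime\prime}]$ the support is the whole segment, so the variational relations collapse to $\POTW_{\mathcal{K}_{s}^{(\infty)}}^{\lambda_{s,A}^{\infty}}\equiv V$ there, and $\lambda_{s,A}^{\infty}$ is the unique probability measure on $[-r,r]$ with constant potential. I then verify that the candidate $\dd\lambda(T)=\tfrac{\gammafcn((1+s)/2)}{\sqrt{\pi}\gammafcn(s/2)}r^{1-s}(r^{2}-T^{2})^{s/2-1}\dd T$ has these two properties. Total mass $1$ follows from the beta integral $\int_{-r}^{r}(r^{2}-T^{2})^{s/2-1}\dd T=r^{s-1}\sqrt{\pi}\,\gammafcn(s/2)/\gammafcn((1+s)/2)$. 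Constancy of the potential is the classical Riesz-equilibrium identity, namely that $\int_{-1}^{1}\left|x-T\right|^{-\sigma}(1-T^{2})^{(\sigma-1)/2}\dd T$ is independent of $x\in(-1,1)$; this is standard for $\sigma\in(0,1)$ and extends by analytic continuation in $\sigma$ to our exponent $\sigma=s-1\in(-1,0)$, where both $\left|x-T\right|^{1-s}$ and the weight exponent $s/2-1$ are integrable so no principal value is needed. By uniqueness of the equilibrium measure, the candidate equals $\lambda_{s,A}^{\infty}$.

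The reduction, the convexity, and the mass computation are routine. The main obstacle is the constant-potential identity: one must either make rigorous the analytic continuation of the classical Riesz identity into $\sigma\in(-1,0)$, or prove directly that $\tfrac{\dd}{\dd x}\int_{-1}^{1}\left|x-T\right|^{1-s}(1-T^{2})^{s/2-1}\dd T=(1-s)\int_{-1}^{1}\sgn(x-T)\left|x-T\right|^{-s}(1-T^{2})^{s/2-1}\dd T=0$, which requires careful treatment of the integrable singularity at $T=x$ and a beta/hypergeometric reduction showing the two one-sided contributions cancel.
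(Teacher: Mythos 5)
Your argument is correct and follows essentially the same route as the paper: convexity of $\mathcal{K}_{s}^{(\infty)}$ along the segment via Theorem \ref{thm:convexity}(i), monotonicity of the kernel to force $I\supset A$, and verification that the stated density has total mass one and constant potential on $[-r,r]$. The one step you flag as the ``main obstacle'' --- constancy of $\int_{-1}^{1}\left|x-T\right|^{1-s}(1-T^{2})^{s/2-1}\,\dd T$ --- requires no analytic continuation: it is exactly the P{\'o}lya--Szeg{\H{o}} identity quoted as Lemma \ref{hilfssatz}, which is stated for all $-1<\alpha<1$, $\alpha\neq0$, and hence covers $\alpha=1-s\in(0,1)$ directly; this is precisely how the paper concludes.
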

\begin{proof}
W.l.o.g. consider the parametrization $\gamma(t)= t e^{\ii\phi}$, $|t|\leq r$. Then
\begin{equation*}
\frac{\dd^{2}}{\dd t^{2}} \mathcal{K}_{s}^{(\infty)}(\gamma(t),\gamma(T)) = 2 \frac{s \gammafcn((1+s)/2)}{\sqrt{\pi} \gammafcn(s/2)}  \left|t-T\right|^{-1-s}>0, \qquad 0<s<1.
\end{equation*}
By Theorem \ref{thm:convexity} there exists an interval $I\subset[-1,1]$ such that $\supp\lambda_{s,A}^{\infty}=\gamma(I)\cap A$. Since the kernel $\mathcal{K}_{s}^{(\infty)}(\gamma(t),\gamma(T))$ decreases as $|t-T|$ grows, there follows that the equilibrium potential is strictly less than $\mathcal{J}_{\mathcal{K}_{s}^{(\infty)}}[\lambda_{s,A}^{\infty}]$ on $(-\infty e^{\ii\phi},\infty e^{\ii\phi}) \setminus \gamma(I)$. But the equilibrium potential is $\geq\mathcal{J}_{\mathcal{K}_{s}^{(\infty)}}[\lambda_{s,A}^{\infty}]$ on $A$. So, $\supp\lambda_{s,A}^{\infty}=A$. Relations \eqref{lambda.line-segment} %and \eqref{energy.line-segment} 
follows from the constancy of the integral
\begin{equation*}
\begin{split}
\int_{-r}^{r} \left| t - T \right|^{1-s} \left( r^{2} - T^{2} \right)^{s/2-1} \dd T = \gammafcn(s/2) \gammafcn(1-s/2)
\end{split}
\end{equation*}
and the fact that the $\mathcal{K}_{s}^{(\infty)}$-potential for this measure \eqref{lambda.line-segment} is strictly decreasing away from the line-segment. We used the auxiliary result Lemma \ref{hilfssatz}.
\end{proof}

\begin{lem}[{\cite[Hilfssatz~I]{polya/szego:1931}}] \label{hilfssatz}
Let $-1<\alpha<1$, $\alpha\neq0$. Then for $-1\leq y\leq 1$:
\begin{equation*}
\int_{-1}^{1} \left(1-x^{2}\right)^{-(1+\alpha)/2} \left| x - y \right|^{\alpha} \dd x = \gammafcn\left(\frac{1-\alpha}{2}\right) \gammafcn\left(\frac{1+\alpha}{2}\right) = \frac{\pi}{\cos(\pi\alpha/2)}.
\end{equation*}
\end{lem}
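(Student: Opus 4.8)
The plan is to prove the two asserted equalities separately, and the second is immediate: writing $z=(1+\alpha)/2$ so that $1-z=(1-\alpha)/2$, Euler's reflection formula gives $\gammafcn((1-\alpha)/2)\gammafcn((1+\alpha)/2)=\gammafcn(z)\gammafcn(1-z)=\pi/\sin(\pi z)=\pi/\cos(\pi\alpha/2)$, where the last step uses $\sin(\pi/2+\pi\alpha/2)=\cos(\pi\alpha/2)$. Thus everything reduces to showing that the integral $\Phi(y)\DEF\int_{-1}^{1}(1-x^{2})^{-(1+\alpha)/2}\abs{x-y}^{\alpha}\,\dd x$ is independent of $y$ on $[-1,1]$ and equals $\gammafcn((1-\alpha)/2)\gammafcn((1+\alpha)/2)$.

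Granting constancy, I would fix the value by evaluating at the endpoint $y=1$, where $\abs{x-1}=1-x$ and the integrand collapses to the single weight $(1-x)^{(\alpha-1)/2}(1+x)^{-(1+\alpha)/2}$. The substitution $x=2t-1$ then turns $\Phi(1)$ into the Beta integral $\betafcn((1+\alpha)/2,(1-\alpha)/2)=\gammafcn((1+\alpha)/2)\gammafcn((1-\alpha)/2)$, the powers of $2$ cancelling because the two exponents sum to $-1$; convergence of this Beta integral holds exactly when $-1<\alpha<1$, which is the stated range.

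The crux is constancy of $\Phi$ on $(-1,1)$, where I expect the real work to lie. For $0<\alpha<1$ I would differentiate under the integral sign (justified since $\alpha-1>-1$ keeps the differentiated integrand locally integrable, uniformly for $y$ in compact subintervals), obtaining the absolutely convergent integral $\Phi^{\prime}(y)=-\alpha\int_{-1}^{1}(1-x^{2})^{-(1+\alpha)/2}\abs{x-y}^{\alpha-1}\sgn(x-y)\,\dd x$. The decisive device is the M\"obius substitution $x=(s+y)/(1+ys)$, which fixes $\pm1$, moves the singularity $x=y$ to $s=0$, and satisfies $x-y=(1-y^{2})s/(1+ys)$, $1-x^{2}=(1-y^{2})(1-s^{2})/(1+ys)^{2}$, and $\dd x=(1-y^{2})(1+ys)^{-2}\,\dd s$. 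Collecting exponents, all factors of $(1+ys)$ cancel (their powers sum to $(1+\alpha)+(1-\alpha)-2=0$) and one is left with $\Phi^{\prime}(y)=-\alpha\,(1-y^{2})^{(\alpha-1)/2}\int_{-1}^{1}(1-s^{2})^{-(1+\alpha)/2}\abs{s}^{\alpha-1}\sgn(s)\,\dd s$. The remaining integrand is an odd, integrable function of $s$, so its integral over $[-1,1]$ vanishes; hence $\Phi^{\prime}\equiv0$ and $\Phi$ is constant on $(-1,1)$, and by continuity up to the endpoints equals $\Phi(1)$ as computed above.

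To cover the full range I would extend to $-1<\alpha<0$ (where the differentiated integrand is no longer integrable, so the above breaks) by analytic continuation in $\alpha$: for fixed $y_{1},y_{2}\in(-1,1)$ the difference $\Phi(y_{1})-\Phi(y_{2})$ is analytic in $\alpha$ on the strip $-1<\re\alpha<1$ and vanishes on $(0,1)$, hence vanishes identically by the identity theorem. For the application in Lemma \ref{line-segment} one has $\alpha=1-s\in(0,1)$, so the direct argument already suffices. The one non-routine idea---and the main obstacle---is the vanishing of $\Phi^{\prime}(y)$: the M\"obius substitution is exactly what recenters the moving singularity and exposes the odd symmetry that forces the derivative to zero.
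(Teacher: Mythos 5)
The paper does not prove this lemma at all --- it is quoted from P\'olya--Szeg\H{o} \cite[Hilfssatz~I]{polya/szego:1931} --- so there is no in-paper argument to compare against; what you supply is a self-contained proof, and it is correct. The overall structure (reflection formula for the second equality; constancy in $y$ plus evaluation at an endpoint for the first) is sound, and the M\"obius substitution $x=(s+y)/(1+ys)$ does exactly what you claim: the exponent bookkeeping checks out, the powers of $1+ys$ cancel, the prefactor $(1-y^2)^{(\alpha-1)/2}$ is right, and the resulting integrand is odd and absolutely integrable for $0<\alpha<1$, so $\Phi'\equiv0$. Two points deserve tightening. First, the justification of differentiating under the integral sign is glossed over: a naive dominated-convergence bound on the difference quotients fails because the singularity at $x=y$ moves with $y$ (no integrable majorant works on an interval containing $y$); the clean fix is to set $G(y)=-\alpha\int_{-1}^{1}(1-x^{2})^{-(1+\alpha)/2}\abs{x-y}^{\alpha-1}\sgn(x-y)\,\dd x$, note it is absolutely convergent and locally bounded on $(-1,1)$, and verify $\Phi(y_{1})-\Phi(y_{0})=\int_{y_{0}}^{y_{1}}G(y)\,\dd y$ by Fubini, which gives $\Phi'=G$ without any pointwise domination. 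Second, continuity of $\Phi$ at $y=\pm1$ is immediate for $0<\alpha<1$ (the kernel $\abs{x-y}^{\alpha}$ is bounded) but requires an extra estimate for $-1<\alpha<0$, where the singularities at $x=y$ and $x=1$ merge; you can sidestep this entirely by fixing the constant at $y=0$ instead (the substitution $t=x^{2}$ yields the same Beta integral), or by applying your identity-theorem argument in $\alpha$ directly to $\Phi(0;\alpha)-\gammafcn((1-\alpha)/2)\gammafcn((1+\alpha)/2)$ rather than only to differences $\Phi(y_{1})-\Phi(y_{2})$. For the single use made of the lemma in Lemma~\ref{line-segment}, where $\alpha=1-s\in(0,1)$, your direct argument already suffices, as you note.
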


\begin{lem} \label{circle}
Let the outer boundary $S$ of the compact set $A$ be a subset of a circle $C$ centered at $a>0$ with radius $0<r<a$. Then $\supp\lambda_{s,A}^{\infty}=S$ for every $0<s<1$. In particular, if $S=C$, then $\lambda_{s,A}^{\infty}$ is given by the normalized arc-length measure on $C$ and $\supp\lambda_{s,A}^{\infty}=C$ for all $0<s<1$. \end{lem}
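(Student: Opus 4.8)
The plan is to establish the two assertions in turn: the identity $\supp\lambda_{s,A}^{\infty}=S$ by the convexity criterion of Theorem \ref{thm:convexity}(ii), and, in the case $S=C$, the identification of $\lambda_{s,A}^{\infty}$ with normalized arc-length measure by combining rotational symmetry with the superharmonicity of the kernel.

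For the support claim I would parametrize $C$ as the simple closed curve $\gamma(t)=a+r e^{\ii t}$, $0\le t\le 2\pi$, which lies in the interior of $\Hplus$ since $a-r>0$ and which covers $S$. By the remark following Theorem \ref{thm:convexity}, that theorem applies to $\mathcal{K}_{s}^{(\infty)}$, so it suffices to verify that $\mathcal{K}_{s}^{(\infty)}(\gamma(\cdot),\gamma(t))$ is strictly convex on $[t,t+2\pi]$ for each fixed $t$. Setting $\theta=\tau-t$, one has $r_{w}(\tau)=|\gamma(\tau)-\gamma(t)|=2r|\sin(\theta/2)|$, hence $r_{w}'=r\cos(\theta/2)$ and $r_{w}''=-(r/2)\sin(\theta/2)$ on $(t,t+2\pi)$, so that
\begin{equation*}
s\left(r_{w}'\right)^{2}-r_{w}r_{w}''=r^{2}\left[s\cos^{2}(\theta/2)+\sin^{2}(\theta/2)\right]>0, \qquad 0<s<1.
\end{equation*}
By \eqref{conv.cond} the second derivative of $\mathcal{K}_{s}^{(\infty)}(\gamma(\cdot),\gamma(t))$ is therefore strictly positive on $(t,t+2\pi)$, and Theorem \ref{thm:convexity}(ii) gives $\supp\lambda_{s,A}^{\infty}=S$.

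Now suppose $S=C$. The first part already yields $\supp\lambda_{s,A}^{\infty}=C$, and since $C$ is the outer boundary of $A$, the set $A$ is contained in the closed disk $\overline{D}$ bounded by $C$. Let $\sigma$ be the normalized arc-length measure on $C$. Because $\mathcal{K}_{s}^{(\infty)}(z,w)$ depends only on $|z-w|$ and $\sigma$ is invariant under rotations about the center $a$, the potential $\POTW_{\mathcal{K}_{s}^{(\infty)}}^{\sigma}$ is invariant under those rotations, hence constant on each circle centered at $a$ and, in particular, equal to some constant $c$ on $C$. As $\mathcal{K}_{s}^{(\infty)}$ is a negative multiple of $|z-w|^{1-s}$ it is superharmonic on $\Cset$, so $\POTW_{\mathcal{K}_{s}^{(\infty)}}^{\sigma}$ is superharmonic on $D$ and continuous on $\overline{D}$; by the minimum principle $\POTW_{\mathcal{K}_{s}^{(\infty)}}^{\sigma}\ge c$ on $\overline{D}\supset A$. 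Since $\mathcal{J}_{\mathcal{K}_{s}^{(\infty)}}[\sigma]=\int_{C}\POTW_{\mathcal{K}_{s}^{(\infty)}}^{\sigma}\,\dd\sigma=c$, the measure $\sigma$ satisfies the variational characterization of the equilibrium measure (potential $\ge$ energy on $A$, equality on the support), and the uniqueness furnished by Bj\"orck's theory forces $\lambda_{s,A}^{\infty}=\sigma$.

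The convexity computation in the first part is the only genuinely computational step, and it is mild. The point requiring care is the passage from ``constant potential on $C$'' to ``equilibrium measure'' in the second part: one must know both that the constant $c$ equals the equilibrium energy and that the variational inequality holds on all of $A$, not merely on $C$. The superharmonicity argument together with the inclusion $A\subset\overline{D}$ settles this. Alternatively, writing $\eta=\lambda_{s,A}^{\infty}-\sigma$, a zero-mass signed measure of vanishing $\mathcal{K}_{s}^{(\infty)}$-energy, one could invoke the strict conditional negative-definiteness of $|z-w|^{1-s}$ to conclude $\eta=0$.
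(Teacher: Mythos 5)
Your proof is correct and follows essentially the same route as the paper: the paper likewise parametrizes the circle, verifies strict convexity of $\mathcal{K}_{s}^{(\infty)}(\gamma(\cdot),\gamma(t))$ by differentiating the explicit expression proportional to $-\left|\sin\bigl(\bigl(\phi-\phi^{\prime}\bigr)/2\bigr)\right|^{1-s}$ (its second-derivative numerator $1+s-(1-s)\cos(\phi-\phi^{\prime})$ is exactly twice your $s\cos^{2}(\theta/2)+\sin^{2}(\theta/2)$), and then invokes Theorem \ref{thm:convexity}(ii). For the case $S=C$ the paper disposes of the uniformity claim in one line by rotational symmetry and uniqueness of the equilibrium measure, whereas you verify the variational conditions for arc-length directly via superharmonicity and the minimum principle; both justifications are valid.
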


The result $\supp\lambda_{s,A}^{\infty}=S$ for $0<s<1$ differs considerably from the logarithmic case. By \cite[Thm.~4]{hardin/saff/stahl:2007}, one has $\supp\lambda_{0,A}^{\infty}=\{ a + r e^{\ii\phi} \mid |\phi|\leq\theta \}$ for some $\theta\in[0,\pi/3]$.

\begin{proof}
W.l.o.g. consider the parametrization $\gamma(\phi) = r e^{\ii \phi}$, $0\leq\phi\leq2\pi$. Then
\begin{equation*}
\mathcal{K}_{s}^{(\infty)}(\gamma(\phi),\gamma(\phi^{\prime})) = - \frac{2^{2-s}}{1-s} \frac{\gammafcn((1+s)/2)}{\sqrt{\pi} \gammafcn(s/2)}  \left|\sin\frac{\phi-\phi^{\prime}}{2}\right|^{1-s} r^{1-s}.
\end{equation*}
By direct calculation (assisted by Mathematica) 
\begin{equation*}
\frac{\dd^{2}}{\dd \phi^{2}}\mathcal{K}_{s}^{(\infty)}(\gamma(\phi),\gamma(\phi^{\prime})) = 2^{-s-1} \frac{\gammafcn((1+s)/2)}{\sqrt{\pi} \gammafcn(s/2)} \frac{1 + s - \left( 1 - s \right) \cos \left( \phi - \phi^{\prime} \right)}{\left|\sin\left[\left(\phi-\phi^{\prime}\right)/2\right]\right|^{1+s}} r^{1-s} > 0.
\end{equation*}
Since $\gamma$ is a simple closed continuous curve and $\gamma(\phi)=\gamma(\phi+2\pi)$, by Theorem \ref{thm:convexity}, $\supp\lambda_{s,A}^{\infty}=S$. In the case $S=C$, rotational symmetry gives $\dd\lambda_{s,A}^{\infty}=\dd\phi/(2\pi)$. 
\end{proof}

\subsection{The case $s>1$} The kernel
\begin{equation*}
\mathcal{K}_{s}^{(\infty)}(z,w) = \frac{\gammafcn((s-1)/2)}{\sqrt{\pi} \gammafcn(s/2)} \frac{1}{\left|z-w\right|^{s-1}}, \qquad s>1,
\end{equation*}
is (up to a multiplicative constant) the Riesz-$(s-1)$-kernel in the plane $\Rset^2$ which can be identified with $\Cset$. If $1<s<1+\dim A$, then classical potential theory yields that there exists a unique equilibrium measure $\lambda_{s,A}^\infty$ on $A$ with $\supp\lambda_{s,A}^\infty\supset\check{A}$, where $\check{A}$ denotes the set of all points of $A$ each neighborhood of which intersects $A$ in a set of positive Riesz $(s-1)$-capacity. Examples of sets $A$ with $A=\check{A}$ are line-segments, circles, or more generally, any Jordan curve; discs, ``washers''.

\begin{lem}
Let $A$ be a compact subset of $\Cset$ with $\dim A>0$ and $s$ a real number with $1<s<1+\dim A$. Then $\lambda_{s,A}^{R}$ converges weak-star to $\lambda_{s,A}^{\infty}$ as $R\to\infty$. 
\end{lem}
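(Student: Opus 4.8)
The plan is to show that every weak-star limit point of the family $\{\lambda_{s,A}^{R}\}_{R}$ in the (weak-star compact, metrizable) space $\mathcal{M}(A)$ coincides with $\lambda_{s,A}^{\infty}$; the limit point being then unique, the whole family converges. Since $1<s<1+\dim A\le 3$ guarantees that $A$ has positive Riesz $(s-1)$-capacity, the classical theory recalled in Section \ref{sec:K.s.infty} furnishes a \emph{unique} equilibrium measure $\lambda_{s,A}^{\infty}$ of finite energy $V_{\infty}\DEF\mathcal{J}_{\mathcal{K}_{s}^{(\infty)}}[\lambda_{s,A}^{\infty}]<\infty$; write $\mathcal{K}_{s}^{(\infty)}(z,w)=c_{\infty}|z-w|^{-(s-1)}$ with $c_{\infty}>0$. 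To identify a limit point $\lambda^{*}$ with $\lambda_{s,A}^{\infty}$ it then suffices to prove $\mathcal{J}_{\mathcal{K}_{s}^{(\infty)}}[\lambda^{*}]\le V_{\infty}$, the reverse inequality being automatic from the definition of $V_{\infty}$ as an infimum.

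The crux, and the main obstacle, is that the kernels are singular on the diagonal whereas \eqref{K.R.limit.s.GT.1} yields uniform convergence $\mathcal{K}_{s}^{(R)}\to\mathcal{K}_{s}^{(\infty)}$ only on compact subsets of $\Hplus\times\Hplus$ bounded away from $\{z=w\}$, while weak-star convergence of measures gives no control of mass near the diagonal. I would remove this obstacle with a two-sided bound, uniform for $R\ge R_{0}$ and $z,w\in A$,
$$ c_{1}\,|z-w|^{-(s-1)} \;\le\; \mathcal{K}_{s}^{(R)}(z,w) \;\le\; c_{2}\,|z-w|^{-(s-1)}, \qquad 0<c_{1}\le c_{2}<\infty. $$
This follows from \eqref{K.s.3rd} together with \eqref{KRdef2}: writing $\mathcal{K}_{s}^{(R)}(z,w)=2R\,|z-w|^{1-s}\,|2R+z-w_{*}|^{-1}\,F(\zeta_{R})$ with $F={}_2F_{1}(1-s/2,1/2;1;\cdot)$ and $\zeta_{R}=1-|z-w|^{2}/|2R+z-w_{*}|^{2}$, one uses $|2R+z-w_{*}|\sim 2R$ uniformly on $A$ and the fact that for $1<s<3$ one has $c-a-b=(s-1)/2>0$, so that $F$ extends continuously and positively to $\zeta=1$ and is therefore bounded above and below away from $0$ on the relevant range of $\zeta_{R}$, which shrinks to $\{1\}$ as $R\to\infty$.

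With this bound in hand I would carry out two estimates. \emph{Upper bound on the equilibrium energies.} Using $\lambda_{s,A}^{\infty}$ as a competitor gives $V_{\mathcal{K}_{s}^{(R)}}(A)\le \mathcal{J}_{\mathcal{K}_{s}^{(R)}}[\lambda_{s,A}^{\infty}]$; since $\mathcal{K}_{s}^{(R)}\le(c_{2}/c_{\infty})\mathcal{K}_{s}^{(\infty)}$, which is $(\lambda_{s,A}^{\infty}\times\lambda_{s,A}^{\infty})$-integrable because $V_{\infty}<\infty$, and $\mathcal{K}_{s}^{(R)}\to\mathcal{K}_{s}^{(\infty)}$ pointwise off the diagonal (a null set for the atomless $\lambda_{s,A}^{\infty}$), dominated convergence yields $\limsup_{R\to\infty}V_{\mathcal{K}_{s}^{(R)}}(A)\le V_{\infty}$. \emph{Lower semicontinuity against the limit measure.} For fixed $M>0$ set $g_{R}=\min(\mathcal{K}_{s}^{(R)},M)$ and $g_{\infty}=\min(\mathcal{K}_{s}^{(\infty)},M)$. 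The lower bound forces $g_{R}=M=g_{\infty}$ on a fixed diagonal neighborhood $\{|z-w|<\delta_{M}\}$ for all large $R$, while off that neighborhood \eqref{K.R.limit.s.GT.1} gives uniform convergence; hence $g_{R}\to g_{\infty}$ \emph{uniformly} on $A\times A$, and $g_{\infty}$ is continuous and bounded.

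Finally, along a subsequence $R_{n}\to\infty$ with $\lambda_{s,A}^{R_{n}}\weakstarto\lambda^{*}$ one has $\lambda_{s,A}^{R_{n}}\times\lambda_{s,A}^{R_{n}}\weakstarto\lambda^{*}\times\lambda^{*}$, so uniform convergence of the integrands combined with $g_{R_{n}}\le\mathcal{K}_{s}^{(R_{n})}$ and the upper bound gives
$$ \iint g_{\infty}\,\dd(\lambda^{*}\times\lambda^{*}) = \lim_{n}\iint g_{R_{n}}\,\dd(\lambda_{s,A}^{R_{n}}\times\lambda_{s,A}^{R_{n}}) \le \liminf_{n}\mathcal{J}_{\mathcal{K}_{s}^{(R_{n})}}[\lambda_{s,A}^{R_{n}}] \le V_{\infty}. $$
Letting $M\to\infty$ and invoking monotone convergence ($g_{\infty}\uparrow\mathcal{K}_{s}^{(\infty)}$) yields $\mathcal{J}_{\mathcal{K}_{s}^{(\infty)}}[\lambda^{*}]\le V_{\infty}$. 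By the uniqueness of the equilibrium measure this forces $\lambda^{*}=\lambda_{s,A}^{\infty}$; as every weak-star limit point of the family equals this single measure, we conclude $\lambda_{s,A}^{R}\weakstarto\lambda_{s,A}^{\infty}$ as $R\to\infty$.
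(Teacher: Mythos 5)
Your proposal is correct and follows essentially the same route as the paper: extract a weak-star cluster point $\lambda^{*}$, show $\mathcal{J}_{\mathcal{K}_{s}^{(\infty)}}[\lambda^{*}]\le \liminf_{k}\mathcal{J}_{\mathcal{K}_{s}^{(R_k)}}[\lambda_{s,A}^{R_k}]\le \liminf_{k}\mathcal{J}_{\mathcal{K}_{s}^{(R_k)}}[\lambda_{s,A}^{\infty}]=\mathcal{J}_{\mathcal{K}_{s}^{(\infty)}}[\lambda_{s,A}^{\infty}]$ using minimality and the uniform convergence of the ratio $\mathcal{K}_{s}^{(R)}/\mathcal{K}_{s}^{(\infty)}$ to $1$, and conclude by uniqueness of the minimizer. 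The only difference is that the paper delegates the first inequality to the principle of descent (Landkof, Lemma~0.1), whereas you prove it by hand via the truncation $\min(\mathcal{K}_{s}^{(R)},M)$ and the two-sided kernel bound, which is a valid elaboration of the same step.
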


\begin{proof}
To simplify notation we use the abbreviations $\mathcal{K}_R=\mathcal{K}_s^{(R)}$, $\mathcal{K}_\infty=\mathcal{K}_s^{(\infty)}$, $\lambda_R=\lambda_{s,A}^R$, and $\lambda_\infty=\lambda_{s,A}^\infty$. From the definition of $\mathcal{K}_R$ given
in \eqref{KRdef2} and the formula \eqref{K.s.3rd} it follows that 
\begin{equation*}
\mathcal{K}_R(z,w) = \Omega_R(z,w) \mathcal{K}_\infty(z,w), \qquad (z,w)\in \Cset\times \Cset,
\end{equation*}
where
\begin{equation*}
\Omega_R(z,w) \DEF \frac{\Hypergeom{2}{1}{1-s/2,1/2}{1}{1-\frac{\left|z-w\right|^{2}}{\left|2R+z-w_{*}\right|^{2}}}}{\left|1+\frac{z-w_*}{2R}\right|\Hypergeom{2}{1}{1-s/2,1/2}{1}{1}}.
\end{equation*}
We remark that $\Omega_R$ converges uniformly to 1 on compact subsets of $\Cset\times \Cset$ as $R\to\infty$.

Since $\mathcal{M}(A)$ is weak-star-compact, there exists a weak-star cluster point $\lambda^*$ of $\lambda_R$ as $R\to \infty$.  We will show that $\mathcal{J}_{\mathcal{K}_\infty}[\lambda^*]\leq \mathcal{J}_{\mathcal{K}_\infty}[\lambda_\infty]$ from which the Lemma will immediately follow.   Let  $R_k$, $k\geq1$, be a sequence of numbers such that $\lim_{k\to\infty}R_k=\infty$ and  $\lambda_{R_k}\weakstarto\lambda^*\in\mathcal{M}(A)$ as $k\to\infty$. Thus $\Omega_{R_k}(\lambda_{R_k}\times\lambda_{R_k})\weakstarto\lambda^*\times\lambda^*$ as $k\to\infty$ and   we have (see \cite[Lemma~0.1]{landkof:72})
\begin{equation*}
\mathcal{J}_{\mathcal{K_\infty}}[\lambda^*]\le \liminf_{k\to\infty} \mathcal{J}_{\mathcal{K}_{R_k}}[\lambda_{R_k}]\le \liminf_{k\to\infty} \mathcal{J}_{\mathcal{K}_{R_k}}[\lambda_\infty],
\end{equation*}
where the second inequality follows since $\lambda_{R_k}$ minimizes $\mathcal{J}_{\mathcal{K}_{R_k}}$.  Finally,
since $\Omega_R$ converges uniformly to 1 on $A\times A$ as $R\to\infty$, we have $\liminf_{k\to\infty} \mathcal{J}_{\mathcal{K}_{R_k}}[\lambda_\infty]= \mathcal{J}_{\mathcal{K_{\infty}}}[\lambda_\infty]$ which shows that $\mathcal{J}_{\mathcal{K_\infty}}[\lambda^*]\le \mathcal{J}_{\mathcal{K_\infty}}[\lambda_\infty]$.  Since  $\mathcal{J}_{\mathcal{K_\infty}}$ has a unique minimizer, it follows that $\lambda_\infty$ is the only weak-star cluster point of $\lambda_R$ as $R\to \infty$.
\end{proof}

In the hyper-singular case $s>1+\dim A=\dim\Gamma(A)$, both energy integrals $\mathcal{J}_{\mathcal{K}_s^{(R)}}[\nu]$ and $\mathcal{J}_{\mathcal{K}_s^{(\infty)}}[\nu]$ are infinite for every $\nu\in\mathcal{M}(A)$. In 
Section~\ref{hypersingular} and \ref{hypersing.K.infty} we consider the limit distribution of minimal $\mathcal{K}_s$-energy and  $\mathcal{K}_s^{(\infty)}$-energy $N$-point systems as $N\to\infty$ for the hyper-singular case and for sufficiently ``nice" sets $A$ (namely $d$-rectifiable sets).

\section{Discrete Minimum Energy problems on $A\subset\Hplus$}
\label{sec:num.experiments}

In this section we discuss the discrete Riesz $s$-energy problem on $\Gamma(A)\subset \Rset^3$ as well as the discrete $\mathcal{K}$-energy problem  on $A\subset\Hplus$ for the kernel $\mathcal{K}=\mathcal{K}_{s}$, $\mathcal{K}=\mathcal{K}_{s}^{(R)}$, and $\mathcal{K}=\mathcal{K}_{s}^{(\infty)}$.  The  {\em $N$-point Riesz $s$-energy of $\Gamma(A)$} is defined as
\begin{equation*}
\mathcal{E}_s(\Gamma(A),N) \DEF \min E_s(X_{N}),
\end{equation*}
where the minimum is taken over all $N$-point configurations $X_N\subset \Gamma(A)$ and $E_s(X_{N})$ is defined as in \eqref{Esdef}.  We let $X_N^*=X_{N,s}^*$ denote an $N$-point configuration in $\Gamma(A)$ attaining this minimum.  

Similarly, for an $N$-point configuration $Z_N=\{z_1,\dots,z_N\}\subset A$, let 
\begin{equation*}
E_{\mathcal{K}}(Z_{N}) \DEF \sum_{j\neq k} \mathcal{K}(z_{j},z_{k})
\end{equation*}
and let the {\em $N$-point $\mathcal{K}$-energy of $A$} be defined as
\begin{equation*}
\mathcal{E}_{\mathcal{K}}(A,N) \DEF \min E_{\mathcal{K}}(Z_{N}),
\end{equation*}
over all $N$-point configurations $Z_N\subset A$. This minimum is attained at a minimal $\mathcal{K}$-energy $N$-point system $Z_N^*=\{z_1^*,\dots,z_N^*\}$, that is $\mathcal{E}_{\mathcal{K}}(A,N) = E_{\mathcal{K}}(Z_{N}^*)$. Finally, let $\lambda(Z_N^*) \DEF (1/N) \sum_{k=1}^N \delta_{z_k^*}$. We are interested in the weak-star convergence of $\lambda(Z_N^*)$ and in the asymptotic growth of $\mathcal{E}_{\mathcal{K}}(A,N)$ as $N\to\infty$.

\subsection{The potential theory case} \label{pottheory}
For $0<s<\dim\Gamma(A)$, there is a unique equilibrium measure $\lambda_{\mathcal{K},A}$ minimizing the $\mathcal{K}$-energy
\begin{equation*}
\mathcal{J}_{\mathcal{K}}[\lambda] = \int \mathcal{K}(z,w) \dd\lambda(z) \dd\lambda(w)
\end{equation*}
over measures $\lambda\in\mathcal{M}(A)$. (See Proposition \ref{prop.K.s} for the case $\mathcal{K}=\mathcal{K}_s$ and Bj{\"o}rck \cite{bjoerck:1956} for the case $\mathcal{K}=\mathcal{K}_s^{(\infty)}$.)

\begin{prop}
Suppose $A$ is an infinite compact subset in the interior of $\Hplus$. Let $\mathcal{K}=\mathcal{K}_s$, $\mathcal{K}=\mathcal{K}_s^{(R)}$, or $\mathcal{K}=\mathcal{K}_s^{(\infty)}$ and $0<s<\dim\Gamma(A)$. For $N\geq2$, let $Z_N^*$ be a minimal $\mathcal{K}$-energy configuration of $N$ points $\{z_1^*,\dots,z_N^*\}\subset A$. Then $\lambda(Z_N^*)$ converges weak-star to the equilibrium measure $\lambda_{\mathcal{K},A}$ on $A$ as $N\to\infty$ and 
\begin{equation}\label{transfinite}
\lim_{N\to\infty} \frac{\mathcal{E}_{\mathcal{K}}(A,N)}{N^2} = \mathcal{J}_{\mathcal{K}}[\lambda_{\mathcal{K},A}].
\end{equation}
\end{prop}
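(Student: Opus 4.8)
The plan is to establish the standard transfinite-diameter identity for lower semicontinuous kernels, following the classical scheme of \cite{landkof:72}, handling all three kernel families uniformly. Write $V_{\mathcal{K}}\DEF\mathcal{J}_{\mathcal{K}}[\lambda_{\mathcal{K},A}]$ and $\nu_N\DEF\lambda(Z_N^*)$. The setup I would record first is that in each case $\mathcal{K}$ is lower semicontinuous and bounded below on $A\times A$: for $0<s<1$ all three kernels are continuous, hence bounded on the compact set $A\times A$ (since $A$ lies in the interior of $\Hplus$), while for $s\geq1$ the kernels $\mathcal{K}_s$, $\mathcal{K}_s^{(R)}$, and $\mathcal{K}_s^{(\infty)}$ tend to $+\infty$ on the diagonal and are positive and continuous off it. Existence, uniqueness, and finiteness of $V_{\mathcal{K}}$ come from Proposition \ref{prop.K.s} (for $\mathcal{K}_s$), from Bj{\"o}rck \cite{bjoerck:1956} (for $\mathcal{K}_s^{(\infty)}$, $0<s<1$), and from classical Riesz potential theory (for $s>1$); the case $\mathcal{K}_s^{(R)}$ reduces to $\mathcal{K}_s$ on the translate $R+A$.

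\emph{Upper bound.} I would obtain $\limsup_N\mathcal{E}_{\mathcal{K}}(A,N)/N^2\leq V_{\mathcal{K}}$ by an averaging argument: draw $N$ points i.i.d.\ with common law $\lambda_{\mathcal{K},A}$ and take expectations, so that by independence and Fubini $\mathbb{E}[E_{\mathcal{K}}(Z_N)]=N(N-1)\iint\mathcal{K}\,\dd\lambda_{\mathcal{K},A}\,\dd\lambda_{\mathcal{K},A}=N(N-1)V_{\mathcal{K}}$. Since the minimal energy cannot exceed this average, $\mathcal{E}_{\mathcal{K}}(A,N)\leq N(N-1)V_{\mathcal{K}}$, whence $\mathcal{E}_{\mathcal{K}}(A,N)/N^2\leq(1-1/N)V_{\mathcal{K}}\to V_{\mathcal{K}}$.

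\emph{Lower bound.} For the matching inequality I would pass to a subsequence realizing $\liminf_N\mathcal{E}_{\mathcal{K}}(A,N)/N^2$ and, using weak-star compactness of $\mathcal{M}(A)$, extract a further subsequence along which $\nu_N\weakstarto\mu\in\mathcal{M}(A)$. Here lies the main obstacle: $\nu_N$ has atoms, so $\mathcal{J}_{\mathcal{K}}[\nu_N]$ is infinite in the singular case $s\geq1$ and the principle of descent cannot be applied to $\nu_N$ directly. I would circumvent this with the truncation $\mathcal{K}_M\DEF\min(\mathcal{K},M)$, which is continuous and bounded. Discarding the omitted diagonal terms and bounding the diagonal contribution of $\mathcal{K}_M$ by $M/N$ gives $E_{\mathcal{K}}(Z_N^*)/N^2\geq\iint\mathcal{K}_M\,\dd\nu_N\,\dd\nu_N-M/N$; letting $N\to\infty$ along the subsequence (using $\nu_N\times\nu_N\weakstarto\mu\times\mu$ and continuity of $\mathcal{K}_M$) and then $M\to\infty$ (monotone convergence, valid since $\mathcal{K}$ is bounded below) yields $\liminf_N\mathcal{E}_{\mathcal{K}}(A,N)/N^2\geq\mathcal{J}_{\mathcal{K}}[\mu]\geq V_{\mathcal{K}}$. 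Combined with the upper bound this proves \eqref{transfinite}.

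\emph{Weak-star convergence.} Finally, the same truncation estimate applied to an arbitrary weak-star cluster point $\mu$ of the full sequence $\nu_N$, now combined with the established limit $E_{\mathcal{K}}(Z_N^*)/N^2\to V_{\mathcal{K}}$, forces $\mathcal{J}_{\mathcal{K}}[\mu]\leq V_{\mathcal{K}}$ and hence $\mathcal{J}_{\mathcal{K}}[\mu]=V_{\mathcal{K}}$; by uniqueness of the equilibrium measure $\mu=\lambda_{\mathcal{K},A}$. Since every cluster point coincides with $\lambda_{\mathcal{K},A}$ and $\mathcal{M}(A)$ is weak-star compact, the full sequence $\nu_N$ converges weak-star to $\lambda_{\mathcal{K},A}$. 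I expect the only delicate point throughout to be the uniform treatment of the singular diagonal for $s\geq1$ via truncation; everything else is routine.
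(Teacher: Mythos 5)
Your proposal is correct and follows essentially the same route as the paper, which simply defers to the standard transfinite-diameter arguments of P\'olya--Szeg{\H{o}} and Landkof (pp.~160--162): boundedness of $\mathcal{E}_{\mathcal{K}}(A,N)/N^2$ via an averaging upper bound, a matching lower bound for weak-star cluster points of $\lambda(Z_N^*)$ (your truncation $\min(\mathcal{K},M)$ correctly handles the singular diagonal for $s\geq1$, where $\mathcal{K}$ is lower semicontinuous, bounded below, and tends to $+\infty$ on the diagonal so that the truncated kernel is continuous), and uniqueness of the equilibrium measure to identify the limit. You have merely written out in full the argument the paper cites, so there is nothing further to compare.
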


\begin{proof}
The proof follows using standard arguments as in  \cite{polya/szego:1931},  \cite[pp.~160--162]{landkof:72} and \cite{farkas/nagy:_unpubl}.   The essential ingredients of the proof are the boundedness of the sequence \eqref{transfinite} and existence and uniqueness of the equilibrium measure $\lambda_{\mathcal{K}, A}$.
\end{proof}

In Figure \ref{num_exp1} we show minimal $\mathcal{K}_s$-energy configurations for $N=32$ points restricted to a Cassinian oval for various values of $s$ with $0<s<1$. A somewhat surprising result of these numerical experiments is that for fixed $N$ and for $s$ close to $1^-$ a rather large part of the Cassinian oval is free of points. (Note, that in the case $s=1$, the support of the equilibrium measure $ \lambda_{s,A}$ is   $A$.) In Figure \ref{num_exp2} we show minimal $\mathcal{K}_s^\infty$-energy configurations for $N=32$ points restricted to a Cassinian oval for various values of $s$ with $0<s<1$. 
%For $s$ close to $0^+$ but fixed 
\begin{figure}[h]
\begin{center}
\caption{\label{num_exp1} Minimum $\mathcal{K}_s$-energy configurations ($N=32$ points) 
for $s=0$, $0.5$, $0.75$, and $1$.}
\end{center}
\end{figure}
\begin{figure}[h]
\begin{center}
\caption{\label{num_exp2} Minimum $\mathcal{K}_s^\infty$-energy configurations ($N=32$ points) 
for $s=0$, $0.5$, $0.75$, and $1$.}
\end{center}
\end{figure}

Numerical experiments for a circle $C$ centered at $a>0$ with radius $r$ ($0<r<a$) suggest that for a fixed $s$, say $s=1/4$, the equilibrium measure $\lambda_{s,C}$ is concentrated on a proper subset of $C$. (For what we can prove, see Example \ref{eg:circle}.) However, the equilibrium measure $\lambda_{s,C}^R$ associated with the translate $R+C$ converges weak-star to $\lambda_{s,C}^\infty$ as $R\to\infty$ and we can show (see Lemma \ref{circle}) that $\lambda_{s,C}^\infty$ is the uniform measure on the circle $C$ for all $0<s<1$. This phenomenon that the support of $\lambda_{s,C}^R$ seems to spread out as $R\to\infty$ is illustrated by considering discrete minimal $\mathcal{K}_s$-energy points on the translate $R+C$ for varying values of $R$. In Figure \ref{num_exp3} we show minimal $\mathcal{K}_s$-energy configurations for $N=40$ points restricted to translates $R+C$ of the unit circle $C$ centered at $a=1$, where $R=10^{k/2}$ ($k=0,1,\dots,5$).
\begin{figure}[h]
\begin{center}
\caption{\label{num_exp3} Minimum $\mathcal{K}_s$-energy configurations ($N=40$ points) on translates $R+C$ of the unit circle $C$ centered at $1$ for $R=10^{k/2}$ ($k=0,1,\dots,5$).}
\end{center}
\end{figure}

\subsection{The hypersingular case for $d$-rectifiable sets} \label{hypersingular} Suppose $s\geq\dim\Gamma(A)$. In this subsection we require that $A$ be a $d$-rectifiable subset of the interior of $\Hplus$ ($d=1,2$). Recall that a set $K\subset\Rset^p$ is {\em $d$-rectifiable}, $d\leq p$, if it is the image of a bounded set $B$ in $\Rset^d$ with respect to a Lipschitz mapping, that is a mapping $\phi:B\to\Rset^p$ that satisfies for some positive constant $c$
\begin{equation}
\left| \phi(x) - \phi(y) \right| \leq c \left| x - y \right| \qquad \text{for all $x,y\in B$.}
\end{equation}
Note that every compact subset of $\Rset^2$ is $2$-rectifiable. Also note that if $A$ is a $d$-rectifiable set in $\Hplus$, then $\Gamma(A)$ is a $(d+1)$-rectifiable set in $\Rset^3$, $d=1,2$. In order to avoid complications, we require that $A$ is in the interior of $\Hplus$. In this case $\dim\Gamma(A)=1+\dim A$.

Using the properties of $\mathcal{K}_s$ as given in Lemma \ref{lem:K.s.properties} it follows that $\mathcal{K}_s(z,w)=\Omega(z,w) |z-w|^{1-s}$, where $\Omega:A\times A\to\Rset$ is continuous and positive. In the terminology of \cite{borodachov/hardin/saff:weighted}   then $\Omega$ is a {\em CPD weight function} on $A$ (see \cite{borodachov/hardin/saff:weighted} for the  general definition of CPD weight function).  Also, note that 
\begin{equation*}
\Omega(w,w) = \frac{\gammafcn((s-1)/2)}{\sqrt{\pi}\gammafcn(s/2)} \frac{1}{\left| w - w_* \right|}.
\end{equation*}
If $A$ is a compact set in $\Rset^p$ and $\Omega$ is a CPD-weight function on $A\times A$, then for $s\geq d$ one can define the {\em weighted Hausdorff measure} $\mathcal{H}_d^{s,\Omega}$ on Borel sets $B\subset \Hplus$ by
\begin{equation}
\mathcal{H}_d^{s,\Omega}(B) \DEF \int_{B\cap A} \left[ \Omega(w,w) \right]^{-d/s} \dd\mathcal{H}_d(w).
\end{equation}
Then the following result is a corollary of Theorem 2 in \cite{borodachov/hardin/saff:weighted}.

\begin{prop} \label{prop:hyper-singular}
Let $d=1$ or $2$ and suppose $A$ is a compact $d$-rectifiable set contained in the interior of $\Hplus$ with positive $d$-dimensional Hausdorff measure $\mathcal{H}_d(A)$. Let $s>\dim\Gamma(A)$. For $N\geq2$, let $Z_N^*$ be a minimal $\mathcal{K}_s$-energy configuration of $N$ points $\{z_1^{s,N},\dots,z_N^{s,N}\}\subset A$. Then the sequence $Z_N^*$, $N\geq2$, is asymptotically uniformly distributed with respect to $\mathcal{H}_d^{s-1,\Omega}$; that is, 
\begin{equation}\label{hypersing1}
\frac{1}{N}\sum_{k=1}^N\delta_{z_k^{s,N}} \weakstarto \frac{\mathcal{H}_d^{s-1,\Omega}}{\mathcal{H}_d^{s-1,\Omega}(A)}, \qquad \text{as $N\to\infty$.}
\end{equation}
Moreover, the minimal $N$-point $\mathcal{K}_s$-energy satisfies
\begin{equation} \label{weighted.energy}
\lim_{N\to\infty} \frac{\mathcal{E}_{\mathcal{K}_s}(A,N)}{N^{1+(s-1)/d}} = \frac{C_{s-1,d}}{\left[ \mathcal{H}_d^{s-1,\Omega}(A), \right]^{(s-1)/d}},
\end{equation}
where $C_{s-1,d}$ is a positive constant which does not depend on $A$ and $N$.
\end{prop}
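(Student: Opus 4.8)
The plan is to obtain this proposition as a direct corollary of Theorem~2 in \cite{borodachov/hardin/saff:weighted}, once the kernel $\mathcal{K}_s$ is cast into the weighted-Riesz form covered by that theorem and the parameters are matched. By Lemma~\ref{lem:K.s.properties}, specifically the factored representation \eqref{K.s.3rd}, one writes
\begin{equation*}
\mathcal{K}_s(z,w) = \Omega(z,w)\,\left|z-w\right|^{-(s-1)}, \qquad \Omega(z,w) = \frac{1}{\left|z-w_{*}\right|}\,\Hypergeom{2}{1}{1-s/2,1/2}{1}{1-\frac{\left|z-w\right|^{2}}{\left|z-w_{*}\right|^{2}}}.
\end{equation*}
Thus the minimal $\mathcal{K}_s$-energy problem on $A$ is precisely a weighted minimal energy problem on $A$ for the planar Riesz-$(s-1)$ kernel $\left|z-w\right|^{-(s-1)}$ in $\Cset\cong\Rset^2$, with weight $\Omega$. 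Since $\dim\Gamma(A)=1+\dim A=1+d$, the hypothesis $s>\dim\Gamma(A)$ is exactly $s-1>d$, which places us in the hypersingular regime $s'>d$ (with $s'\DEF s-1$) to which Theorem~2 applies.

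First I would verify the hypotheses of that theorem. The set $A$ is compact, $d$-rectifiable, and has $\mathcal{H}_d(A)>0$ by assumption. Because $A$ lies in the \emph{interior} of $\Hplus$, we have $\re w>0$, hence $\left|w-w_{*}\right|=2\re w>0$ throughout $A$; consequently $\Omega$ has no singularity on $A\times A$, and the hypergeometric factor extends continuously up to and including the diagonal, its value at $z=w$ being the Gauss value of $\Hypergeom{2}{1}{1-s/2,1/2}{1}{1}$ computed from \cite[15.1.20]{abramowitz/stegun:70}. This shows $\Omega$ is continuous and strictly positive on $A\times A$, and in particular
\begin{equation*}
\Omega(w,w) = \frac{\gammafcn((s-1)/2)}{\sqrt{\pi}\,\gammafcn(s/2)}\,\frac{1}{\left|w-w_{*}\right|}
\end{equation*}
is finite and bounded away from $0$ and $\infty$ on $A$. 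These are exactly the properties needed for $\Omega$ to be a CPD weight function in the sense of \cite{borodachov/hardin/saff:weighted}.

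With the hypotheses in place I would apply Theorem~2 of \cite{borodachov/hardin/saff:weighted} with Riesz parameter $s-1$ and dimension $d$. That theorem gives the asymptotic equidistribution of minimal weighted-energy configurations with respect to the weighted Hausdorff measure $\mathcal{H}_d^{s-1,\Omega}$ and the leading-order energy asymptotics governed by the universal constant $C_{s-1,d}$. Substituting the parameter $s-1$ into the definition of $\mathcal{H}_d^{s,\Omega}$ produces the measure appearing in \eqref{hypersing1}; since $\Omega(w,w)$ is bounded above and below on $A$, the normalizing quantity $\mathcal{H}_d^{s-1,\Omega}(A)$ is finite and positive, so both the normalization in \eqref{hypersing1} and the right-hand side of \eqref{weighted.energy} are well defined. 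Reading the conclusions of Theorem~2 back through the factorization then yields \eqref{hypersing1} and \eqref{weighted.energy}. The only genuine content beyond bookkeeping is confirming that $\Omega$ satisfies the full technical definition of a CPD weight function used in \cite{borodachov/hardin/saff:weighted} and that the parameter shift $s\mapsto s-1$ is the correct matching; the equidistribution and the energy limit are then inherited verbatim. The point requiring care is that placing $A$ in the \emph{open} right half-plane is precisely what removes the diagonal singularity of $\Omega$ and makes $\mathcal{H}_d^{s-1,\Omega}$ comparable to ordinary $\mathcal{H}_d$, whereas without the interior assumption $\Omega(w,w)$ would blow up as $\re w\to 0$.
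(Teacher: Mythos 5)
Your proposal matches the paper's argument exactly: the paper likewise factors $\mathcal{K}_s(z,w)=\Omega(z,w)\left|z-w\right|^{1-s}$ via \eqref{K.s.3rd}, notes that $\Omega$ is continuous and positive on $A\times A$ (hence a CPD weight function, with the same diagonal value $\Omega(w,w)$ you compute) because $A$ lies in the interior of $\Hplus$, and then invokes Theorem~2 of \cite{borodachov/hardin/saff:weighted} with Riesz parameter $s-1$. The verification of the hypotheses and the parameter bookkeeping you supply are correct and are precisely what the paper's one-line citation relies on.
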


\begin{rmk} \label{rmk:C.s.d}
The constant $C_{s-1,d}$ is exactly the same constant which appears in the analogue of \eqref{weighted.energy} for the non-weighted case, that is for $\Omega(z,w)=1$ for all $z,w\in A$. It can be represented using the Riesz $s$-energy for the unit cube in $\Rset^d$ via 
\begin{equation} \label{Csd}
C_{s,d} = \lim_{N\to\infty}\frac{\mathcal{E}_s([0,1]^d,N)}{N^{1+s/d}}, \qquad s>d.
\end{equation}
It was shown in \cite{martinez-finkelshtein/maymeskul/rakhmanov/saff} that $C_{s,1}=2\zetafcn(s)$, where $\zeta(s)$ is the classical Riemann zeta function. However, for other values of $d$, the constant $C_{s,d}$ is as yet unknown. %\COMMENT{needs maybe rewording, cf. CPD-paper.}
\end{rmk}

In the boundary case $s=\dim\Gamma(A)$ and for a $1$-rectifiable set $A$ an additional regularity condition is needed to prove a result analogous to Proposition \ref{prop:hyper-singular}. The following result is a corollary of Theorem~3 in \cite{borodachov/hardin/saff:weighted}.
\begin{prop} \label{prop:hyper-singular.boundary}
Let $d=1$ or $2$ and suppose $A$ is a compact $d$-rectifiable set contained in the interior of $\Hplus$ with positive $d$-dimensional Hausdorff measure $\mathcal{H}_d(A)$. If $d=1$, we further require that $A$ is a subset of a $C^1$-curve. Let $s=1+d$. For $N\geq2$, let $Z_N^*$ be a minimal $\mathcal{K}_{d+1}$-energy configuration of $N$ points $\{z_1^{d+1,N},\dots,z_N^{d+1,N}\}\subset A$. Then the sequence $Z_N^*$, $N\geq2$, is asymptotically uniformly distributed with respect to $\mathcal{H}_d^{d,\Omega}$; that is, 
\begin{equation*}
\frac{1}{N}\sum_{k=1}^N \delta_{z_k^{d+1,N}} \weakstarto \frac{\mathcal{H}_d^{d,\Omega}}{\mathcal{H}_d^{d,\Omega}(A)}, \qquad \text{as $N\to\infty$.}
\end{equation*}
Moreover, the minimal $N$-point $\mathcal{K}_s$-energy satisfies
\begin{equation} \label{weighted.energy.boundary.case}
\lim_{N\to\infty} \frac{\mathcal{E}_{\mathcal{K}_{d+1}}(A,N)}{N^2\log N} = \frac{\beta_d}{\mathcal{H}_d^{d,\Omega}(A)},
\end{equation}
where $\beta_d=\pi^{d/2}/\gammafcn(1+d/2)$ is the volume of the unit ball in $\Rset^d$.
\end{prop}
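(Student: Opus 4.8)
The plan is to show that this statement is a direct specialization of Theorem~3 of \cite{borodachov/hardin/saff:weighted}, so that the real work is only to exhibit the correct weighted-Riesz structure of $\mathcal{K}_{d+1}$ in the boundary regime and to check the hypotheses under which that theorem applies. First I would set $s=d+1$ and invoke the factorization recorded just before Proposition~\ref{prop:hyper-singular} (which itself rests on Lemma~\ref{lem:K.s.properties}), namely $\mathcal{K}_{s}(z,w)=\Omega(z,w)\,\abs{z-w}^{1-s}$. Specializing to $s=d+1$ gives
\begin{equation*}
\mathcal{K}_{d+1}(z,w)=\Omega(z,w)\,\abs{z-w}^{-d},\qquad (z,w)\in A\times A,
\end{equation*}
so the effective Riesz exponent of the kernel equals $d$. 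Since $A$ is a $d$-rectifiable subset of the plane $\Cset\cong\Rset^{2}$, this is precisely the boundary case in which the Riesz exponent coincides with the dimension $d$ of the carrier set, and it is this coincidence that produces the logarithmic correction in the energy asymptotics.

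Next I would confirm that $\Omega$ is an admissible (CPD) weight function in the sense of \cite{borodachov/hardin/saff:weighted}. Because $A$ lies in the interior of $\Hplus$, the reflected point $w_{*}$ stays bounded away from $A$, so $\Omega$ extends continuously and strictly positively to all of $A\times A$, including the diagonal, where
\begin{equation*}
\Omega(w,w)=\frac{\gammafcn((s-1)/2)}{\sqrt{\pi}\,\gammafcn(s/2)}\,\frac{1}{\abs{w-w_{*}}}.
\end{equation*}
With $\Omega$ so identified, the weighted Hausdorff measure in the statement is $\mathcal{H}_{d}^{d,\Omega}(B)=\int_{B\cap A}[\Omega(w,w)]^{-1}\,\dd\mathcal{H}_{d}(w)$, whose density is proportional to $\abs{w-w_{*}}=2\re[w]$, reflecting the stretching introduced by revolution about the axis.

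With these two observations in hand I would apply Theorem~3 of \cite{borodachov/hardin/saff:weighted} verbatim. Its hypotheses are that $A$ be compact and $d$-rectifiable with $\mathcal{H}_{d}(A)>0$ and, in the case $d=1$, that $A$ be contained in a $C^{1}$-curve---exactly the standing assumptions of the Proposition. The theorem then simultaneously yields the weak-star convergence of the normalized counting measures to $\mathcal{H}_{d}^{d,\Omega}/\mathcal{H}_{d}^{d,\Omega}(A)$ and the energy asymptotic \eqref{weighted.energy.boundary.case}. Whereas the generic hypersingular formula of Proposition~\ref{prop:hyper-singular} would predict the order $N^{2}$ (the value of $N^{1+(s-1)/d}$ at $s=d+1$), in the boundary case Theorem~3 supplies the logarithmically corrected order $N^{2}\log N$, and the model constant $C_{s-1,d}$ is replaced by the boundary-case constant, which for the unweighted Riesz-$d$ energy on $d$-dimensional sets equals the volume $\beta_{d}=\pi^{d/2}/\gammafcn(1+d/2)$ of the unit ball in $\Rset^{d}$.

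The argument is thus essentially bookkeeping; the only genuine point requiring care is matching our boundary-case regime to the correct branch of \cite{borodachov/hardin/saff:weighted}. Concretely, I expect the main obstacle to be confirming that in the \emph{weighted} setting the weight enters only through the renormalizing factor $[\Omega(w,w)]^{-1}$ in the limiting measure and leaves the universal constant untouched, so that the scaling $N^{2}\log N$ and the constant $\beta_{d}$ are delivered exactly as for the unweighted Riesz-$d$ problem; and in verifying that the $C^{1}$-curve hypothesis for $d=1$ is precisely the regularity assumption under which that theorem is stated.
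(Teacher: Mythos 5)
Your proposal is correct and follows exactly the route the paper takes: the paper's entire proof is the observation that the statement is a corollary of Theorem~3 of \cite{borodachov/hardin/saff:weighted}, using the factorization $\mathcal{K}_{s}(z,w)=\Omega(z,w)\,\abs{z-w}^{1-s}$ with the CPD weight $\Omega$ already identified before Proposition~\ref{prop:hyper-singular}. Your additional bookkeeping (the effective Riesz exponent $d$ at $s=d+1$, the continuity and positivity of $\Omega$ on $A\times A$ since $A$ avoids the imaginary axis, and the exponent $-d/d=-1$ in the weighted Hausdorff density) is exactly the verification the paper leaves implicit.
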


\begin{rmk}
It is a consequence of Theorem~2 and 3 in \cite{borodachov/hardin/saff:weighted}   that  minimal Riesz $s$-energy point   configurations $X_N^*\subset \Gamma(A)$   are asymptotically uniformly distributed with respect to $\mathcal{H}_{ d+1} $ 
restricted to $\Gamma(A)$ in the hypersingular case $s\ge d+1$.   For $z\in \Hplus$, let $\hat \delta_z$ denote the 
rotationally symmetric probability measure supported on $\Gamma(\{z\})$.  When $s<d+1$ we have that both 
$(1/N)\sum_{x\in X_N^*}\delta_x$ and $(1/N)\sum_{z\in Z_N^*}\hat \delta_z$ converge weak-star to $\mu_{s,\Gamma(A)}$. 
However, for $s>d+1$,  the discrete probability measure $(1/N)\sum_{x\in X_N^*}\delta_x$  converge weak-star to 
$\mathcal{H}_{ d+1} $ (normalized and restricted to $\Gamma(A)$), while Proposition~\ref{prop:hyper-singular} implies that $(1/N)\sum_{z\in Z_N^*}\hat \delta_z$ 
converges to a measure that depends on $s$.  In the boundary case $s=d+1$, the latter limit distributions are equal (cf. Proposition~\ref{prop:hyper-singular.boundary}). 
\end{rmk}

In the following we consider two examples: a line-segment in general position and a circle centered on the real axis.
\begin{eg}
Let $A$ be the line segment with parametrization $\gamma(t)=R+ t e^{\ii\phi}$, $|t|\leq 1$, where $R>\cos\phi$ and $0\leq\phi<\pi$ fixed. Then $A$ is a $1$-rectifiable set and the weighted Hausdorff measure $\mathcal{H}_1^{s-1,\Omega}$ can be explicitly calculated. Indeed, since $\dd\mathcal{H}_1(t)= \dd t$, we get for $s>2$
\begin{equation} \label{line-seg-density}
\dd \frac{\mathcal{H}_d^{s-1,\Omega}}{\mathcal{H}_d^{s-1,\Omega}(A)} (t) = \frac{\left( R + t \cos\phi \right)^{1/(s-1)} \dd t}{\dfrac{s-1}{s} \dfrac{\left(R+\cos\phi\right)^{s/(s-1)}-\left(R-\cos\phi\right)^{s/(s-1)}}{\cos\phi}}, \quad |t|\leq1.
\end{equation}
Note, in the case of the vertical line-segment (that is $\phi=\pi/2$), the last expression reduces to
\begin{equation} 
\dd \frac{\mathcal{H}_d^{s-1,\Omega}}{\mathcal{H}_d^{s-1,\Omega}(A)} (t) = \frac{1}{2} \dd t, \quad |t|\leq1; s>2.
\end{equation}
In Figure \ref{num_exp4} we show minimal $\mathcal{K}_2$ and $\mathcal{K}_4$-energy configurations for $N=40$ points restricted to the line-segment with $R=3/2$ and $\phi=\pi/4$. 
\end{eg}
\begin{figure}[h]
\centering
\caption{\label{num_exp4} Minimum $\mathcal{K}_s$-energy configurations ($N=40$ points) on a line-segment for $s=2$ (left) and $s=4$ (right).}
\end{figure}

\begin{eg}
Let $A$ be the unit circle centered at $R>1$. Then $A$ is a $1$-rectifiable set and the weighted Hausdorff measure $\mathcal{H}_1^{s-1,\Omega}$ can be explicitly calculated. Since $\dd\mathcal{H}_1(\phi)=\dd\phi$, one has for $s>2$
\begin{equation} \label{circle-density}
\dd \frac{\mathcal{H}_d^{s-1,\Omega}}{\mathcal{H}_d^{s-1,\Omega}(A)} (\phi) = \frac{1}{2\pi} \frac{\left( \dfrac{R+\cos\phi}{R+1} \right)^{1/(s-1)} \dd\phi}{\Hypergeom{2}{1}{-1/(s-1),1/2}{1}{\dfrac{2}{1+R}}}, \quad -\pi\leq\phi\leq\pi.
\end{equation}
In Figure \ref{num_exp5} we show minimal $\mathcal{K}_2$ and  $\mathcal{K}_4$-energy configurations for $N=40$ points restricted to the unit circle centered at $R=3/2$. \end{eg}
\begin{figure}[h]
\centering
caption{\label{num_exp5} Minimum $\mathcal{K}_s$-energy configurations ($N=40$ points) on the unit circle centered at $3/2$ for $s=2$ (left) and $s=4$ (right).}
\end{figure}

\begin{rmk}
Results similar to Proposition \ref{prop:hyper-singular} and Proposition \ref{prop:hyper-singular.boundary} also hold for the kernel $\mathcal{K}_s^{(R)}$. In this case the diagonal of the CPD weight function becomes
\begin{equation*}
\Omega^R(w,w) \DEF \frac{\gammafcn((s-1)/2)}{\sqrt{\pi}\gammafcn(s/2)} \left| 1 - \frac{w-w_*}{2R} \right|^{-1}.
\end{equation*}
In particular, we have the limit
\begin{equation*}
\lim_{R\to\infty} \Omega^R(z,w) = \frac{\gammafcn((s-1)/2)}{\sqrt{\pi}\gammafcn(s/2)}, \qquad s>\dim\Gamma(A),
\end{equation*}
where the convergence is uniform on compact subsets of $\Hplus\times\Hplus$. Consequently,
\begin{equation}
\frac{\mathcal{H}_d^{s-1,\Omega^R}}{\mathcal{H}_d^{s-1,\Omega^R}(A)} \weakstarto \frac{\mathcal{H}_d\big|_A}{\mathcal{H}_d(A)}, \qquad \text{as $R\to\infty$ and $s\geq\dim\Gamma(A)$.}
\end{equation}
Here $\mathcal{H}_d|_A/\mathcal{H}_d(A)$ is the limit distribution of minimal $\mathcal{K}_s^\infty$-energy $N$-point configurations as $N\to\infty$ (see next subsection). 
\end{rmk}

Of interest is the question of how well minimal $\mathcal{K}$-energy points are separated, that is, we are asking for a lower bound for the separation radius
\begin{equation}
\delta(Z_N^*) \DEF \min \left\{ \left| z - w \right| \mid z,w\in Z_N^*, z\neq w \right\}
\end{equation}
of optimal $\mathcal{K}$-energy $N$-point systems $Z_N^*$ valid for $N\geq2$. In fact, such an estimate can be obtained on sets of arbitrary Hausdorff dimension $\alpha$. The following result is a corollary of Theorem 4 in \cite{borodachov/hardin/saff:weighted}. 
\begin{prop} \label{prop:sep}
Let $0<\alpha<2$. Suppose $A$ is a compact subset in the interior of $\Hplus$ with $\mathcal{H}_\alpha(A)>0$. Let $\mathcal{K}=\mathcal{K}_s$ or $\mathcal{K}=\mathcal{K}_s^R$ with $R>0$. Then for every $s\geq\alpha$ there is a constant $c_s=c_s(A,\Omega,\alpha)>0$, where $\Omega$ is the CPD-weight function associated with $\mathcal{K}$, such that any $\mathcal{K}$-energy minimizing configuration $Z_N^*$ on $A$ satisfies the inequality
\begin{equation}
\delta(Z_N^*) \geq 
\begin{cases}
c_s N^{-1/\alpha} & s>\alpha, \\
c_\alpha (N\log N)^{-1/\alpha} & s=\alpha,
\end{cases}
\qquad N\geq2.
\end{equation}
\end{prop}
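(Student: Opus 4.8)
The plan is to recast the discrete $\mathcal{K}$-energy problem on $A$ as a \emph{weighted Riesz energy problem} in the plane and then to quote the separation estimate of Borodachov, Hardin and Saff (Theorem~4 of \cite{borodachov/hardin/saff:weighted}). The entire argument then reduces to checking that the hypotheses of that theorem are met and to being careful about which Riesz exponent is actually in force.

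First I would record the weighted-Riesz form of the two admissible kernels. In the singular regime $s>1$ (which is the only regime in which a separation bound can hold, and which is forced here once the repulsion exponent meets $\alpha>0$), formula \eqref{K.s.3rd} of Lemma \ref{lem:K.s.properties} gives $\mathcal{K}_s(z,w)=\Omega(z,w)\left|z-w\right|^{1-s}$ with
\begin{equation*}
\Omega(z,w)=\frac{1}{\left|z-w_*\right|}\,\Hypergeom{2}{1}{1-s/2,1/2}{1}{1-\frac{\left|z-w\right|^2}{\left|z-w_*\right|^2}},
\end{equation*}
and, in exactly the same way, $\mathcal{K}_s^{(R)}(z,w)=\Omega^R(z,w)\left|z-w\right|^{1-s}$ with the weight $\Omega^R$ computed in the preceding remark. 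Thus both kernels have the shape (positive weight)$\,\times\,$(Riesz kernel of exponent $s-1$), and the repulsion responsible for separation is carried by the factor $\left|z-w\right|^{-(s-1)}$.

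Next I would verify that $\Omega$ (respectively $\Omega^R$) is a bona fide CPD weight function with two-sided bounds on $A\times A$. Here the hypothesis that $A$ lie in the \emph{interior} of $\Hplus$ is essential. The diagonal value is
\begin{equation*}
\Omega(w,w)=\frac{\gammafcn((s-1)/2)}{\sqrt{\pi}\,\gammafcn(s/2)}\,\frac{1}{\left|w-w_*\right|},\qquad \left|w-w_*\right|=2\,\re w,
\end{equation*}
so $\Omega$ could degenerate only as $\re w\to 0$; since $\re w\geq\min_{w\in A}\re w>0$ on the compact set $A$, this never happens. Being continuous and strictly positive on the compact set $A\times A$, $\Omega$ therefore attains a positive minimum and a finite maximum there, and the same holds for $\Omega^R$ for each fixed $R>0$ (for which $\re\!\left[2R+z-w_*\right]=2R+\re z+\re w>0$ keeps the weight bounded away from $0$ and $\infty$). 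These two-sided bounds are what let a separation estimate for the unweighted Riesz kernel be transferred to $\mathcal{K}$, with the constant absorbing the bounds on the weight.

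Finally I would invoke Theorem~4 of \cite{borodachov/hardin/saff:weighted}, whose hypotheses — $A$ compact with $\mathcal{H}_\alpha(A)>0$ together with a CPD weight on $A\times A$ — are now in place, the weighted Riesz exponent being $s-1$. This yields $\delta(Z_N^*)\geq c\,N^{-1/\alpha}$ in the strict regime and the logarithmically corrected bound $\delta(Z_N^*)\geq c\,(N\log N)^{-1/\alpha}$ at the critical exponent, with $c=c_s(A,\Omega,\alpha)$ of the announced form. I expect the only genuine care to lie in this last step: one must compare the \emph{singular} exponent of $\mathcal{K}$, namely $s-1$ rather than $s$, with the regularity exponent $\alpha$ (so that the critical case producing the $\log N$ factor is governed by this exponent), and one must keep the dependence of $c_s$ on $\Omega$ confined to its uniform positivity and finiteness on $A\times A$, which is precisely what $A\subset\operatorname{int}\Hplus$ guarantees.
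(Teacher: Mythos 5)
Your route is the paper's route: the paper offers no argument for this proposition beyond the single sentence that it is a corollary of Theorem~4 of \cite{borodachov/hardin/saff:weighted}, so the content you supply --- writing $\mathcal{K}_s$ and $\mathcal{K}_s^{(R)}$ in the form $\Omega(z,w)\left|z-w\right|^{1-s}$ via \eqref{K.s.3rd} and checking that $\Omega$ and $\Omega^R$ are continuous and bounded away from $0$ and $\infty$ on $A\times A$ because $A$ is compactly contained in the open half-plane (so that $\left|w-w_*\right|=2\re w$ is bounded below) --- is exactly the verification that the citation presupposes, and it is carried out correctly.

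The one point you raise but do not resolve is, however, precisely where your write-up and the literal statement part company. Since the singular exponent of $\mathcal{K}$ is $s-1$, the separation theorem for weighted Riesz kernels applied to the $(s-1)$-kernel on a set with $\mathcal{H}_\alpha(A)>0$ yields $\delta(Z_N^*)\geq c\,N^{-1/\alpha}$ when $s-1>\alpha$ and the $\log$-corrected bound when $s-1=\alpha$; it yields nothing in the range $\alpha\leq s<1+\alpha$, and indeed for $s<1$ the kernel $\mathcal{K}_s$ is bounded and continuous on $A\times A$, so no bound of order $N^{-1/\alpha}$ can be extracted from a weighted-Riesz argument there. You should therefore state explicitly that what your argument proves is the proposition with the case distinction $s-1>\alpha$ versus $s-1=\alpha$, i.e., with the comparison made against the weighted-Riesz exponent $s-1$ --- consistent with how Propositions \ref{prop:hyper-singular} and \ref{prop:hyper-singular.boundary} index the hypersingular regime by $s-1$ relative to $d$ --- rather than leaving the threshold ``$s\geq\alpha$'' of the statement unexamined as merely a matter of ``care''. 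As written, your proof does not establish the claim for $\alpha\leq s<1+\alpha$, and no proof along these lines could.
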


\subsection{The hypersingular case for the kernel $\mathcal{K}_s^\infty$} \label{hypersing.K.infty}

Suppose $s\geq1+\dim A$. The kernel $\mathcal{K}_s^\infty$ can be written as $\mathcal{K}_s^\infty(z,w)=\Omega^\infty(z,w) |z-w|^{1-s}$, where the CPD weight function $\Omega^\infty(z,w)=\gammafcn((s-1)/2)/[\sqrt{\pi}\gammafcn(s/2)]$ is a positive constant. Thus, we can apply the theory developed in \cite{borodachov/hardin/saff:weighted} to obtain

\begin{prop} \label{prop:hyper-singular.Kinfty}
Let $d=1$ or $d=2$. Suppose $A$ is a compact $d$-rectifiable set contained in the interior of $\Hplus$ with positive $d$-dimensional Hausdorff measure $\mathcal{H}_d(A)$. Let $s>1+\dim A$. For $N\geq2$, let $Z_N^*$ be a minimal $\mathcal{K}_s^\infty$-energy configuration of $N$ points $\{z_1^{s,N},\dots,z_N^{s,N}\}\subset A$. Then the sequence $Z_N^*$, $N\geq2$, is asymptotically uniformly distributed with respect to $\mathcal{H}_d$; that is, 
\begin{equation*}
\frac{1}{N}\sum_{k=1}^N \delta_{z_k^{s,N}} \weakstarto \frac{\mathcal{H}_d\big|_A}{\mathcal{H}_d(A)}, \qquad \text{as $N\to\infty$.}
\end{equation*}
Moreover, the minimal $N$-point $\mathcal{K}_s$-energy satisfies
\begin{equation} \label{weighted.energy.Kinfty}
\lim_{N\to\infty} \frac{\mathcal{E}_{\mathcal{K}_s^\infty}(A,N)}{N^{1+(s-1)/d}} = \frac{C_{s-1,d}^\infty}{\left[ \mathcal{H}_d(A) \right]^{(s-1)/d}},
\end{equation}
where $C_{s-1,d}^\infty$ is a positive constant which does not depend on $A$ and $N$. In fact, $C_{s-1,d}^\infty=C_{s-1,d}\gammafcn((s-1)/2)/[\sqrt{\pi}\gammafcn(s/2)]$ and $C_{s-1,d}$ is the same constant as in \eqref{Csd}.
\end{prop}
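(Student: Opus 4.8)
The plan is to reduce Proposition \ref{prop:hyper-singular.Kinfty} to a direct application of the weighted minimal energy theory of \cite{borodachov/hardin/saff:weighted}, exactly as Proposition \ref{prop:hyper-singular} was obtained. The decisive observation, already recorded in the statement, is that the infinity kernel factors as $\mathcal{K}_s^\infty(z,w)=\Omega^\infty(z,w)|z-w|^{1-s}$ with $\Omega^\infty(z,w)=\gammafcn((s-1)/2)/[\sqrt{\pi}\gammafcn(s/2)]$ a \emph{constant}, positive since $s>1$. First I would verify that this constant weight qualifies as a CPD weight function on $A\times A$ in the sense of \cite{borodachov/hardin/saff:weighted}: continuity and strict positivity are immediate for a positive constant, and the continuous-positive-definiteness (or whatever regularity the CPD condition demands in that reference) is trivially inherited because a constant weight imposes no oscillation. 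Here the Riesz parameter relevant to the plane $A\subset\Cset\cong\Rset^2$ is $s'\DEF s-1$, so the hypothesis $s>1+\dim A$ becomes $s'>\dim A=d$, which is precisely the hypersingular regime $s'>d$ handled by Theorem~2 of \cite{borodachov/hardin/saff:weighted}.

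Next I would invoke Theorem~2 of \cite{borodachov/hardin/saff:weighted} for the $d$-rectifiable compact set $A$ with $\mathcal{H}_d(A)>0$, the Riesz exponent $s'=s-1>d$, and the weight $\Omega^\infty$. That theorem yields the weak-star limit of the empirical measures toward the weighted Hausdorff measure $\mathcal{H}_d^{s-1,\Omega^\infty}/\mathcal{H}_d^{s-1,\Omega^\infty}(A)$ and the energy asymptotic with the universal constant $C_{s-1,d}$. Because $\Omega^\infty$ is constant, the weighted Hausdorff measure collapses to ordinary normalized Hausdorff measure: in the defining integral
\begin{equation*}
\mathcal{H}_d^{s-1,\Omega^\infty}(B)=\int_{B\cap A}\left[\Omega^\infty(w,w)\right]^{-d/(s-1)}\dd\mathcal{H}_d(w),
\end{equation*}
the bracketed factor is a constant pulled outside the integral, so it cancels in the normalized ratio and the limit distribution is exactly $\mathcal{H}_d\big|_A/\mathcal{H}_d(A)$. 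This establishes the equidistribution assertion.

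For the energy limit \eqref{weighted.energy.Kinfty}, I would specialize the weighted energy formula from Theorem~2 of \cite{borodachov/hardin/saff:weighted} to the constant weight. The general formula gives the limit in terms of $C_{s-1,d}$ and the weighted measure $\mathcal{H}_d^{s-1,\Omega^\infty}(A)$; factoring out the constant $\Omega^\infty$ from that quantity converts $\mathcal{H}_d^{s-1,\Omega^\infty}(A)$ into $[\Omega^\infty]^{-d/(s-1)}\mathcal{H}_d(A)$, and raising to the power $(s-1)/d$ restores a single factor of $\Omega^\infty$. Tracking this bookkeeping yields $C_{s-1,d}^\infty=C_{s-1,d}\,\gammafcn((s-1)/2)/[\sqrt{\pi}\gammafcn(s/2)]$ with $C_{s-1,d}$ the unweighted unit-cube constant of \eqref{Csd}, which is the claimed identity. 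The only genuine point requiring care—hence the main obstacle—is confirming that the constant weight satisfies all hypotheses of the cited theorem (in particular any smoothness or non-degeneracy condition on $\Omega$ that is nontrivial for general weights), and that for $d=1$ no extra rectifiability assumption beyond $d$-rectifiability is needed when $s'>d$ strictly; the boundary case $s'=d$ is deliberately excluded here, so the $C^1$-curve hypothesis of Proposition \ref{prop:hyper-singular.boundary} does not enter.
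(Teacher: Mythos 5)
Your proposal is correct and follows essentially the same route as the paper, which likewise obtains this proposition by noting that $\mathcal{K}_s^\infty$ has the constant CPD weight $\Omega^\infty(z,w)=\gammafcn((s-1)/2)/[\sqrt{\pi}\,\gammafcn(s/2)]$ and then applying Theorem~2 of \cite{borodachov/hardin/saff:weighted} with Riesz exponent $s-1>d$, so that the weighted Hausdorff measure reduces to $\mathcal{H}_d|_A/\mathcal{H}_d(A)$. Your bookkeeping for $C_{s-1,d}^\infty=C_{s-1,d}\,\gammafcn((s-1)/2)/[\sqrt{\pi}\,\gammafcn(s/2)]$ via $\mathcal{H}_d^{s-1,\Omega^\infty}(A)=[\Omega^\infty]^{-d/(s-1)}\mathcal{H}_d(A)$ is exactly the computation the paper's statement encodes.
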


\begin{rmk}
The first part of Proposition \ref{prop:hyper-singular.Kinfty} holds for the boundary case $s=1+d$, $d=\dim A$, as well. (In the case $d=1$ it is also required that $A$ is contained in a $C^1$-curve.) The minimal $N$-point $\mathcal{K}_s$-energy satisfies
\begin{equation} \label{weighted.energy.boundary.case.K.infty}
\lim_{N\to\infty} \frac{\mathcal{E}_{\mathcal{K}_{d+1}}(A,N)}{N^2\log N} = \frac{\beta_d}{\mathcal{H}_d^{d}(A)},
\end{equation}
where $\beta_d=\pi^{d/2}/\gammafcn(1+d/2)$ is the volume of the unit ball in $\Rset^d$.
\end{rmk}

We remark that a separation result like Proposition \ref{prop:sep} can also be stated for $\mathcal{K}=\mathcal{K}_s^\infty$.

\appendix

\section{Convexity of the $\mathcal{K}_{s}$-kernel on vertical line-segments}
\label{appdx}

The parameter $v$ is fixed. The second derivative of \eqref{K.s.vertical-line} with respect to $y$ is  \begin{equation*}
\begin{split}
&s^{-1} \left( 4 R^{2} + \Delta^{2} \right)^{4+s/2} \frac{\dd^{2}}{\dd y^{2}} \mathcal{K}_{s}(R+\ii y,R+\ii v) \\
&\phantom{=}= - \left( 4 R^{2} + \Delta^{2} \right)^{2} \left[ 4 R^{2} - \left( 1 + s \right) \Delta^{2} \right] \Hypergeom{2}{1}{1/2,s/2}{1}{\frac{4 R^{2}}{4 R^{2} + \Delta^{2}}} \\
&\phantom{=\pm}- 2 R^{2} \left( 4 R^{2} + \Delta^{2} \right) \left[ 4 R^{2} - \left( 3 + 2 s \right) \Delta^{2} \right] \Hypergeom{2}{1}{3/2,1+s/2}{2}{\frac{4 R^{2}}{4 R^{2} + \Delta^{2}}} \\
&\phantom{=\pm}+ 6 R^{4} \left( 2 + s \right) \Delta^{2} \Hypergeom{2}{1}{5/2,2+s/2}{3}{\frac{4 R^{2}}{4 R^{2} + \Delta^{2}}},
\end{split}
\end{equation*}
where $\Delta$ denotes the difference $(y-v)$. Applying to each hypergeometric function the linear transformation \cite[15.3.3]{abramowitz/stegun:70} and simplifying we get
\begin{equation*}
\begin{split}
&s^{-1} \left( 4 R^{2} + \Delta^{2} \right)^{5/2} \left|\Delta\right|^{1+s} \frac{\dd^{2}}{\dd y^{2}} \mathcal{K}_{s}(R+\ii y,R+\ii v) \\
&\phantom{=}= \left( 1 + s \right) \Delta^{4} \Hypergeom{2}{1}{1/2,1-s/2}{1}{\frac{4 R^{2}}{4 R^{2} + \Delta^{2}}} \\
&\phantom{=\pm}+ 2 R^{2} \left[ 4 s R^{2} + \left( 1 + 2s \right) \Delta^{2} \right] \Hypergeom{2}{1}{1/2,1-s/2}{2}{\frac{4 R^{2}}{4 R^{2} + \Delta^{2}}},
\end{split}
\end{equation*}
which implies $\dd^{2} [\mathcal{K}_{s}(\gamma(y),\gamma(v))]/\dd y^{2} > 0$.
\providecommand{\bysame}{\leavevmode\hbox to3em{\hrulefill}\thinspace}
\providecommand{\MR}{\relax\ifhmode\unskip\space\fi MR }
% \MRhref is called by the amsart/book/proc definition of \MR.
\providecommand{\MRhref}[2]{%
  \href{http://www.ams.org/mathscinet-getitem?mr=#1}{#2}
}
\providecommand{\href}[2]{#2}

\end{document}